\newtheorem{lemma}{Lemma}
\newtheorem{theorem}{Theorem}
\newtheorem{definition}{Definition}
\newcommand{\ind}[1]{\mathds{1}_{\left\lbrace #1 \right\rbrace}}
\newcommand{\bs}{\boldsymbol}
\newcommand{\ds}{\displaystyle}
\newcommand{\pr}[1]{\mathrm{Pr} \left(#1\right)}
\newcommand{\SNR}{\mathrm{SNR}}
\newcommand{\INR}{\mathrm{INR}}
\newcommand{\Cldic}{\mathcal{C}_{\mathrm{LDIC}}}
\newcommand{\Nldic}{\mathcal{N}_{\mathrm{LDIC}}}
\newcommand{\Cldicfb}{\mathcal{C}_{\mathrm{LDIC/FB}}}
\newcommand{\Nldicfb}{\mathcal{N}_{\mathrm{LDIC/FB}}}
\newcommand{\Bldic}{\mathcal{B}_{\mathrm{LDIC}}}
\newcommand{\Bldicfb}{\mathcal{B}_{\mathrm{LDIC/FB}}}
\newcommand{\Cgic}{\mathcal{C}_{\mathrm{GIC}}}
\newcommand{\Ngic}{\mathcal{N}_{\mathrm{GIC}}}
\newcommand{\Cgicfb}{\mathcal{C}_{\mathrm{GIC/FB}}}
\newcommand{\Ngicfb}{\mathcal{N}_{\mathrm{GIC/FB}}}
\newcommand{\underBgic}{\underline{\mathcal{B}}_{\mathrm{GIC}}}
\newcommand{\overBgic}{\overline{\mathcal{B}}_{\mathrm{GIC}}}
\newcommand{\Bgicfb}{\mathcal{B}_{\mathrm{GIC/FB}}}
\newcommand{\underR}{\underline{\mathcal{R}}}
\newcommand{\overR}{\overline{\mathcal{R}}}
\newcommand{\underRfb}{\underline{\mathcal{R}}_{\mathrm{FB}}}
\newcommand{\overRfb}{\overline{\mathcal{R}}_{\mathrm{FB}}}
\newcommand{\GameNF}{\mathcal{G} = \left(\mathcal{K}, \left\lbrace\mathcal{A}_k \right\rbrace_{k \in \mathcal{K}},\left\lbrace
u_{k}\right\rbrace_{ k \in \mathcal{K}}\right)}
\newcommand{\gameNF}{\mathcal{G}}
\begin{document}
\title{Perfect Output Feedback in the \\ Two-User Decentralized Interference Channel}
\author{Samir M. Perlaza, 
Ravi Tandon, 
H. Vincent Poor, 
and Zhu Han
\thanks{Samir M. Perlaza (samir.perlaza@inria.fr) is  with the Institut National de Recherche en Informatique et en Automatique (INRIA), Lyon, France. He is also with the Department of Electrical Engineering at Princeton University, Princeton, NJ. }
\thanks{Ravi Tandon (tandonr@vt.edu) is with the Department of Computer Science and the Discovery Analytics Center at Virginia Tech, Blacksburg, VA.}
\thanks{H. Vincent Poor (poor@princeton.edu) is with the Department of Electrical Engineering at Princeton University, Princeton, NJ.}
\thanks{Zhu Han (zhan2@uh.edu) is with the Department of Electrical and Computer Engineering at University of Houston, Houston, TX.}
\thanks{This research was supported in part by the Army Research Office under MURI Grant W911NF-11-1-0036; the US National Science Foundation under grants CCF-1422090, CCF-1456921, CNS-1443917, CNS-1265268,  CNS-0953377, ECCS-1405121, ECCS-1405121 and ECCS-1343210; the China National Science Foundation grant NSFC-61428101 and the European Commission under Individual Fellowship Marie Sk\l{}odowska-Curie Action No. 659316 (CYBERNETS).
}
\thanks{Part of this work was presented at the 50th Allerton Conference on Communication, Control, and Computing (Allerton), Monticello, IL, October, 2012.}
}

\maketitle
\begin{abstract} In this paper, the $\eta$-Nash equilibrium ($\eta$-NE) region of the two-user Gaussian interference channel (IC) with perfect output feedback is approximated to within $1$ bit/s/Hz and $\eta$ arbitrarily close to $1$  bit/s/Hz. The relevance of the $\eta$-NE region is that it provides the set of rate-pairs that are achievable and stable in the IC when both transmitter-receiver pairs autonomously tune their own transmit-receive configurations seeking an $\eta$-optimal individual transmission rate. 
Therefore, any rate tuple outside the  $\eta$-NE region is not stable as there always exists one link able to increase by at least $\eta$ bits/s/Hz its own transmission rate by updating its own transmit-receive configuration.   
The main insights that arise from this work are: $(i)$ The $\eta$-NE region achieved with feedback is larger than or equal to the $\eta$-NE region without feedback. More importantly, for each rate pair achievable at an $\eta$-NE without feedback, there exists at least one rate pair achievable at an  $\eta$-NE with feedback that is weakly Pareto superior. $(ii)$ There always exists an $\eta$-NE transmit-receive configuration that achieves a rate pair that is at most $1$ bit/s/Hz per user away from the outer bound of the capacity region.
\end{abstract}
\begin{IEEEkeywords}
Interference channels, feedback communications, Gaussian channels, wireless networks, distributed information systems.
\end{IEEEkeywords}

\section{Introduction}\label{SecIntroduction}

\IEEEPARstart{I}{n} point-to-point communications, perfect output feedback does not increase the capacity either in the discrete or the continuous memoryless channel \cite{Shannon-IRETIT-1956, Kadota-TIT-1971, Dobrushin-TPA-1958}. 
At most, feedback increases the capacity by a bounded number of bits per channel use in channels with memory. This is the case for colored additive Gaussian noise \cite{Cover-TIT-1989, Butman-TIT-1969} and stationary first-order moving average Gaussian noise \cite{Kim-TIT-2006}. 
The same can be said for some multiuser channels in which the capacity region is broadened only by a limited number of bits per channel use. This is the case for the memoryless Gaussian multiple access channel (MAC) \cite{Cover-TIT-1981, Gaarder-TIT-1975, Ozarow-TIT-1984, Thomas-TIT-1987}. 
In the discrete memoryless broadcast channel (BC), there exists evidence that feedback increases the capacity region \cite{ Kramer-TIT-2003, Dueck-IC-1980,Shyevitz-TIT-2013}. However, in particular cases such as the physically degraded BC, the opposite has been formally proven \cite{El-Gamal-TIT-1978}.  

Feedback substantially enlarges the capacity region of the two-user memoryless Gaussian interference channel (IC) \cite{Suh-TIT-2011}. The same effect is observed in some special cases with a larger number of users, e.g., in the symmetric $K$-user cyclic $Z$-interference channel \cite{Tandon-TIT-2013} and the fully connected $K$-user IC \cite{Mohaher-TIT-2013}. The two-user linear deterministic IC with partial feedback has been considered in \cite{Quintero-TR-2015, SyQuoc-TIT-2013} and \cite{Quintero-ITW-2015}.
In the particular case of the two-user memoryless Gaussian IC, in the very strong interference regime, the gain provided by feedback can be arbitrarily large when the interference to noise ratios (INRs) grow to infinity. One of the reasons why feedback provides such a surprising benefit relies on the fact that it creates an alternative path to the existing point-to-point paths. For instance, in the two-user IC, feedback creates a path from transmitter $1$ (resp. transmitter $2$) to receiver $1$ (resp. receiver $2$) in which signals that are received at receiver $2$ (resp. receiver $1$) are fed back to transmitter $2$ (resp. transmitter $1$) which decodes the messages and re-transmits them to receiver $1$ (resp. receiver $2$). This implies a type of cooperation in which transmitters engage each other to transmit each other's messages.

In decentralized multiuser channels, the benefits of feedback are less well understood and the existing results from the centralized perspective do not apply immediately. In a decentralized network, each transmitter-receiver link acts autonomously and tunes its individual transmit-receive configuration aiming to optimize a given performance metric. 
Therefore, in decentralized networks, a competitive scenario arises in which the individual improvement of one link often implies the detriment of others due to the mutual interference. From this point of view, in decentralized networks, the notion of capacity region is shifted to the notion of equilibrium region. Such a region varies depending on the associated notion of equilibrium, e.g., Nash equilibrium (NE) \cite{Nash-PNAS-1950},  $\eta$-Nash equilibrium ($\eta$-NE) \cite{Nisan-Book-2007}, correlated equilibrium \cite{Aumann-JME-1974}, satisfaction equilibrium \cite{Perlaza-JSTSP-2012}, etc. In particular, when each individual link aims to selfishly optimize its individual transmission rate by tuning its transmit-receive configuration, the equilibrium region is a subregion of the capacity region and it must be understood in terms of the $\eta$-NE. Once an $\eta$-NE is achieved, none of the links has a particular interest in unilaterally deviating from the actual transmit-receive configuration as any deviation would bring an improvement of at most $\eta$ bits/s/Hz. When, $\eta = 0$, an $\eta$-NE corresponds to an NE. Essentially, any deviation from an NE implies no gain or even a loss in the individual rate of the deviating transmitter. Therefore, any rate tuple outside the NE-region is not stable as there always exists at least one link that is able to increase its own transmission rate by updating its own transmit-receive configuration.   

An approximate characterization of the  $\eta$-NE region of the decentralized Gaussian IC without feedback is presented in \cite{Berry-TIT-2011}, with $\eta \geqslant 0$ arbitrarily small. This characterization implies two important points. First, in all the interference regimes, the $\eta$-NE region is non-empty, which verifies some of the existing results in \cite{Chung-ISIT-2003, Rose-ICC-2011} and \cite{Yu-ISIT-2000}. 
Second, the individual rates achievable at an $\eta$-NE are both lower and upper bounded. The lower bound corresponds to the rate achieved by treating interference as noise, whereas the upper bound requires partial decoding of the interference. Interestingly, in some cases of the strong and very strong interference regimes, it is shown that the $\eta$-NE region equals the capacity region. Conversely, in all the other cases, the $\eta$-NE region is a subregion of the capacity region and often, it does not contain all the strictly Pareto optimal rate pairs, e.g., the rate pairs on the boundary of the sum-capacity. 

In the case of the IC with feedback, conventional wisdom leads to the idea that the $\eta$-NE region must not be different from the $\eta$-NE region of the IC without feedback. This follows from the fact that feedback can be seen as an altruistic action in which the benefit is not for the transmitter-receiver pair that implements feedback but rather for the other pair \cite{Perlaza-ALLERTON-2012}. Note for instance that, the alternative path from transmitter $i$ to receiver $i$ mentioned above appears thanks to the feedback from receiver $j$ to transmitter $j$. Note also that when a transmitter-receiver pair uses feedback to cancel the interference of a bit received during channel use $t-1$, the corresponding interfering bit must be re-transmitted during channel use $t$ such that it can be reliably decoded in order to effectively cancel its effect during channel use $t-1$. This implies that, subject to a feedback delay, such a transmitter-receiver pair  can opt for not transmitting the information bit during channel use $t-1$ and instead, transmitting it at channel use $t$ at the place of the interfering bit. This reasoning shows that for every individual rate that is achievable with feedback, there always exists an alternative transmit-receive configuration in which the previous channel outputs obtained by feedback are not used and such a configuration achieves the same transmission rate. Therefore, intuitively,  feedback should not be useful in decentralized channels given that transmitter-receiver pairs whose individual interest is their own transmission rate do not have a particular interest in using it.

However, this paper shows the opposite.
Even in the strictly competitive scenario in which both links are selfish, the use of feedback can be shown to be individually advantageous  and thus, transmitter-receiver pairs might opt to use it in some cases. 
This is basically because, when one transmitter-receiver pair uses feedback, it motivates the others to use it, which leads to a mutually beneficial situation and thus, to an equilibrium.
This observation leads to two of the most important conclusions of this work: $(i)$ The $\eta$-NE region achieved with feedback is larger than or equal to the $\eta$-NE region without feedback. More importantly, for each rate pair achievable at an $\eta$-NE without feedback, there exists at least one rate pair achievable at an $\eta$-NE with feedback that is, at least, weakly Pareto superior; and $(ii)$ There always exists an $\eta$-NE transmit-receive configuration pair that achieves a rate pair that is at most $1$ bit/s/Hz per user away from the outer bound of the capacity region even when the network is fully decentralized.

The remainder of the paper is structured as follows. In Sec. \ref{SecProblemFormulation}, the decentralized IC with feedback is formally introduced and its equivalent game theoretic model is presented. In Sec. \ref{SecLDICFB}, the $\eta$-NE region of the linear deterministic IC (LD-IC) with feedback (LD-IC-FB) is fully characterized for any $\eta \geqslant 0$. In Sec. \ref{SecGICFB}, using the intuition obtained from the LD-IC-FB, the $\eta$-NE region of the GIC with feedback is approximated to within one bit/s/Hz and $\eta \geqslant 1$ bit/s/Hz. This approximation inherits the one-bit precision of the approximation of the capacity region of this channel \cite{Suh-TIT-2011}. This work is concluded by Sec. \ref{SecConclusions}.
 
\section{Problem Formulation}\label{SecProblemFormulation}
\subsection{Channel Model}
Consider the fully decentralized two-user interference channel with perfect channel-output feedback in Fig. \ref{FigGIC}. Transmitter $i$, with $i \in \lbrace 1, 2 \rbrace$, communicates with receiver $i$ subject to the interference produced by transmitter $j \in \lbrace 1, 2\rbrace\setminus \lbrace i \rbrace$. 
At each block, transmitter $i$ sends $M_i$ information bits $b_{i,1}, \ldots, b_{i,M_i}$ by transmitting the codeword $\bs{X}_i = ( X_{i,1}, \ldots, X_{i,N_i})^{\sf T} \in \mathcal{X}_i^{N_i}$, where $\mathcal{X}_i^{N_i}$ denotes the codebook of transmitter $i$ and $N_i$ denotes the corresponding block-length in channel uses. All information bits are independent and identically distributed (i.i.d.) following a uniform probability distribution.
 \begin{figure}[t]
 \centerline{\epsfig{figure=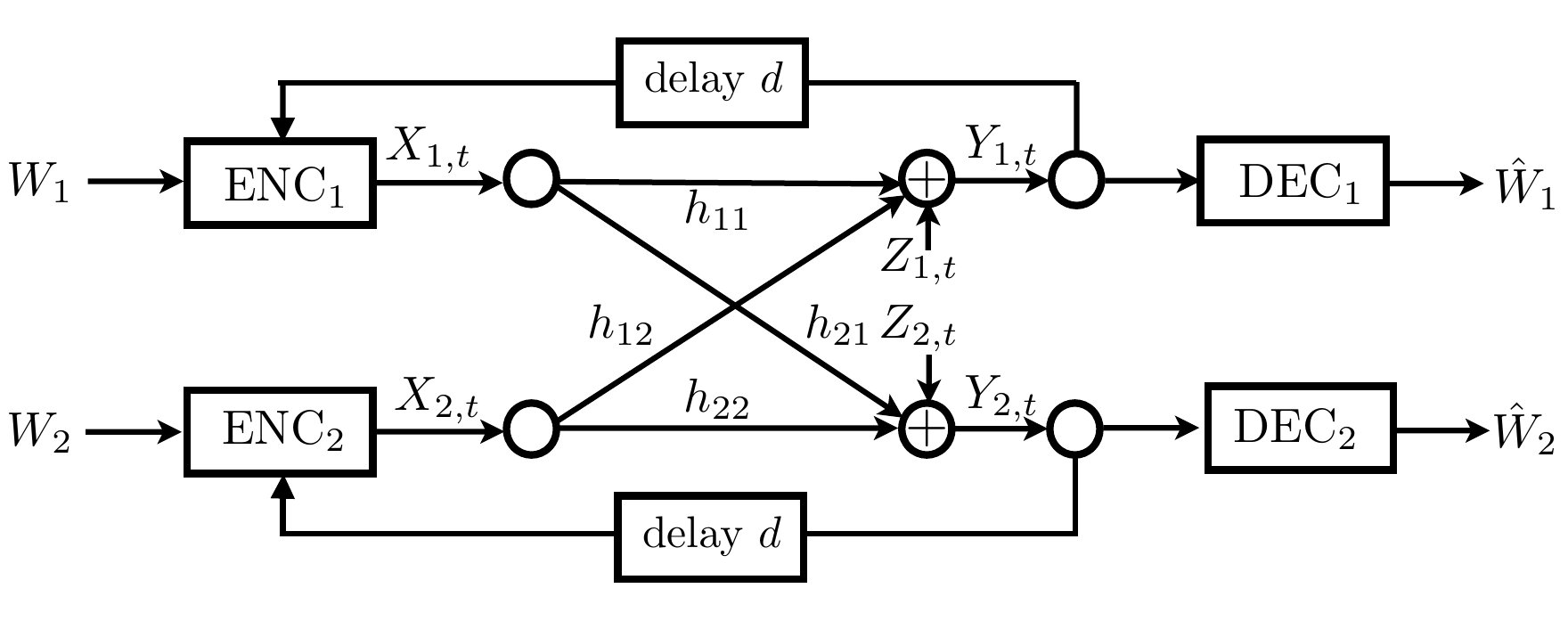,width=.5\textwidth}}
  \caption{Two-user Gaussian interference channel with perfect channel-output feedback at channel use $t$.}
  \label{FigGIC}
\end{figure}
The channel output at receiver $i$ is denoted by  $\bs{Y}_i = (Y_{i,1}, \ldots, Y_{i,N_i} )^{\sf T}$ and during channel use $t$, it holds that
\begin{eqnarray}\label{EqGICsignals}
Y_{i,t} & = & h_{ii} X_{i,t} + h_{ij} X_{j,t} + Z_{i,t},
\end{eqnarray}
where $\bs{Z}_{i} = (Z_{i,1}, \ldots, Z_{i,N_i})^{\sf T}$ represents the noise observed at receiver $i$. At each channel use $ t \in \lbrace 1, \ldots, N_i\rbrace$, the noise terms $Z_{i,t}$ are independent real Gaussian random variables with zero means and unit variances. 

The channel coefficient from transmitter $j$ to receiver $i$ is denoted by $h_{ij}$ and it is assumed to be a real. 
The channel-input vector $\bs{X}_i =  (X_{i,1}, \ldots, X_{i,N_i} )^{\sf T}$ has zero mean entries and it is subject to the variance constraint 
\begin{equation}\label{EqPowerConstraint}
\frac{1}{N_i}\sum_{t = 1}^{N_i} \left| X_{i,t}\right|^2 \leqslant P_i,
\end{equation}
with $P_i = 1$ the average transmit power of transmitter $i$. 
 
A perfect feedback link from receiver $i$ to transmitter $i$, with finite delay $d > 0 $ channel uses,  allows at the end of each channel use $t > d$, the observation of the channel output $Y_{i,t-d}$ at transmitter $i$. In the following, without loss of generality, the feedback delay is assumed to be $d =1$ channel use. 

The encoder of transmitter $i$ can be modeled as a set of deterministic mappings $f_i^{(1)}, \ldots, f_i^{(N_i)}$, with $f_i^{(1)}: \lbrace 0,1 \rbrace^{M_i} \times \mathds{N} \rightarrow \mathcal{X}_i$ and $\forall t \in \lbrace 2, \ldots, N_i\rbrace$, $f_i^{(t)}: \lbrace 0,1 \rbrace^{M_i} \times \mathds{N} \times \mathds{R}^{t-1} \rightarrow \mathcal{X}_i$, such that 
\begin{IEEEeqnarray}{lcl}\label{EqEnconderf}
X_{i,1} &=& f_i^{(1)}\big(b_{i,1},\ldots,b_{i,M_i}, \Omega_i \big) \mbox{ and }\\
X_{i,t} &=& f_i^{(t)}\big(b_{i,1},\ldots,b_{i,M_i}, \Omega_i, Y_{i,1}, \ldots, Y_{i,t-1} \big),
\end{IEEEeqnarray}
where $\Omega_i$ is an additional index randomly generated. The index $\Omega_i$ is assumed to be known by both transmitter $i$ and receiver $i$, while totally unknown by transmitter $j$ and receiver $j$.

At the end of the transmission (of B blocks), receiver $i$ uses the sequence $Y_{i,1}, \ldots, Y_{i, B \, N_i}$ to  generate the estimates $\hat{b}_{i,1}, \ldots, \hat{b}_{i, B \, M_i}$ of the transmitted bits via a decoding function $g_i: \mathds{R}^{B \, N_i} \rightarrow \lbrace 0,1 \rbrace^{B \, M_i}$, such that $(\hat{b}_{i,1}, \ldots, \hat{b}_{i, B \, M_i})^{\sf{T}} = g(Y_{i,1}, \ldots, Y_{i, B \, N_i})$.   
The average bit error probability at receiver $i$, denoted by $p_i$, is calculated as follows:
\begin{equation}
p_i = \frac{1}{B M_i} \ds\sum_{\ell = 1}^{B M_i} \ind{\hat{b}_{i,\ell} \neq {b}_{i,\ell} }.
\end{equation}

A rate pair $(R_1, R_2) \in \mathds{R}_{+}^{2}$ is said to be achievable if it satisfies the following definition.
\begin{definition}[Achievable Rate Pairs]\label{DefAchievableRatePairs}\empty{
The rate pair $(R_1, R_2) \in \mathds{R}_{+}^{2}$ is achievable if there exists at least one pair of codebooks $\mathcal{X}_1^{N_1}$ and $\mathcal{X}_2^{N_2}$ with codewords of length $N_1$ and $N_2$, respectively,  with the corresponding encoding functions $f_1^{(1)}, \ldots, f_1^{(N_1)}$ and $f_2^{(1)}, \ldots, f_2^{(N_2)}$ and decoding functions $g_1$ and $g_2$, such that the average bit error probability can be made arbitrarily small by letting the block lengths $N_1$ and $N_2$ grow to infinity.
}
\end{definition}

The Gaussian IC with feedback in Fig. \ref{FigGIC} can be fully described by four parameters: (a) the signal to noise ratios $\SNR_i = h_{ii}^2$ and (b) the interference to noise ratios $\INR_{ij} = h_{ij}^2$. 
The aim of transmitter $i$ is to autonomously choose its transmit-receive configuration $s_i$ in order to maximize its achievable rate $R_i$. More specifically, the transmit-receive configuration $s_i$ can be described in terms of the number of information bits per block $M_i$, the block-length $N_i$, the codebook $\mathcal{X}_i^{N_i}$, the encoding functions $f_i^{(1)}, \ldots, f_i^{(N_i)}$, etc. 
Note that the rate achieved by transmitter-receiver pair $i$ depends on both configurations $s_1$ and $s_2$ due to the mutual interference naturally arising in the interference channel. This reveals the competitive interaction between both links in the decentralized interference channel. The following section models this interaction using tools from game theory.

\subsection{Game Formulation}

The competitive interaction of the two transmitter-receiver pairs in the interference channel can be modeled by the following game in normal form:
\begin{equation}\label{EqGame}
\GameNF.
\end{equation}
The set $\mathcal{K} = \lbrace 1, 2 \rbrace$ is the set of players, that is, the set of transmitter-receiver pairs. The sets $\mathcal{A}_1$ and $\mathcal{A}_2$ are the sets of actions of players $1$ and $2$, respectively. An action of a player $i$, which is denoted by $s_i \in \mathcal{A}_i$, is basically its transmit-receive configuration as described above. 
The utility function of player $i$ is $u_i: \mathcal{A}_1 \times \mathcal{A}_2 \rightarrow \mathds{R}_+$ and it is defined  as the achieved rate of transmitter $i$,
\begin{equation}
\label{EqUtility}
\small
u_i(s_1,s_2) = \left\lbrace
\begin{array}{lcl}
R_i(s_1,s_2), & \mbox{if} &  p_i  < \epsilon \\
0, & &  \mbox{otherwise}
\end{array},
\right.
\end{equation}
where $\epsilon > 0$ is an arbitrarily small number and $R_i(s_1,s_2)$ denotes a transmission rate achievable with the configurations $s_1$ and $s_2$, i.e., $p_i< \epsilon$. 
Often, the rate $R_i(s_1,s_2)$ is written as $R_i$ for the sake of simplicity. However, every non-negative achievable rate is associated with a particular transmit-receive configuration pair $(s_1,s_2)$ that achieves it. It is worth noting that there might exist several transmit-receive configurations that achieve the same rate pair $(R_1, R_2)$ and distinction between the different transmit-receive configurations is made only when needed. 

A class of transmit-receive configurations $\bs{s}^* = (s_1^*, s_2^*) \in \mathcal{A}_1 \times \mathcal{A}_2$ that are particularly important in the analysis of this game are referred to as $\eta$-Nash equilibria ($\eta$-NE) and satisfy the following conditions.

\begin{definition}[$\eta$-NE \cite{Nisan-Book-2007}] \label{DefEtaNE} \emph{
In the game $\GameNF$, an action profile  $(s_1^*, s_2^*)$ is an $\eta$-NE if $\forall i \in \mathcal{K}$ and $\forall s_i \in \mathcal{A}_i$, it holds that
\begin{equation}\label{EqNashEquilibrium}
u_i (s_i , s_j^*) \leqslant u_i (s_i^*, s_j^*) + \eta.
\end{equation}
}
\end{definition}

From Def. \ref{DefEtaNE}, it becomes clear that if $(s_1^*, s_2^*)$ is an $\eta$-NE, then none of the transmitters can increase its own transmission rate more than $\eta$ bits/s/Hz by changing its own transmit-receive configuration and keeping the average bit error probability arbitrarily close to zero. Thus, at a given $\eta$-NE, every transmitter achieves a utility (transmission rate) that is $\eta$-close to its maximum achievable rate given the transmit-receive configuration of the other transmitter. Note that if $\eta = 0$, then the classical definition of NE  is obtained \cite{Nash-PNAS-1950}. The relevance of the notion of equilibrium is that at any $\eta$-NE, every transmitter-receiver pair's configuration is $\eta$-optimal with respect to the configuration of the other transmitter-receiver pairs. The following investigates the set of rate pairs that can be achieved at an $\eta$-NE. This set of rate pairs is known as the $\eta$-NE region.

\begin{definition}[$\eta$-NE Region] \label{DefNERegion} \emph{
Let $\eta\geqslant 0$. An achievable rate pair $(R_1,R_2)$ is said to be in the $\eta$-NE region of the game $\GameNF$ if there exists a pair $(s_1^*, s_2^*) \in \mathcal{A}_1 \times \mathcal{A}_2$  that is  an  $\eta$-NE and the following holds:
\begin{eqnarray}
u_1 (s_1^* , s_2^*)  =  R_1 & \mbox{ and } & u_2 (s_1^* , s_2^*)  =  R_2. 
\end{eqnarray}
}
\end{definition}

The following section studies the $\eta$-NE region of the game $\gameNF$ in \eqref{EqGame}, with $\eta \geqslant 0$ arbitrarily small, using a linear deterministic model of decentralized IC.

\section{Linear Deterministic Interference Channel with Feedback}\label{SecLDICFB}

The linear deterministic approximation of the  GIC was introduced in \cite{Avestimehr-ALLERTON-2007}. 
In general, the linear deterministic model deemphasizes the effect of background noise and focuses on the signal interactions. Furthermore, as shown in \cite{Avestimehr-TIT-2011}, it provides valuable insights that can be used to study the corresponding Gaussian model. 

The linear deterministic IC is described by four parameters: $n_{11}, n_{22}, n_{12},$ and $n_{21}$, where $n_{ii}$ captures the signal strength from transmitter $i$ to receiver $i$, and $n_{ij}$ captures the interference strength from transmitter $j$ to receiver $i$. The input-output relation during channel use $t$, with $t \in \lbrace 1, \ldots, \max(N_1,N_2) \rbrace$ is given as follows:
\begin{IEEEeqnarray}{lcl}
\nonumber
\bs{Y}_{1,t}&=& \bs{S}^{q-n_{11}} \bs{X}_{1,t} + \bs{S}^{q-n_{12}} \bs{X}_{2,t}, \mbox{ and }\\
\label{EqLDICsignals}
\bs{Y}_{2,t}&=& \bs{S}^{q-n_{21}} \bs{X}_{1,t} + \bs{S}^{q-n_{22}} \bs{X}_{2,t},
\end{IEEEeqnarray}
where 
\begin{equation}
\label{Eqq}
q=\ds\max \left( n_{11}, n_{22}, n_{12}, n_{21}\right)
\end{equation}
and $\bs{X}_{i,t}$ and $\bs{Y}_{i,t}$ are both $q$-dimensional vectors whose components are binary. Additions and multiplications are defined over the binary field, and $\bs{S}$ is a $q\times q$ lower shift matrix of the form
\begin{align}
\bs{S}=
\left[
  \begin{array}{cccccc}
    0 & 0 &0 & \cdots & 0 \\
    1 & 0 & 0 &\cdots & 0\\
    0 & 1 & 0 &\cdots  & \vdots\\
    \vdots & \ddots & \ddots  & \ddots & 0\\
    0 & \cdots & 0 & 1& 0
  \end{array}
\right].\nonumber
\end{align}
Note that at the end of a transmission of $B$ blocks, i.e., $B\cdot\max\left(N_1, N_2 \right)$ channel uses, the channel input and channel output can be represented by the super-vectors 
\begin{IEEEeqnarray}{lcl}
\label{EqUnderlineX}
\underline{\bs{X}}_i &=& \left( {\bs{X}^{(1)}_i}^{\sf{T}}, \ldots, {\bs{X}^{(B)}_i}^{\sf{T}} \right) ^{\sf{T}}\mbox{ and } \\
\label{EqUnderlineY}
\underline{\bs{Y}}_i &=& \left( {\bs{Y}^{(1)}_i}^{\sf{T}}, \ldots, {\bs{Y}^{(B)}_i}^{\sf{T}} \right)^{\sf{T}},
\end{IEEEeqnarray}
where at each block $b$, it holds that  
\begin{IEEEeqnarray}{lcl}
\label{EqXt}
\bs{X}^{(b)}_i &=&  \left( {\bs{X}_{i,1}^{(b)}}^{\sf{T}}, \ldots, {\bs{X}_{i,N_i}^{(b)}}^{\sf{T}} \right)^{\sf{T}} \mbox{ and }\\
\label{EqYt}
\bs{Y}^{(b)}_i &= & \left( {\bs{Y}_{i,1}^{(b)}}^{\sf{T}}, \ldots, {\bs{Y}_{i,N_i}^{(b)}}^{\sf{T}} \right)^{\sf{T}},
\end{IEEEeqnarray}
with $\bs{X}_{i,t}^{(b)}$ and $\bs{Y}_{i,t}^{(b)}$, two $q$-dimensional binary vectors for all $t \in \lbrace 1, \ldots, \max(N_1,N_2) \rbrace$ and $b \in \lbrace 1, \ldots, B \rbrace$.
The parameters $n_{ii}$ and $n_{ji}$ correspond to $\lfloor \frac{1}{2}\log_{2}(\mbox{SNR}_{i}) \rfloor$  and  $\lfloor \frac{1}{2}\log
(\mbox{INR}_{ji})\rfloor$, respectively, where $\SNR_i$  and $\INR_{ji}$ are the parameters of the GIC in Fig. \ref{FigGIC}. For a detailed discussion about the connections between the LD-IC and the GIC, the reader is referred to \cite{Bresler-ETT-2008}. 

\subsection{Preliminaries and Existing Results}\label{SecPreliminaries}

In the following, some of the existing results used to fully characterize the $\eta$-NE region of the LD-IC-FB are briefly presented. These results are described using the notation adopted in this paper and it might be different from the one used when they were first introduced.

\subsubsection{Capacity of the LD-IC without Feedback}

The capacity region of the two-user LD-IC without feedback is denoted by $\Cldic$ and it is fully characterized by Lemma $4$ in \cite{Bresler-ETT-2008}. 
 
\begin{lemma} [Lemma $4$ in \cite{Bresler-ETT-2008}] \label{LemmaCldic}\emph{
The capacity region $\Cldic$ of the LD-IC without feedback corresponds to the set of non-negative rate pairs $(R_1,R_2)$ satisfying
\begin{eqnarray}
 R_i \; \qquad \leqslant & n_{ii}, \mbox{ with } i \in \lbrace 1, 2 \rbrace,\nonumber\\
 R_1 + R_2 \leqslant  &(n_{11} - n_{12})^+ + \max(n_{22},n_{12}),\nonumber\\
 R_1 + R_2 \leqslant &  (n_{22} - n_{21})^+ + \max(n_{11},n_{21}),\nonumber\\
 R_1 + R_2   \leqslant &  \max\Big(n_{21}, (n_{11} - n_{12})^+ \Big) \nonumber\\
& +  \max\Big(n_{12}, (n_{22} - n_{21})^+ \Big),\label{EqRegionC}\\
  2R_1 + R_2  \leqslant  & \max\big(n_{11} , n_{21}\big) + (n_{11} - n_{12})^+ \nonumber\\
   & + \max\big(n_{12}, (n_{22} - n_{21})^+\big),\nonumber\\
  R_1 + 2R_2  \leqslant &  \max\big(n_{22} , n_{12}\big) + (n_{22} - n_{21})^+ \nonumber\\
 & + \max\big(n_{21}, (n_{11} - n_{12})^+\big).\nonumber
\end{eqnarray}
}
\end{lemma}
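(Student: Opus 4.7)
The statement restates a known characterization of the LD-IC capacity region from Bresler--Tse, so my plan is to sketch the two standard halves: achievability via a deterministic Han--Kobayashi scheme and a converse via genie-aided outer bounds, emphasizing how each of the six inequality families arises.

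For achievability, I would use the fact that in the LD-IC each transmitter's signal vector $\bs{X}_{i,t}\in\{0,1\}^q$ has its top $n_{ij}$ levels causing interference at receiver $j$ and its bottom $n_{ii}-n_{ij}$ levels (when $n_{ii}\geqslant n_{ij}$) being invisible at receiver $j$. I would therefore split $\bs{X}_{i,t}=\bs{U}_{i,t}\oplus\bs{W}_{i,t}$ into a common part $\bs{U}_{i,t}$ placed on the interference-visible levels and a private part $\bs{W}_{i,t}$ placed on the levels below the noise floor of receiver $j$. With i.i.d. uniform Bernoulli bits on each level, standard random coding over the binary field makes all achievable rate constraints at receiver $i$ coincide (after Fourier--Motzkin elimination over the common/private rate split) with the six inequalities stated. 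The different interference regimes (weak, moderate, strong, very strong), captured by the terms $(n_{ii}-n_{ij})^+$ and $\max(n_{ij},(n_{jj}-n_{ji})^+)$, correspond to which parts of the split are actually used and at which receiver each common message is decoded.

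For the converse I would proceed inequality by inequality. The single-user bounds $R_i\leqslant n_{ii}$ are immediate from the point-to-point cut. For a sum-rate bound of the form $R_1+R_2\leqslant (n_{11}-n_{12})^+ + \max(n_{22},n_{12})$, I would use Fano's inequality together with a genie that provides receiver $1$ with $\bs{S}^{q-n_{12}}\bs{X}_{2,t}$ (the interference seen at receiver $1$); this leaves receiver $1$ with access only to the $(n_{11}-n_{12})^+$ private levels of transmitter $1$, while receiver $2$ can still decode anything contained in its own $\max(n_{22},n_{12})$-dimensional output. The symmetric bound is obtained by swapping indices. The third sum-rate bound, $R_1+R_2\leqslant \max(n_{21},(n_{11}-n_{12})^+)+\max(n_{12},(n_{22}-n_{21})^+)$, is derived by the ``cross genie'' that provides each receiver with the part of the other transmitter's codeword not already captured by its received vector. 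The weighted bounds $2R_1+R_2$ and $R_1+2R_2$ follow by applying the same genie idea twice, combining a single-user cut for the heavy user with a sum-rate cut for the opposite direction, and using the entropy identity $H(\bs{Y}_i)\leqslant \max(n_{ii},n_{ij})$ per channel use together with the deterministic relation $H(\bs{Y}_i\mid \bs{X}_i)=H(\bs{S}^{q-n_{ij}}\bs{X}_j)$.

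The routine parts are the Fano-plus-chain-rule manipulations and the coding-theorem steps over the binary field; the delicate point, and what I expect to be the main obstacle, is choosing the genie side information consistently across the five non-trivial bounds so that (i) each bound evaluates exactly to the claimed max/plus expression in all interference regimes without case splits, and (ii) the achievability matches these converse expressions tightly once Fourier--Motzkin elimination of the common/private rate variables is performed. Since Lemma~\ref{LemmaCldic} is lifted verbatim from \cite{Bresler-ETT-2008}, I would ultimately cite that reference for the detailed calculations rather than reproducing them in full.
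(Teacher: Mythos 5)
The paper does not prove this lemma at all: it is quoted verbatim as Lemma 4 of Bresler--Tse \cite{Bresler-ETT-2008} (and noted to be a special case of the deterministic-IC result of El Gamal--Costa \cite{ElGamal-TIT-1982}), and is simply used as an input to the later arguments. Your decision to ultimately cite the reference is therefore the right call, and the overall Han--Kobayashi-plus-genie outline is the correct skeleton of the proof that lives in that reference.

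That said, one piece of your converse sketch as written would not go through. For the bound $R_1+R_2\leqslant (n_{11}-n_{12})^+ + \max(n_{22},n_{12})$ you propose a genie that hands receiver~$1$ the interference $\bs{S}^{q-n_{12}}\bs{X}_{2,t}$ and claim this ``leaves receiver~$1$ with access only to the $(n_{11}-n_{12})^+$ private levels.'' It does the opposite: with the interference known, receiver~$1$ can subtract it and observe all $n_{11}$ levels of $\bs{X}_1$, so the genie enlarges rather than restricts what receiver~$1$ sees, and yields only $R_1\leqslant n_{11}$. The actual Z-channel-type argument assists the \emph{other} receiver (or, equivalently, re-expresses $I(X_1^N;Y_1^N)=H(Y_1^N)-H(\bs{S}^{q-n_{12}}\bs{X}_2^N)$ and combines the negative entropy term with $I(X_2^N;Y_2^N)$ after giving receiver~$2$ side information $\bs{S}^{q-n_{12}}\bs{X}_2^N$); it is precisely the combination of $H(Y_1^N)-H(\bs{S}^{q-n_{12}}\bs{X}_2^N)\leqslant N(n_{11}-n_{12})^+$ with $H(Y_2^N,\bs{S}^{q-n_{12}}\bs{X}_2^N)\leqslant N\max(n_{22},n_{12})$ that produces the stated expression. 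The cross-genie and weighted-bound parts of your sketch are consistent with the standard treatment, but you should fix the direction of this one genie before treating the sketch as a substitute for the citation.
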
 

Note that the capacity region shown in Lemma \ref{LemmaCldic} is a particular case of the capacity region presented in \cite{ElGamal-TIT-1982} that applies to a larger class of deterministic interference channels.

\subsubsection{$\eta$-NE Region of the LD-IC without Feedback}

The $\eta$-NE region of the linear deterministic interference channel without feedback is denoted by $\Nldic$ and it is fully characterized in \cite{Berry-TIT-2011} in terms of the set
\begin{equation}
\Bldic =  \left\lbrace (R_1, R_2) \in \mathds{R}^{2}: 
L_i \leqslant  R_i \leqslant \tilde{U}_i, \forall i \in \lbrace 1, 2 \rbrace \right\rbrace,
\end{equation}
where, $\forall i \in \lbrace 1, 2 \rbrace$,
\begin{eqnarray}
\label{EqLi}
L_i & \stackrel{\mathrm{def}}{=} & \left( n_{ii} - n_{ij} \right)^+,\\
\label{EqUi}
\tilde{U}_i &  \stackrel{\mathrm{def}}{=} & \left\lbrace \begin{array}{lcl}
n_{ii} - \min\big( L_j, n_{ij}\big) & \mbox{ if } & n_{ij} \leqslant n_{ii},\\
\min\Big( \big(n_{ij} - L_j\big)^+, n_{ii} \Big) & \mbox{ if } & n_{ij}> n_{ii}.
\end{array}\right.
\end{eqnarray}

Using this notation, the $\eta$-NE region $\Nldic$ is formalized by the following lemma.

\begin{lemma} [Theorem $1$ in \cite{Berry-TIT-2011}] \label{LemmaNldic} \emph{
The $\eta$-NE region of the two-user LD-IC without feedback $\Nldic$ is
\begin{eqnarray}
\Nldic & = &  \Bldic \cap \Cldic.
\end{eqnarray}
}
\end{lemma}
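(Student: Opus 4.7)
I would establish the two inclusions $\Nldic \subseteq \Bldic \cap \Cldic$ and $\Bldic \cap \Cldic \subseteq \Nldic$ separately, and then invoke the definition of $\eta$-NE region to conclude.

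\emph{Converse direction.} Any $\eta$-NE rate pair must be achievable by Definition \ref{DefAchievableRatePairs}, hence lies in $\Cldic$ by Lemma \ref{LemmaCldic}. The lower bound $R_i \geqslant L_i$ follows from a unilateral-improvement argument: transmitter $i$ can always guarantee a rate of $(n_{ii}-n_{ij})^+$ by transmitting independent uniform bits only on the top $L_i$ signal levels that lie strictly above where the interference from transmitter $j$ lands at receiver $i$. Hence, if $R_i < L_i - \eta$, then player $i$ has a profitable $\eta$-deviation that contradicts \eqref{EqNashEquilibrium}. The upper bound $R_i \leqslant \tilde{U}_i$ is the delicate part. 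Since the same lower-bound argument applied to player $j$ yields $R_j \geqslant L_j$ at any $\eta$-NE, a portion of the interference observed at receiver $i$ must carry information-bearing levels of transmitter $j$. In the subregime $n_{ij}\leqslant n_{ii}$, transmitter $i$'s best response can then reliably use only the $n_{ii}-\min(L_j,n_{ij})$ levels that either sit above the interference or coincide with interference levels that $j$ leaves empty; in the subregime $n_{ij}>n_{ii}$, the usable set further collapses to $\min\bigl((n_{ij}-L_j)^+,n_{ii}\bigr)$. Formalizing this requires a Fano-type bound on $R_i$ in which the opponent's $L_j$-rate commitment appears as a side constraint on the number of "free" bit-levels available to player $i$.

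\emph{Achievability direction.} For any target $(R_1,R_2)\in\Bldic\cap\Cldic$, I would exhibit an explicit strategy profile $(s_1^*, s_2^*)$ that achieves it and satisfies the $\eta$-NE condition. The construction mirrors a Han--Kobayashi split tailored to the linear deterministic model: each transmitter partitions its $n_{ii}$ bit-levels into a common part, transmitted on levels that reach the opponent's receiver and are part of its decoding set, and a private part placed on levels that are drowned by the opponent's noise floor. Membership in $\Cldic$ ensures achievability with arbitrarily small error. Membership in $\Bldic$ ensures two things simultaneously: first, $R_i \geqslant L_i$ implies that player $j$'s common codebook forbids any unilateral encroachment by player $i$ on those levels without violating the decoder of player $j$ and hence the utility definition \eqref{EqUtility}; second, $R_i \leqslant \tilde{U}_i$ ensures that the levels left to player $i$ by the strategy $s_j^*$ are exhausted up to $\eta$, so no deviation can yield more than $R_i + \eta$.

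\emph{Main obstacle.} The hardest step is the upper bound $R_i \leqslant \tilde{U}_i$ in the converse, because its definition in \eqref{EqUi} is piecewise in $n_{ij}$ versus $n_{ii}$ and hides a bit-level counting argument that must be mirrored by an information-theoretic inequality in which $R_j \geqslant L_j$ enters as a constraint on the entropy of the interference seen by receiver $i$. In contrast, the achievability direction reduces to verifying the $\eta$-NE property for a concrete layered code, which is essentially bookkeeping once the converse has identified which bit-levels are "locked" by the opponent's guaranteed rate.
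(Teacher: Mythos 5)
The paper does not actually prove Lemma \ref{LemmaNldic}: it is stated verbatim (modulo notation) as Theorem~1 of \cite{Berry-TIT-2011} in the preliminaries section, and the only full proof in this paper is of the analogous feedback result, Theorem \ref{TheoremMainResultLDICFB}. Comparing your sketch against that proof and the cited Berry--Tse argument, your converse plan is sound in outline, but your achievability reasoning contains a concrete error that invalidates the argument.

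You write that $R_j \geqslant L_j$ ``implies that player $j$'s common codebook forbids any unilateral encroachment by player $i$ on those levels without violating the decoder of player $j$ and hence the utility definition \eqref{EqUtility}.'' This is wrong: by \eqref{EqUtility}, $u_i$ depends only on $p_i$, the error probability at \emph{receiver $i$}. A deviation by player $i$ that destroys receiver $j$'s decoding does not affect player $i$'s own payoff and therefore cannot be ruled out this way. The device that actually locks those levels --- and the one the paper explicitly attributes to \cite{Berry-TIT-2011} before reusing it for feedback --- is the \emph{randomized} Han--Kobayashi construction: transmitter $j$ fills its common levels not only with its $R_{j,C}$ information bits but also with $R_{j,R}$ freshly drawn uniformly random dummy bits (the index $\Omega_j$ in \eqref{EqEnconderf}), calibrated so that the interference arriving at receiver $i$ has full entropy on exactly those levels. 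Because these dummy bits are unknown to receiver $i$, no re-encoding by player $i$ can extract reliable rate from its colliding levels, and a Fano-type bound of the form $\tilde R_i \leqslant \max(n_{ii},n_{ij}) - (R_{j,C}+R_{j,R})$ (the non-feedback analogue of Lemma \ref{LemmaLDICFBc}) then forces $\tilde R_i - R_i \leqslant \eta$ once $R_{j,R}$ is tuned to saturate the corresponding constraint. Your sketch has the right two-inclusion architecture and correctly flags $\tilde U_i$ as the delicate part, but without the jamming role of the random rates $R_{j,R}$ neither the converse upper bound nor the equilibrium verification closes: a plain Han--Kobayashi code with no dummy bits leaves slack that player $i$ can exploit by more than $\eta$, and the wrong appeal to player $j$'s decoder cannot repair this.
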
 

\subsubsection{Capacity of the LD-IC with Feedback}

The capacity region of the linear deterministic interference channel with feedback is denoted by $\Cldicfb$ and it is fully characterized  by Corollary $1$ in \cite{Suh-TIT-2011}.

\begin{lemma} [Corrollary $1$ in \cite{Suh-TIT-2011}] \label{LemmaCldicfb} \emph{
The capacity region $\Cldicfb$ of the two-user LD-IC with feedback corresponds to the set of non-negative rate pairs $(R_1,R_2)$ satisfying
\begin{IEEEeqnarray}{lcl}
  \label{EqRegionCwFB1}
R_1 &\leqslant & \min\Big( \max\big( n_{11}, n_{12} \big), \max\big( n_{11}, n_{21}\big) \Big), \\
  \label{EqRegionCwFB2}
R_2  &\leqslant&  \min\Big( \max\big( n_{22}, n_{21} \big), \max\big( n_{22}, n_{12}\big) \Big), \\
  \label{EqRegionCwFB3}
R_1 + R_2 & \leqslant &   \min\Big( \max\big(n_{22}, n_{21}\big) + \big(n_{11} - n_{21}\big)^+ ,\\
& &  \max\big(n_{11}, n_{12}\big) + \big(n_{22} - n_{12}\big)^+ \Big) . \nonumber
\end{IEEEeqnarray}
}
\end{lemma}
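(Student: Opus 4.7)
My plan is to establish the bound in two directions, converse and achievability, following the blueprint of Suh and Tse. For the converse, the two individual-rate bounds on $R_1$ come from cut-set and genie-aided arguments. To obtain $R_1 \leqslant \max(n_{11},n_{12})$, I would reveal $\bs{X}_2$ to receiver $1$ as side information; the problem reduces to a point-to-point channel from transmitter $1$ whose effective output contains both $\bs{S}^{q-n_{11}}\bs{X}_1$ and $\bs{S}^{q-n_{12}}\bs{X}_1$, and this effective channel has capacity $\max(n_{11},n_{12})$. The complementary bound $R_1 \leqslant \max(n_{11},n_{21})$ follows from a cooperation / cut-set argument: with perfect feedback, transmitter $2$ knows $\bs{Y}_2$ and can act as a perfect cooperator for transmitter $1$, so the information flowing out of transmitter $1$ cannot exceed the number of bit-levels that depend on $\bs{X}_1$ and are jointly observed by the two receivers, namely $\max(n_{11},n_{21})$. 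The $R_2$ bounds follow by swapping indices.

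For the sum-rate, I would combine Fano's inequality with the key decomposition of $\bs{X}_1$ into its ``clean'' top $(n_{11}-n_{21})^+$ levels (above the interference threshold at receiver $2$) and its bottom $\min(n_{11},n_{21})$ levels (which are contained in $\bs{Y}_2$). After providing the clean levels to receiver $2$ as a genie, the sum $H(W_1)+H(W_2)$ is upper bounded by $N(n_{11}-n_{21})^+ + H(\bs{Y}_2^N) + N\epsilon$, and $H(\bs{Y}_2^N) \leqslant N\max(n_{22},n_{21})$. This yields $R_1+R_2 \leqslant \max(n_{22},n_{21}) + (n_{11}-n_{21})^+$, and the companion inequality is symmetric in the indices.

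For achievability, I would use a block-Markov scheme over $B$ blocks combining rate-splitting with interference forwarding. In block $b$, transmitter $i$ superposes a fresh private part (placed below receiver $j$'s interference level), a fresh common part, and the common part of user $j$ that it decoded in block $b-1$ from the feedback observation $\bs{Y}_i^{(b-1)}$, which contains $\bs{S}^{q-n_{ij}}\bs{X}_j^{(b-1)}$. Receiver $i$ performs backward decoding: in block $b+1$ it recovers user $j$'s common bits for block $b$ and then cancels them from $\bs{Y}_i^{(b)}$ before decoding the fresh bits of user $i$. A Fourier--Motzkin elimination then shows that the achievable region produced by this scheme coincides with the outer bounds.

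The hardest step will be the sum-rate converse: identifying the correct genie in the deterministic model and verifying that the resulting inequality is simultaneously tight in every interference regime (weak, moderate, strong, and very strong) requires the specific splitting of $\bs{X}_1$ above and careful bit-level bookkeeping to line up the terms $H(\bs{Y}_2^N)$ and $N(n_{11}-n_{21})^+$ with the information rates $H(W_1)$ and $H(W_2)$. Once the sum-rate converse is in place, both the individual-rate converses and the block-Markov achievability are comparatively routine.
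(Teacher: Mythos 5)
The paper provides no proof of this lemma; it is imported verbatim as Corollary~$1$ of \cite{Suh-TIT-2011} and used as a black box throughout. Your plan (cut-set and genie converse plus block-Markov achievability with backward decoding) is exactly the Suh--Tse recipe, so the approach is not in question — but one of your converse steps contains a real error.

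Revealing $\bs{X}_2$ to receiver $1$ removes the interference entirely, leaving the effective output $\bs{S}^{q-n_{11}}\bs{X}_1$ and hence the bound $R_1 \leqslant n_{11}$, not $R_1 \leqslant \max(n_{11},n_{12})$. Your claimed effective output ``$\bs{S}^{q-n_{11}}\bs{X}_1$ and $\bs{S}^{q-n_{12}}\bs{X}_1$'' cannot arise, because in the channel law \eqref{EqLDICsignals} the shift $\bs{S}^{q-n_{12}}$ multiplies $\bs{X}_2$, not $\bs{X}_1$. The bound $R_1 \leqslant \max(n_{11},n_{12})$ in fact needs no genie at all: $N R_1 \leqslant I(W_1;\underline{\bs{Y}}_1) + N\epsilon \leqslant H(\underline{\bs{Y}}_1) + N\epsilon \leqslant N\max(n_{11},n_{12}) + N\epsilon$, since the nonzero entries of $\bs{Y}_{1,t}$ occupy only the lowest $\max(n_{11},n_{12})$ positions. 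Your cut-set reading of $R_1 \leqslant \max(n_{11},n_{21})$ is the right one (the flow out of transmitter~$1$ is capped by the union of the levels it presents to the two receivers; the matching genie gives $(W_2,\underline{\bs{Y}}_2)$ to receiver~$1$). For the sum rate, the cleanest genie gives $W_2$ to receiver~$1$ and uses $H(\underline{\bs{Y}}_1\mid W_2) - H(\underline{\bs{Y}}_2\mid W_2) \leqslant H(\underline{\bs{Y}}_1\mid \underline{\bs{Y}}_2, W_2) \leqslant N(n_{11}-n_{21})^+$ together with $H(\underline{\bs{Y}}_2)\leqslant N\max(n_{22},n_{21})$; your alternative of handing the clean top levels of $\bs{X}_1$ to receiver~$2$ is not the same inequality and would need a separate justification before you could claim it yields the stated bound.
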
 

\subsection{Main Result}\label{SecLDIC}

\begin{figure*}[t] 
 \centerline{\epsfig{figure=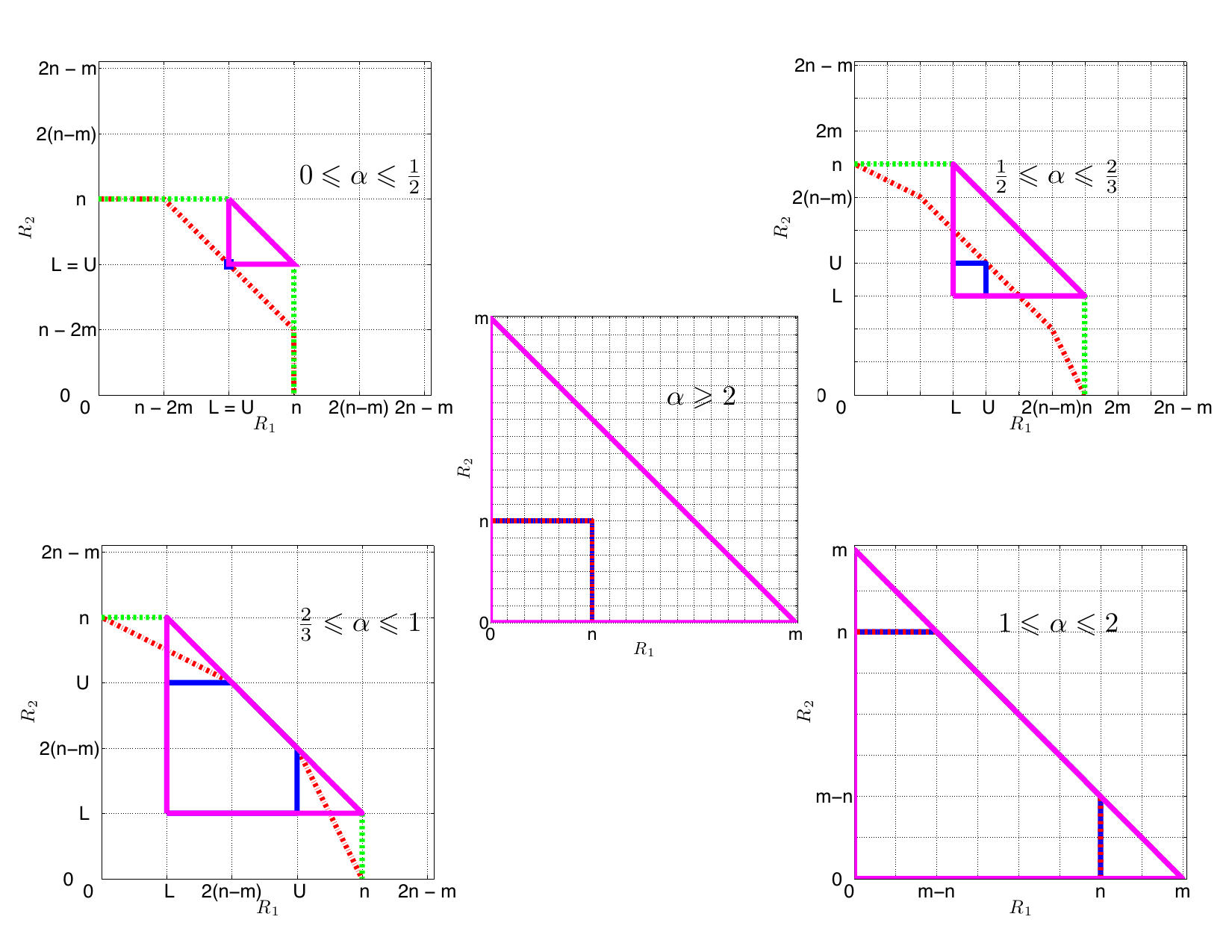,width=.85\textwidth}}
 \vspace{-0.1in}
  \caption{Illustration of $\Cldic$ (red dotted line), $\Nldic$ (solid blue line), $\Cldicfb$ (green dotted line), and $\Nldicfb$ (magenta solid line) in all interference regimes.}
  \label{FigRegions}
\end{figure*}

The $\eta$-NE region of the LD-IC with feedback and $\eta$ arbitrarily small is denoted by $\Nldicfb$.
In order to define the set $\Nldicfb$, first consider the following open set:
\begin{equation}
\Bldicfb =  \left\lbrace (R_1, R_2) \in \mathds{R}^{2}: 
\left( L_i - \eta \right)^+ \leqslant R_i, \forall i \in \lbrace 1, 2 \rbrace \right\rbrace,
\end{equation}
where $L_i$ is defined in \eqref{EqLi}.
The main result for the LD-IC with feedback is stated in terms of the sets $\Bldicfb$ and $\Cldicfb$.
\begin{theorem}\label{TheoremMainResultLDICFB}\emph{
Let $\eta \geqslant 0$. For a two-user linear deterministic IC with feedback, the $\eta$-NE region $\Nldicfb$ satisfies 
\begin{equation}
\Nldicfb  =   \Bldicfb  \cap \Cldicfb.
\end{equation}
}
\end{theorem}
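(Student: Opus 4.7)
The plan is to prove Theorem~\ref{TheoremMainResultLDICFB} by establishing the two set inclusions $\Nldicfb \subseteq \Bldicfb \cap \Cldicfb$ (converse) and $\Bldicfb \cap \Cldicfb \subseteq \Nldicfb$ (achievability).

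For the converse, the containment $\Nldicfb \subseteq \Cldicfb$ is immediate from Definition~\ref{DefAchievableRatePairs}, Definition~\ref{DefNERegion} and Lemma~\ref{LemmaCldicfb}, since any $\eta$-NE rate pair is in particular achievable and hence lies in the feedback capacity region. The individual lower bound $R_i \geq (L_i-\eta)^+$, with $L_i = (n_{ii}-n_{ij})^+$, is established through a guaranteed-rate argument: irrespective of the configuration $s_j$ used by the competing link, the top $L_i$ signal levels at receiver $i$ are not corrupted by transmitter $j$, so transmitter $i$ can discard the feedback output altogether and transmit uncoded bits on those levels to reliably deliver $L_i$ bits per channel use. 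If a putative equilibrium rate were strictly smaller than $L_i-\eta$, transmitter $i$ could unilaterally switch to this guaranteed scheme and gain more than $\eta$ bits/s/Hz, contradicting Definition~\ref{DefEtaNE}; this forces $(R_1,R_2)\in\Bldicfb$.

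The achievability direction is the main content. For each target rate pair $(R_1,R_2)\in\Bldicfb\cap\Cldicfb$ I would exhibit a specific configuration pair $(s_1^*,s_2^*)$ that both attains $(R_1,R_2)$ and satisfies the $\eta$-NE inequality of Definition~\ref{DefEtaNE}. The codebook construction would be built from the block-Markov superposition-with-backward-decoding scheme underlying Lemma~\ref{LemmaCldicfb}, which covers the boundary of $\Cldicfb$; time-sharing implemented through the shared randomization index $\Omega_i$ then interpolates to reach any interior point. The delicate step is verification of the $\eta$-NE inequality: one must show that for every alternative configuration $s_i'$ of transmitter $i$ producing a vanishing bit-error probability, the resulting rate $R_i(s_i',s_j^*)$ does not exceed $R_i+\eta$. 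I would approach this by fixing $s_j^*$ together with the auxiliary index $\Omega_j$, thereby reducing the channel seen by transmitter $i$ to a single-user deterministic channel with a known (feedback-dependent) interference pattern, whose capacity is exactly the individual bound \eqref{EqRegionCwFB1}--\eqref{EqRegionCwFB2}.

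The principal obstacle lies in this deviation-bounding step. Unlike the no-feedback case treated in Lemma~\ref{LemmaNldic}, feedback introduces a cross-link dependency in which transmitter $j$'s future inputs depend on receiver $j$'s past observations and hence on transmitter $i$'s past inputs; a deviating $s_i'$ can a priori steer transmitter $j$ into inadvertently relaying information for transmitter $i$. I would control this by decomposing transmitter $j$'s transmitted symbols into an \emph{own-message} component and a \emph{relay} component consisting of echoed bits originating at transmitter $i$, and by verifying in the linear deterministic model that even the maximally favorable exploitation of this relay path cannot push $R_i$ above the feedback individual capacity, an observation that is consistent with the absence of an upper bound $\tilde{U}_i$ in $\Bldicfb$ compared with $\Bldic$. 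A regime-by-regime check, mirroring the interference-regime partition illustrated in Fig.~\ref{FigRegions}, would complete the argument.
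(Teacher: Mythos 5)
Your converse direction is essentially the paper's Lemma~\ref{LemmaLDICFBa}: if $R_i < (L_i-\eta)^+$, transmitter $i$ can unilaterally switch to sending fresh bits over its $(n_{ii}-n_{ij})^+$ interference-free top levels, guaranteeing $L_i$ and giving an improvement exceeding $\eta$. That part is fine.

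The achievability direction has a genuine gap, and it is exactly the delicate step you flag. Your plan builds every target pair from the Suh--Tse block-Markov scheme plus time-sharing implemented through $\Omega_i$, and then bounds a deviator's rate by the individual feedback capacity, i.e.\ $\tilde{R}_i \leqslant \min(\max(n_{ii},n_{ij}),\max(n_{ii},n_{ji}))$. But that bound alone does not certify an $\eta$-NE at \emph{interior} points of $\Bldicfb \cap \Cldicfb$: if the equilibrium target $R_i$ is strictly below the individual capacity (as it will be on most of the region, e.g.\ at $(4,4)$ in Sec.~\ref{Sec44}), then "no deviation exceeds individual capacity" still permits gains far larger than $\eta$, and nothing in a plain time-shared Suh--Tse configuration stops transmitter $i$ from abandoning the time-sharing schedule and simply running at its feedback-capacity-optimal scheme. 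The absence of an upper bound $\tilde U_i$ in $\Bldicfb$ is not a simplification you can lean on; it is a symptom that the deviation bound must be made \emph{adaptive} to the target point.

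The missing idea is the paper's randomized Han--Kobayashi scheme with feedback: $\Omega_j$ is not a time-sharing key but a common random sub-message of rate $R_{j,R}$ embedded in transmitter $j$'s common layer, known to transmitter/receiver $j$ but deliberately hidden from link $i$. Because these bits occupy common levels yet carry no information to receiver $j$, they act as jamming that caps any deviator's rate at $\tilde R_i \leqslant \max(n_{ii},n_{ij}) - (R_{j,C}+R_{j,R}) + \tfrac{2}{3}\eta$ (Lemma~\ref{LemmaLDICFBc}, proved by a Fano-type argument conditioning on $\bs{W}_1,\bs{\Omega}_1$ and lower-bounding $H(\underline{\bs{\tilde Y}}_1\mid\bs{W}_1,\bs{\Omega}_1)$ by $T(R_{2,C}+R_{2,R})$). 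Choosing $R_{j,R}$ so that $R_{i,C}+R_{i,P}+R_{j,C}+R_{j,R}=\max(n_{ii},n_{ij})+\tfrac{2}{3}\eta$ (Lemma~\ref{LemmaLDICFBd}) pins the deviation ceiling to $R_i+\eta$ exactly, and a Fourier--Motzkin elimination (Lemma~\ref{LemmaLDICFBe}) verifies feasible $(R_{1,R},R_{2,R})$ exist for every $(R_1,R_2)\in\Bldicfb\cap\Cldicfb$. Note also that your suggestion to "fix $\Omega_j$" so the interference pattern is known to the deviator works against you: the whole mechanism relies on $\Omega_j$ being \emph{unknown} to link $i$, since a deviator who knew it could cancel the jamming and the deviation bound would collapse back to the unconditioned individual capacity.
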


The proof of Theorem \ref{TheoremMainResultLDICFB} is presented in Sec. \ref{SecProofsLDICFB}. However, before presenting the proof, some important observations are discussed. 

Note that the lowest individual rate achievable at an $\eta$-NE, i.e., $(L_i - \eta)^+$, is the same with and without feedback. That is, $\forall i \in \lbrace 1, 2 \rbrace$,
\begin{IEEEeqnarray}{lcl}
\label{EqWorseIndividualRatesNE}
\qquad \ds\inf_{R_i} \Nldic = \ds\inf_{R_i}  \Nldicfb = (L_i - \eta)^+. 
\end{IEEEeqnarray}
These two observations unveil several important facts. On the one hand, this implies the following inclusion:
\begin{equation}\label{EqInclusionNERegions}
\Nldic \subseteq \Nldicfb.
\end{equation}

As shown by the examples in the next section, strict inclusion holds for all interference regimes in the symmetric case. Conversely, strict equality holds in particular cases, consider for instance, a non-symmetric case in which $n_{11} = n_{22} =  n > 0$, $n_{21} \geqslant  2 n $ and $n_{12} = 0$.

On the other hand, the equality in  \eqref{EqWorseIndividualRatesNE} also shows that increasing the space of actions by letting both transmitter-receiver pairs choose whether to use or not the observations of the previous channel-outputs via the feedback link does not induce new NEs with lower individual rates. This statement might appear obvious, however, it has been shown that increasing the set of actions of players might reduce their individual rates or even the sum-rate at an NE. In particular, this effect has been observed in the parallel Gaussian IC and parallel Gaussian MAC when transmitters are allowed to use larger sets of transmit-receive configurations \cite{Rose-ICC-2011, Perlaza-JWCN-2012}. In the general realm of game theory, this effect is often referred to as a Braess-type paradox \cite{Braess-U-1969}. Fortunately, Braess-type paradoxes do not appear in the game $\GameNF$.

A final observation is that in some cases, all Pareto optimal rate pairs of the capacity region $\Cldicfb$ are achievable at an NE. The following lemma formalizes this observation.
\begin{lemma}[Sum-Rate Optimality]\label{LemmaSumMaxNE}\emph{
In the game $\GameNF$, the $\eta$-NE region $\Nldicfb$, with $\eta$ arbitrarily small, includes all the set of sum-optimal rates pairs $(R_1,R_2)$ of the capacity region $\Cldicfb$, if and only if:
\begin{IEEEeqnarray}{lcl}
\left(n_{22} - n_{21} \right)^+ - \left(n_{22} - n_{12} \right)^+ & \leqslant & \big( \left( n_{12} - n_{11}\right)^+ \\
\nonumber
& & - \left( n_{21} - n_{11} \right)^+ \big)^+\mbox{ and } \\
\left(n_{11} - n_{12} \right)^+ - \left(n_{11} - n_{21} \right)^+ & \leqslant & \big( \left( n_{21} - n_{22}\right)^+ \\
\nonumber
& & - \left( n_{12} - n_{22} \right)^+ \big)^+.
\end{IEEEeqnarray}
}
\end{lemma}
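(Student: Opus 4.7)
The plan is to leverage Theorem \ref{TheoremMainResultLDICFB}, which gives $\Nldicfb = \Bldicfb \cap \Cldicfb$. A sum-rate-optimal pair $(R_1, R_2) \in \Cldicfb$ satisfies the capacity constraints by assumption, so the only thing left to check is membership in $\Bldicfb$, namely $R_i \geqslant (L_i - \eta)^+$ for $i \in \lbrace 1, 2\rbrace$. Since $(L_i - \eta)^+ \leqslant L_i$, it suffices to establish the stronger inequality $R_i \geqslant L_i$, which in particular will render the conclusion independent of the value of $\eta$.

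The sum-rate-optimal pairs form the line segment $\lbrace (R_1, R_2) : R_1 + R_2 = C_{\mathrm{sum}}\rbrace$ intersected with the individual-rate box \eqref{EqRegionCwFB1}--\eqref{EqRegionCwFB2}, where $C_{\mathrm{sum}}$ denotes the right-hand side of \eqref{EqRegionCwFB3}. Denoting by $U_i^{(c)}$ the individual-rate cap of user $i$ from \eqref{EqRegionCwFB1}--\eqref{EqRegionCwFB2}, the minimum value of $R_1$ on this segment is $(C_{\mathrm{sum}} - U_2^{(c)})^+$, attained at the endpoint where user~$2$ operates at its individual cap. Thus the conclusion reduces to proving
\[
C_{\mathrm{sum}} - U_2^{(c)} \geqslant L_1 \quad \text{and} \quad C_{\mathrm{sum}} - U_1^{(c)} \geqslant L_2.
\]
Next I would expose the origin of the nested positive-part functions in the hypothesis by using the identity $\max(a, b) = b + (a - b)^+$. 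This gives $U_2^{(c)} = n_{22} + \min\lbrace (n_{21} - n_{22})^+, (n_{12} - n_{22})^+\rbrace$, and analogous decompositions of both branches of the outer $\min$ defining $C_{\mathrm{sum}}$. A case split on which branch of $U_2^{(c)}$ is active, combined with which branch of $C_{\mathrm{sum}}$ is tight, then reduces the inequality $C_{\mathrm{sum}} - U_2^{(c)} \geqslant L_1$ in each regime to exactly the second inequality appearing in the lemma; a symmetric argument yields the first inequality, producing the lower bound on $R_2$.

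The principal obstacle is the combinatorial case analysis, since the relative orderings of $\lbrace n_{11}, n_{12}\rbrace$, $\lbrace n_{11}, n_{21}\rbrace$, $\lbrace n_{21}, n_{22}\rbrace$, and $\lbrace n_{12}, n_{22}\rbrace$ partition the parameter space into several interference regimes. In each regime one has to verify both that the presumed branches of the two $\min$'s are actually the active ones and that the resulting inequality collapses to the hypothesis. I expect this to close cleanly because the outer $(\cdot)^+$ on the right-hand side of the lemma's hypothesis tracks precisely the sign of $(n_{21} - n_{22})^+ - (n_{12} - n_{22})^+$, i.e., which branch of $U_2^{(c)}$ is active; this structural match signals that the hypothesis has been tailored to cover every regime with a single unified inequality.
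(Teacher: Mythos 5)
Your proposal is correct and follows essentially the same route as the paper's own proof in Appendix~A: both reduce the claim, via Theorem~\ref{TheoremMainResultLDICFB}, to verifying that the corner of the sum-optimal segment where user~$j$ saturates its individual cap from \eqref{EqRegionCwFB1}--\eqref{EqRegionCwFB2} still satisfies $R_i \geqslant L_i$; the paper's quantity $R_i^{\dagger}$ is exactly your $C_{\mathrm{sum}} - U_j^{(c)}$, and the simplification to the hypothesis inequality via $\max(a,b) = b + (a-b)^+$ is the same (the paper simply asserts the equivalence without displaying the case split you sketch).
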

The proof of Lemma \ref{LemmaSumMaxNE} is presented in Appendix \ref{AppProofOfLemmaSumMaxNE}.
Lemma \ref{LemmaSumMaxNE} highlights that there exist cases in which the set of sum-optimal rate pairs of the capacity region is not entirely inside the $\eta$-NE region.
An interesting observation from Lemma \ref{LemmaSumMaxNE} is that if the direct links have larger capacity than the cross-interference links, i.e., $n_{ii} \geqslant \max\left( n_{ji}, n_{ij}\right)$, then the set $\Nldicfb$ includes all the sum-optimal rate pairs only if the capacities of the cross links are identical, i.e., $n_{21} = n_{12}$.

The following section presents some examples to provide some intuition about the impact of feedback in decentralized ICs using a symmetric channel.

\subsection{Examples}
Consider a symmetric linear deterministic IC with perfect channel-output feedback in which, $n = n_{11} = n_{22}$ and $m = n_{12} = n_{21}$, with normalized cross gain $\alpha = \frac{m}{n}$.
The regions $\Cldic$, $\Nldic$, $\Cldicfb$, and $\Nldicfb$ are plotted in Fig.  \ref{FigRegions} for different interference regimes, i.e., very weak interference ($0 \leqslant \alpha \leqslant \frac{1}{2}$), weak interference ($\frac{1}{2} < \alpha \leqslant \frac{2}{3}$), moderate interference  ($ \frac{2}{3} < \alpha \leqslant 1$), strong interference  ($1 < \alpha \leqslant 2$) and very strong interference  ($\alpha \geqslant 2$),  respectively. 

From Fig. \ref{FigRegions}, it becomes clear that at least in the symmetric case, the use of feedback increases the number of rate pairs that are achievable at an NE in all the interference regimes.
Another important observation from Fig. \ref{FigRegions} is that, at least in the symmetric case, all the sum-optimal rate pairs are achievable at an NE when feedback is available.  This drastically contrasts with the case in which feedback is not available, i.e., in the LD-IC without feedback. Note that Fig. \ref{FigRegions} shows that when $0 \leqslant \alpha \leqslant \frac{2}{3}$, only one rate pair of the infinitely many sum-optimal rate pairs is achieved at an NE in the LD-IC without feedback.
Alternatively, in  the strong interference regime, i.e., $\alpha > 1$,  $ \Nldicfb =  \Cldicfb$ and $ \Nldic =  \Cldic$, and thus, all the achievable rates of the IC with and without feedback are also achievable at an NE in the symmetric case.

\subsubsection{Achievability in the Very Weak Interference Regime}

Consider the scenario of very weak interference, for instance, let $\alpha = \frac{1}{3}$, with $m = 2$ and $n = 6$. 
From Theorem \ref{TheoremMainResultLDICFB}, it follows that the $\eta$-NE region is $\Nldicfb = \lbrace (R_1, R_2) \in \mathds{R}^{2}: \forall i \, R_i \geqslant 4, R_1 + R_2 \leqslant 10 \rbrace$. In Fig. \ref{FigRegions}, the region $\Nldicfb$ corresponds to the convex hull of the points $(4,4)$, $(6,4)$ and $(4,6)$. The rate pair $(4,4)$ is achieved by treating interference as noise and the use of feedback is not required;  the rate pairs $(6,4)$ and $(4,6)$ are achieved when one of the transmitter-receiver pairs uses feedback to cancel the interference of the other.

\paragraph{Achievability of $(4,4)$}\label{Sec44}

The rate pair $(4,4)$ is achievable when both transmitters use their top $(n-m)^+ = 4$ levels, which are interference-free, to transmit new bits at every channel use.  
Note that any attempt by a transmitter to increase its rate by using its $m=2$ lowest levels would bound its probability of error away from zero since those levels are subject to the interference of the $m=2$ top levels of the other transmitter (see Fig.  \ref{FigExamples4x4}). For transmitter $1$ (resp. transmitter $2$) to be able to use its $m=2$  bottom levels, transmitter-receiver pair $1$ (resp. transmitter-receiver pair $2$) must implement an interference cancellation technique, e.g., channel-output feedback. However, as shown in the following example, this would at most, guarantee the same rate of $4$ bits per channel use.  
Thus, these receive-transmit configurations in which transmitters use their top $(n-m)^+ = 4$ levels form an NE. Interestingly, as seen in Fig. \ref{FigRegions}, this is the worst $\eta$-NE in terms of both individual rates and sum-rate.

 \begin{figure}[t]
 \centerline{\epsfig{figure=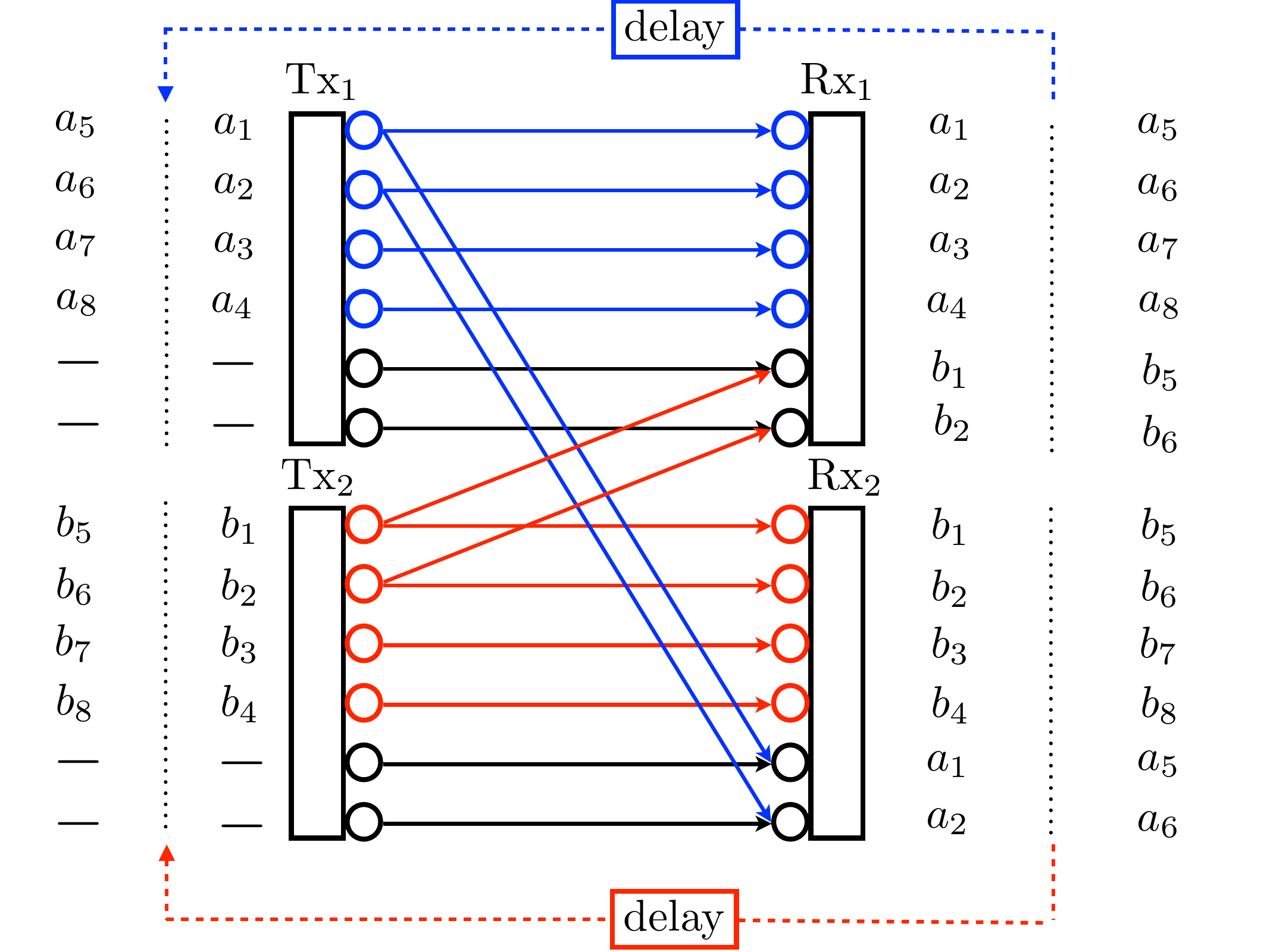,width=.5\textwidth}}
  \caption{Coding scheme for achieving the rate pair $(4,4)$ at an NE in the symmetric LD-IC with feedback, with $n = 6$ and $m = 2$.}
  \label{FigExamples4x4}
\end{figure}

\paragraph{Achievability of $(6,4)$ and $(4,6)$}

\begin{figure}[t]
 \centerline{\epsfig{figure=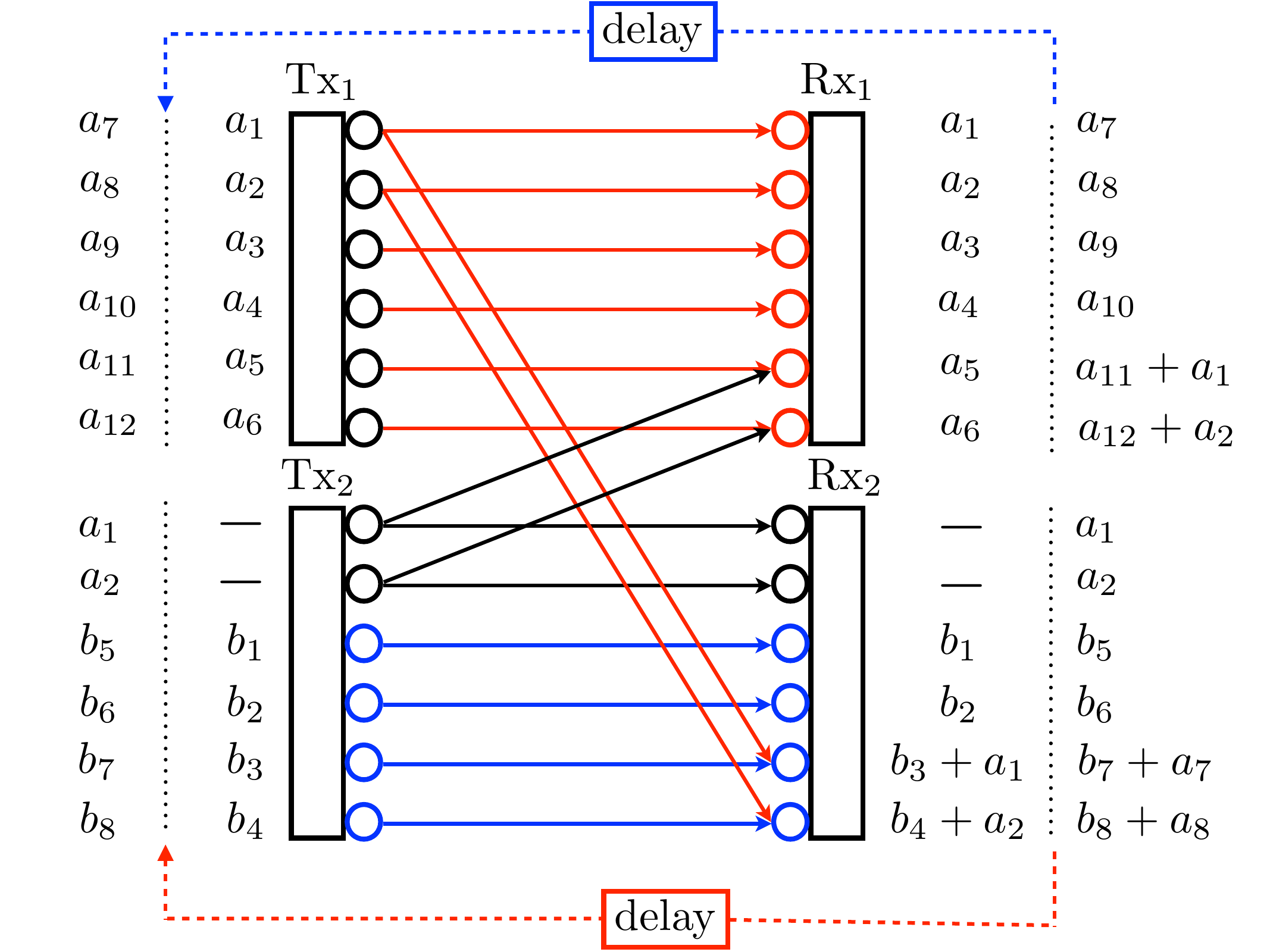,width=.5\textwidth}}
  \caption{Coding scheme for achieving the rate pair $(6,4)$ at an NE in the symmetric LD-IC with feedback, with $n = 6$ and $m = 2$.}
  \label{FigExamples6x4}
\end{figure}

The rate pair $(6,4)$ is achievable at an NE when transmitter $1$ uses all its $n = 6$ levels to transmit new bits at each channel use, while transmitter $2$ uses the following transmit-receive configuration: $(a)$ At channel use $t$, the top $m = 2$ levels of transmitter $2$ are used to re-transmit bits that have been sent by transmitter $1$ during channel use $t-1$ and that have produced interference in the bottom $m=2$ levels of receiver $2$; $(b)$ The bottom $(n-m)^+ = 4$ levels are used to transmit new bits  at each channel use.
Note that at channel use $t$, with $t > 1$, the top bits of receiver $2$ are used to clean the interference affecting the $m=2$ bottom bits in channel use $t-1$. 
Note also that canceling the interference produced by bits $a_1$ and $a_2$ at receiver $2$ during channel use $t=1$  requires transmitter $2$ to re-transmit $a_1$ and $a_2$ over its  $m=2$ (interference-free) top levels during channel use $t=2$.
A key observation is that, when transmitter $2$ retransmits $a_1$ and $a_2$ during channel use $t=2$, these bits produce interference at receiver $1$ that can be cancelled since $a_1$ and $a_2$ are received interference-free during channel use $t =1$.
The same occurs with bits $a_7$ and $a_8$ during channel uses $t=2$ and so on.  Hence, for each interfering bit that transmitter $2$ re-transmits on its top $m$ levels per channel use, it allows transmitter $1$ to transmit a new bit per channel use in the corresponding level of its $m$ bottom levels. Thus, when transmitter-receiver pair $2$ uses feedback, it benefits in the sense than it can cancel the interference and use its $m$ bottom levels. However, it unintentionally benefits transmitter-receiver pair $1$ as it is allowed to increase its individual rate.

\textbf{Remark 1:} 
Note that transmitter-receiver pair $1$ achieves a rate $R_1= \frac{6 N_1}{N_1} = 6$ bits per channel use and transmitter-receiver pair $2$ achieves a rate $R_2 = \frac{4N_2 -4}{N_2} = 4 - \eta$ bits per channel use, with $\eta = \frac{4}{N_2}$, as in the first and the last channel use only $2$ new bits are transmitted. Hence, the strategy described above is an $\eta$-NE as the strategy in which the top $(n-m)^+ = 4$ levels of transmitter $1$ are used to transmit new bits at each channel use would achieve a rate $R_2' = 4$ bits per channel use. That is, an improvement of $\eta = \frac{4}{N_2}$ is feasible by unilateral deviation of transmitter-receiver pair $2$. However, such an improvement $\eta$ can be made arbitrarily small by transmitter-receiver pair $2$ by increasing its block-length $N_2$.

\textbf{Remark 2:} 
Note that when transmitter-receiver pair $2$ uses feedback to re-transmit at channel use $t$ the bits that produced interference during channel use $t-1$, it does not increase its own transmission rate, indeed it decreases by $\eta = \frac{4}{N_2}$ bits per channel use, with respect to a transmit-receive configuration in which it transmits new bits over the $(n-m)^+ = 4$ top levels at each channel use. Hence, since the improvement is bounded by $\eta = \frac{4}{N_2}$ bits per channel use, it indifferently uses either configuration at an $\eta$-NE. 
However, for each interfering bit that transmitter $2$ re-transmits on its top $m$ levels per channel use, it allows transmitter $1$ to transmit a new bit per channel use in the corresponding level of its $m$ bottom levels.  
From this point of view, transmitter-receiver pair $2$ exhibits an \emph{altruistic behavior}. That is, it uses an alternative action that is $\eta$-optimal with respect to its own individual interest, but it allows transmitter-receiver pair $1$ to achieve a higher transmission rate.

 \subsubsection{Examples in the Strong Interference Regime}

Consider the scenario of strong interference, for instance, let $\alpha = \frac{3}{2}$, with $m = 3$ and $n = 2$. 
From Theorem \ref{TheoremMainResultLDICFB}, it follows that the $\eta$-NE region is $\Nldicfb = \lbrace (R_1, R_2) \in \mathds{R}^{2}: \forall i \, R_i \geqslant 0, R_1 + R_2 \leqslant 3 \rbrace$. In Fig. \ref{FigRegions}, the region $\Nldicfb$ corresponds to the convex hull of the points $(0,0)$, $(3,0)$ and $(0,3)$. The rate pair $(0,0)$ is achieved when interference is treated as noise and feedback is not required; the rate pairs $(3,0)$ or $(0,3)$ are achieved when one of the transmitter-receiver pairs uses feedback. As a complementary example, the achievability of the rate pairs $(2,0)$ and $(0,2)$ is presented to highlight the relevance of the random indices $\Omega_i$ in \eqref{EqEnconderf}.

\paragraph{Achievability of $(0,0)$}\label{Sec00}

The rate pair $(0,0)$ is achievable as an NE when transmitter $i$, with $i \in \lbrace 1, 2\rbrace$, uses all its levels to send  at each channel use randomly generated bits that are uniquely known by transmitter $i$ and receiver $i$. The fact that these bits are known by the intended receiver implies that it does not exists an effective transfer of information, which justifies $R_1 = R_2 = 0$.  
This transmit-receive configuration is an NE. This is due to the fact that in the strong interference regime, all levels of receiver $i$ are subject to the interference of transmitter $j$. Hence, simultaneous transmission and reliable decoding of new bits at each channel use is simply not possible. Moreover, no individual deviation from this strategy increases the individual rates. 

\textbf{Remark 3:} 
Note that transmitting randomly generated bits known at the intended receiver does not increase the rate of the corresponding transmitter-receiver pair and does not bound the probability of error away from zero in case they are not decoded. However, at the non-intended receiver, these randomly generated bits produce interference and thus, constrain the other transmitter from sending new bits at each channel use. 
Interestingly, if the random bits are not sent by transmitter $1$, transmitter $2$ would be able to send new  bits and achieve a rate $R_2 > 0$.

Transmitter-receiver pair $i$ can increase its own rate  when either transmitter-receiver pair $j$ does not transmit or when it re-transmits its interfering bits, as shown in the following example.

\paragraph{Achievability of $(3,0)$ and $(0,3)$}\label{Sec30}
The rate pairs $(3,0)$ and $(0,3)$ are achievable as an NE when one of the transmitters uses all its $m = 3$ levels to transmit new bits at each channel use and the other transmitter re-transmits these bits at each channel use $t+1$ (see Fig. \ref{FigExamples3x0}). 
Without any loss of generality, let $R_1 = 3$ bits per channel use. Under this condition, it follows from \eqref{EqRegionCwFB1}-\eqref{EqRegionCwFB3} that the maximum rate that transmitter-receiver pair $2$ can achieve is $R_2 = 0$ bits per channel use.
The key observation is that transmitter $2$ does not increase its own rate by sending new bits to its own receiver. Hence, either it remains silent or it re-transmits the channel-output fed back by receiver $2$ and in both cases it achieves a rate $R_2 = 0$. Nonetheless, when transmitter-receiver pair $2$ re-transmits the interfering bits, it provides an alternative path from transmitter $1$ to receiver $1$ and thus, it allows the achievability of  $R_1 = 3$ bits per channel use.

\textbf{Remark 4:} 
Note that in this case the use of feedback by transmitter-receiver pair $2$ is strictly beneficial to transmitter-receiver pair $1$ as it allows the achievability of a rate $R_1 = 3$ bits per channel use. Nonetheless, it does not increase nor decrease the rate of transmitter-receiver pair $2$. Hence, since there does not exist a strict improvement in the rate of transmitter-receiver pair $2$ by the use of feedback, this transmit-receive configuration can be seen as an altruistic decision, while it still constitutes an NE.

 \begin{figure}[t]
 \centerline{\epsfig{figure=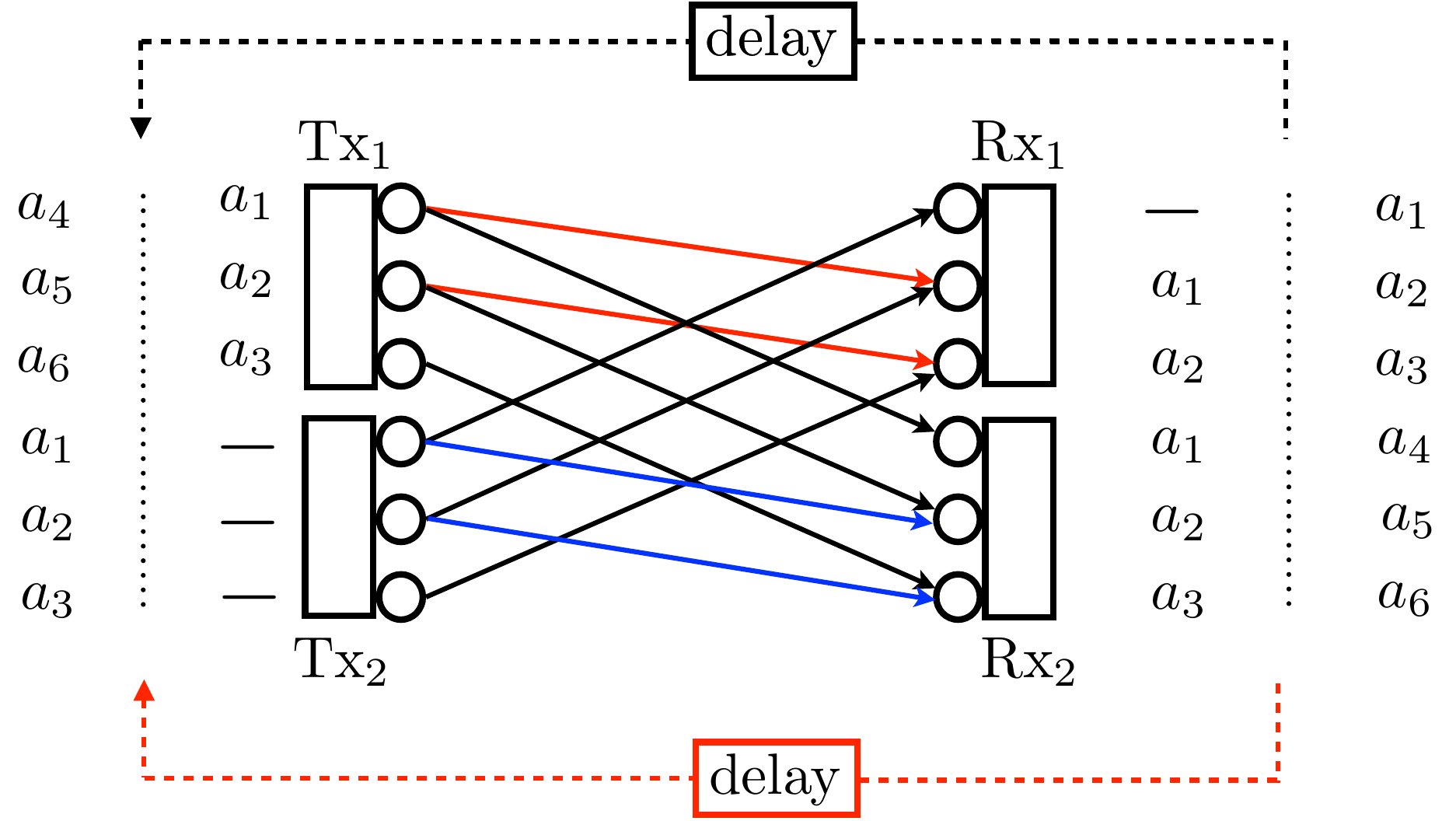,width=.5\textwidth}}
  \caption{Coding scheme for achieving the rate pair $(3,0)$ at an NE in the symmetric LD-IC with feedback, with $n = 3$ and $m = 2$.}
  \label{FigExamples3x0}
\end{figure}

 \subsection{Proof of Theorem \ref{TheoremMainResultLDICFB}}\label{SecProofsLDICFB}
To prove Theorem \ref{TheoremMainResultLDICFB}, the first step is to show that a rate pair $(R_1, R_2)$, with $R_i \leqslant  L_i - \eta$ and $i \in \lbrace 1, 2 \rbrace$, is not achievable at an $\eta$-equilibrium for an arbitrarily small $\eta$. That is, 
\begin{equation}
\label{EqPart1LDICFB}
\Nldicfb \subseteq \Cldicfb  \cap \Bldicfb.
\end{equation}
The second step is to show that any point in $\Cldicfb \cap \Bldicfb$ can be achievable at an $\eta$-equilibrium $\forall \eta>0$. That is, 
\begin{equation}
\label{EqPart2LDICFB}
\Nldicfb \supseteq \Cldicfb \cap \Bldicfb,
\end{equation}
which proves Theorem \ref{TheoremMainResultLDICFB}.

\subsubsection{Proof of \eqref{EqPart1LDICFB}}\label{SecNonEquilibriumRatePairs}
The proof of \eqref{EqPart1LDICFB} is completed by the following lemma.
\begin{lemma}\label{LemmaLDICFBa}\emph{
A rate pair $(R_1, R_2) \in \Cldicfb$, with either
$ R_1  <  \left( \left( n_{11} - n_{12} \right)^+ - \eta \right)^{+}$ or 
$R_2  < \left(\left( n_{22} - n_{21} \right)^+ - \eta \right)^{+}$
is not achievable at an $\eta$-equilibrium, with $\eta \geqslant 0$ arbitrarily small.
}
\end{lemma}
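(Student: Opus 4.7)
The plan is to prove the contrapositive: if the candidate rate pair gives some player $i$ a rate strictly less than $(L_i - \eta)^+ = ((n_{ii} - n_{ij})^+ - \eta)^+$, then player $i$ has a unilateral deviation that increases its rate by strictly more than $\eta$, violating Definition \ref{DefEtaNE}. By symmetry it suffices to treat $i = 1$, so I would assume $R_1 < ((n_{11} - n_{12})^+ - \eta)^+$ is achieved by some equilibrium profile $(s_1^*, s_2^*)$ and exhibit a better unilateral action $s_1'$ for transmitter $1$.

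The key structural fact I would use is that the top $(n_{11} - n_{12})^+$ levels of $\bs{S}^{q-n_{11}} \bs{X}_{1,t}$ arrive at receiver $1$ entirely above the support $\bs{S}^{q-n_{12}} \bs{X}_{2,t}$ of the interference, as can be read off directly from \eqref{EqLDICsignals}. Thus, no matter which action $s_2^*$ transmitter $2$ plays, these top levels are interference-free at receiver $1$. My chosen deviation $s_1'$ for transmitter $1$ is then the obvious one: ignore feedback, use the top $(n_{11} - n_{12})^+$ levels to transmit i.i.d.\ uniform information bits at every channel use, and zero out the remaining levels. Receiver $1$ simply reads off these top levels, giving a reliable rate of exactly $(n_{11} - n_{12})^+ = L_1$ bits per channel use regardless of $s_2^*$.

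Hence $u_1(s_1', s_2^*) \geqslant L_1$, while by assumption $u_1(s_1^*, s_2^*) = R_1 < (L_1 - \eta)^+$. If $L_1 \leqslant \eta$ the inequality $R_1 < 0$ is vacuous, so I may assume $L_1 > \eta$; then
\begin{equation}
u_1(s_1', s_2^*) - u_1(s_1^*, s_2^*) > L_1 - (L_1 - \eta) = \eta,
\end{equation}
contradicting the $\eta$-NE condition \eqref{EqNashEquilibrium}. The analogous argument with the roles of the players swapped handles the case $R_2 < ((n_{22} - n_{21})^+ - \eta)^+$, completing the proof of Lemma \ref{LemmaLDICFBa} and therefore \eqref{EqPart1LDICFB}.

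The only delicate points I anticipate are bookkeeping ones rather than conceptual obstacles: verifying that the deviating strategy $s_1'$ satisfies the formal encoder description in \eqref{EqEnconderf} (which is trivial since $s_1'$ does not use feedback) and confirming that the probability-of-error constraint $p_1 < \epsilon$ in \eqref{EqUtility} is met so that the utility equals the rate rather than $0$ (immediate since the top levels are a noiseless bijection onto a subset of receiver $1$'s observations). Neither step requires new ideas beyond the interference-free-levels observation above.
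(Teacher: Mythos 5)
Your proof is correct and follows essentially the same route as the paper's: identify that the top $(n_{11}-n_{12})^+$ levels of transmitter $1$ arrive interference-free at receiver $1$ regardless of $s_2^*$, exhibit the deviation $s_1'$ that sends fresh bits on those levels, and observe this yields an improvement strictly exceeding $\eta$, contradicting the $\eta$-NE condition. The only addition over the paper's argument is your explicit handling of the vacuous case $L_1 \leqslant \eta$, which the paper leaves implicit.
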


\begin{proof}
Let $(s_1^*, s_2^*)$ be an $\eta$-NE transmit-receive configuration pair such that users $1$ and $2$ achieve the rates $R_1(s_1^*,s_2^*)$ and $R_2(s_1^*,s_2^*)$, respectively. Assume, without loss of generality, that $R_1(s^*_1,s^*_2) < \left(\left( n_{11} - n_{12} \right)^+ - \eta \right)^+$. Then, let $s'_1 \in \mathcal{A}_1$ be a transmit-receive configuration in which transmitter $1$ uses its $\left( n_{11} - n_{12} \right)^{+}$ top levels, which are interference free, to transmit new bits at each channel use $t$. Hence, it achieves a rate $R_1(s'_1,s_2^*) \geqslant  \left( n_{11} - n_{12} \right)^+$.  Note also that the utility improvement $R_1(s'_1,s_2^*) - R_1(s_1^*,s_2^*) > \eta$ is always possible independently of the current transmit-receive configuration $s_2$ of user $2$. Thus, it follows that the transmit-receive configuration pair $(s_1^*, s_2^*)$ is not an $\eta$-equilibrium. This completes the proof.
\end{proof}

\subsubsection{Proof of \eqref{EqPart2LDICFB}}

To continue with the second part of the proof of Theorem \ref{TheoremMainResultLDICFB}, consider a modification of the feedback coding scheme presented in \cite{Suh-TIT-2011}. The novelty consists of allowing users to introduce some random symbols into their common messages as done in \cite{Berry-TIT-2011} for the case of the IC without feedback. 
Let $\mathcal{X}_i = \lbrace 0, 1 \rbrace^{\max(n_{ii}, n_{ji})}$ be the set of all possible binary vectors of dimension $\max(n_{ii}, n_{ji})$. Let also $\mathcal{X}_{i,P} = \lbrace 0, 1 \rbrace^{(n_{ii} - n_{ji})^+}$ and $\mathcal{X}_{i,C} = \lbrace 0, 1 \rbrace^{n_{ji}}$ be the sets of all possible binary vectors of dimensions $(n_{ii} - n_{ji})^+$  and $n_{ji}$, respectively, such that $\mathcal{X}_i = \mathcal{X}_{i,P}  \times \mathcal{X}_{i,C}$. 
Thus, at each channel use $t$, transmitter $i$ sends the bits $\bs{X}_{i,t}\in \mathcal{X}_{i}$. Vector $\bs{X}_{i,t}\in \mathcal{X}_{i}$ can be expressed as a concatenation of the components of the vectors $\bs{X}_{i,t,P} \in \mathcal{X}_{i,P}$ and $\bs{X}_{i,t,C} \in \mathcal{X}_{i,C}$. The bits in $\bs{X}_{i,t,P}$ are exclusively seen by receiver $i$ in the case in which $n_{ii} > n_{ij}$ and thus, they play the role of a private message. 
The bits in $\bs{X}_{i,t,C}$ are seen either by both receivers or at least by receiver $j$. Thus, in the case in which these bits  $\bs{X}_{i,t,C}$ are seen by both receivers, they play the role of a common message. 
Let the message index $W_i^{(b)} \in \lbrace 1, \ldots, 2^{M_i}\rbrace$ during block $b$ be expressed in terms of indices $W_{i,P}^{(b)} \in \lbrace 1, \ldots, 2^{N_i R_{i,P}}\rbrace$ and $W_{i,C}^{(b)} \in \lbrace 1, \ldots, 2^{N_i R_{i,C}}\rbrace$, that is, $W_i^{(b)} = (W_{i,P}^{(b)}, W_{i,C}^{(b)})$. The rate $R_{i,P}$ is the number of bits exclusively and reliably  decoded by receiver $i$ per channel use; and the rate  $R_{i,C}$ is the number of bits reliably decoded by at least receiver $j$ per channel use, respectively. More specifically, $R_i = R_{i,P} + R_{i,C} = \frac{M_i}{N_i}$. 

The encoding of the message index $W_i^{(b)}$, during block $b$, is made following Markov superposition coding. At the end of channel use $t-1$, transmitter $i$ decodes from the feedback signal $\bs{Y}^{(b)}_{i,t-1}$, the common symbol sent by transmitter $j$, i.e., $\bs{X}^{(b)}_{j,t-1,C}$. 
Using $\bs{X}^{(b)}_{j,t-1,C}$ and its own common symbol $\bs{X}^{(b)}_{i,t-1,C}$, transmitter $i$ uses the encoding functions $f_{i,C}^{(t)}:  \mathcal{X}_{i,C} \times \mathcal{X}_{j,C} \times \lbrace 1, \ldots, 2^{N_i R_{i,C}}\rbrace \times \lbrace 1, \ldots, 2^{N_i R_{i,R}}\rbrace \rightarrow \lbrace 0,1 \rbrace^{n_{ji}}$ to generate the next common symbol $\bs{X}^{(b)}_{i,t,C} = f_{i,C}^{(t)}\left(\bs{X}^{(b)}_{i,t-1,C}, \bs{X}^{(b)}_{j,t-1,C}, W_{i,C}^{(b)}, \Omega_{i}^{(b)} \right)$, with the symbols of the first block $\bs{X}^{(1)}_{i,0,C}$ and $\bs{X}^{(1)}_{j,0,C}$, as well as,  the symbols of the last block $\bs{X}^{(B)}_{i,N_i,C}$ and $\bs{X}^{(B)}_{j,N_j,C}$, known at both receivers.
$\Omega_i^{(b)} \in \lbrace 1, \ldots, 2^{N_i R_{i,R}}\rbrace$ is the index of the randomly generated message index that is assumed to be known by both transmitter $i$ and receiver $i$. $R_{i,R}$ is the rate that represents the number of transmitted bits that are known by both transmitter $i$ and receiver $i$ per channel use.
To generate, $\bs{X}_{i,t,P}$, transmitter $i$ uses the encoding function $f_{i,P}^{(t)}:  \mathcal{X}_{i,C} \times \mathcal{X}_{j,C} \times \mathcal{X}_{i,C} \times \lbrace 1, \ldots, 2^{N_i R_{i,P}}\rbrace  \rightarrow \lbrace 0,1 \rbrace^{( n_{ii} - n_{ji} )^{+}}$, such that  $\bs{X}_{i,t,P} = f_{i,P}^{(t)} \left(\bs{X}^{(b)}_{i,t-1,C}, \bs{X}^{(b)}_{j,t-1,C}, \bs{X}^{(b)}_{i,t,C}, W_{i,P}^{(b)}  \right)$.
Note that the rate $R_{i,R}$ does not have an impact on the number of bits effectively transmitted by transmitter-receiver pair $i$  as $R_{i,R}$ bits are already known by receiver $i$ in each channel use.
This superposition coding implies that the pair of symbols  $(\bs{X}^{(b)}_{i,t-1,C}, \bs{X}^{(b)}_{j,t-1,C})$ determines a center of a cloud of symbols, the symbols $\bs{X}^{(b)}_{i,t,C}$ determines a smaller cloud of symbols inside the previous cloud and finally, the private symbol index ${W}_{i,P}^{(b)}$ determines a private symbol inside the smaller cloud.
This coding scheme is referred to as a \emph{randomized Han-Kobayashi coding scheme with feedback} and it is thoroughly described in Appendix \ref{AppProofOfLemmaLDICFBb}. 

The proof of \eqref{EqPart2LDICFB} uses the following results:
\begin{itemize}
\item Lemma \ref{LemmaLDICFBb} proves that the randomized Han-Kobayashi scheme with feedback achieves all the rate pairs $(R_{1},R_{2}) \in  \Cldicfb$; 
\item Lemma \ref{LemmaLDICFBc} provides the maximum rate improvement that a transmitter-receiver pair can obtain when it deviates from the randomized Han-Kobayashi scheme with feedback;
\item Lemma \ref{LemmaLDICFBd} proves that when the rates of the random components $R_{1,R}$ and $R_{2,R}$ are properly chosen, the randomized Han-Kobayashi scheme with feedback is an $\eta$-NE, with $\eta \geqslant 0$; and 
\item Lemma \ref{LemmaLDICFBe} shows that for all rate pairs in $\Cldicfb \cap \Bldicfb$ there always exists a randomized Han-Kobayashi scheme with feedback that is an $\eta$-NE and achieves such a rate pair. 
\end{itemize}
This verifies that $\Cldicfb \cap \Bldicfb \subseteq \Nldicfb$ and completes the proof of \eqref{EqPart2LDICFB}.

\begin{lemma}\label{LemmaLDICFBb}
\emph{ The achievable region of the randomized Han-Kobayashi coding scheme with feedback in the linear deterministic IC is the set of tuples
$(R_{1,C}, R_{1,R},R_{1,P},R_{2,C}, R_{2,R},R_{2,P})$ that satisfy the following conditions:
\begin{equation}\label{EqHK3}
\left\lbrace 
\begin{array}{lcl} 
R_{1,C} + R_{1,R} & \leqslant & n_{21}\\
R_{2,P}  & \leqslant & \left(n_{22} - n_{12} \right)^+ \\
R_{1,C} + R_{1,P} + R_{2,C} + R_{2,R} & \leqslant & \max(n_{11},n_{12})\\
R_{2,C} + R_{2,R} & \leqslant & n_{12}\\
R_{1,P}  & \leqslant & \left(n_{11} - n_{21} \right)^+ \\
R_{2,C} + R_{2,P} + R_{1,C} + R_{1,R} & \leqslant & \max(n_{22},n_{21}).
\end{array}
\right. 
\end{equation}
}
\end{lemma}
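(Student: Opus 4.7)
\textbf{Proof plan for Lemma \ref{LemmaLDICFBb}.}
The plan is to reuse the block-Markov superposition scheme of Suh--Tse \cite{Suh-TIT-2011} verbatim, but with the common message of transmitter $i$ enlarged to carry, in addition to the $N_{i}R_{i,C}$ useful common bits, an index $\Omega_{i}^{(b)}$ of rate $R_{i,R}$ that is revealed to receiver $i$ through the shared randomness but left unknown at receiver $j$. Concretely, I would (i) generate an i.i.d.\ codebook $\{\bs{X}_{i,C}(w_{i,C},\omega_{i}\,|\,\bs{X}_{i,C}^{\rm prev},\bs{X}_{j,C}^{\rm prev})\}$ of rate $R_{i,C}+R_{i,R}$ at the inner level of a superposition code whose cloud centers are the pair of common codewords of the previous block, (ii) superimpose on top of each common codeword a private codebook $\{\bs{X}_{i,P}(w_{i,P}\,|\,\bs{X}_{i,C},\bs{X}_{i,C}^{\rm prev},\bs{X}_{j,C}^{\rm prev})\}$ of rate $R_{i,P}$, and (iii) transmit the concatenation of the bottom $\min(n_{ii},n_{ji})$ (common) and top $(n_{ii}-n_{ji})^{+}$ (private) levels. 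The block-Markov structure is closed exactly as in Suh--Tse: at the end of block $b$ each transmitter recovers the other's common index from the feedback, which is possible provided $R_{j,C}+R_{j,R}\leqslant n_{ij}$.

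The next step is backward decoding. Starting from block $B$ (whose common messages are dummy and hence known), receiver $i$ in block $b$ already has $(W_{i,C}^{(b)},\Omega_{i}^{(b)},W_{j,C}^{(b)},\Omega_{j}^{(b)})$ from the previous decoding step and exploits it to recover $(W_{i,C}^{(b-1)},\Omega_{i}^{(b-1)},W_{j,C}^{(b-1)},\Omega_{j}^{(b-1)},W_{i,P}^{(b)},W_{j,P}^{(b)})$. Because receiver $i$ natively knows $\Omega_{i}^{(b-1)}$ via the shared randomness, the effective decoding list at receiver $i$ in block $b$ contains $R_{i,C}+R_{i,P}+R_{j,C}+R_{j,R}$ bits per channel use, while the received signal carries at most $\max(n_{ii},n_{ij})$ binary levels. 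A standard joint-typicality (equivalently, linear-independence) argument at receiver $1$ then yields the three constraints $R_{2,C}+R_{2,R}\leqslant n_{12}$, $R_{1,P}\leqslant(n_{11}-n_{21})^{+}$, and $R_{1,C}+R_{1,P}+R_{2,C}+R_{2,R}\leqslant\max(n_{11},n_{12})$, and the mirror-image analysis at receiver $2$ produces the remaining three constraints in \eqref{EqHK3}.

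For the individual-rate bounds of the codebook I would argue as follows: the private message of transmitter $i$ is placed on the top $(n_{ii}-n_{ji})^{+}$ levels (those invisible at receiver $j$), which directly forces $R_{1,P}\leqslant(n_{11}-n_{21})^{+}$ and $R_{2,P}\leqslant(n_{22}-n_{12})^{+}$; the common-plus-random message of transmitter $i$ is placed on its bottom $\min(n_{ii},n_{ji})$ levels, but it must additionally be resolvable at receiver $j$ through only the $n_{ji}$ interfering levels, hence $R_{i,C}+R_{i,R}\leqslant n_{ji}$. Since the achievable region arises from a single superposition scheme whose constraints are already in the form of six linear inequalities, no Fourier--Motzkin elimination of auxiliary variables is required.

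The main obstacle I anticipate is not the counting of levels, which is mechanical, but verifying that the introduction of the random sub-message $\Omega_{i}$ is harmless for feedback processing and backward decoding: specifically, I must check that transmitter $i$ still recovers $(W_{j,C}^{(b)},\Omega_{j}^{(b)})$ from its own channel output (so that the Markov cloud centers are synchronized across blocks), that the two-block residual error vanishes in $B$ so that the rate loss is asymptotically negligible, and that the superposition cloud sizes accommodate the extra rate $R_{i,R}$. Once these three bookkeeping items are verified, the six inequalities of \eqref{EqHK3} follow exactly as in the Suh--Tse proof, completing the argument.
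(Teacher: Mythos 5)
Your proposal is correct and follows essentially the same route as the paper's Appendix~\ref{AppProofOfLemmaLDICFBb}: you augment the Suh--Tse block-Markov common codebook with a random sub-index $\Omega_i$ of rate $R_{i,R}$ known only to transmitter--receiver pair $i$, use backward decoding, exploit the fact that receiver $i$ already knows $\Omega_i$ so it does not enter its own decoding list, and read the six rate constraints off the feedback-decoding and backward-decoding error analysis (the paper writes these first in mutual-information form before specializing to the LD level counts, but the content is identical). The only slip is cosmetic: the common sub-message occupies the \emph{top} $\min(n_{ii},n_{ji})$ levels of $\bs{X}_i$ (the ones visible at receiver $j$) and the private sub-message the next $(n_{ii}-n_{ji})^{+}$ levels, not the reverse as you wrote, though this relabeling does not affect any of the derived bounds.
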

The proof of Lemma \ref{LemmaLDICFBb} is presented in Appendix \ref{AppProofOfLemmaLDICFBb}.

The set of inequalities in \eqref{EqHK3} can be written in terms of the transmission rates $R_1 = R_{1,P} + R_{1,C}$ and $R_2 = R_{2,P} + R_{2,C}$, which yields the following conditions:
\begin{equation}\label{EqHK4}
\left\lbrace 
\begin{array}{lcl} 
R_{1,R}  & \leqslant &  n_{21} ,\\
R_{1} + R_{1,R} & \leqslant &\max(n_{11},n_{21}),\\
R_{1}  +  R_{2,R} & \leqslant &\max(n_{11},n_{12}),\\
R_{1}  + R_{2} + R_{2,R}   & \leqslant & \max(n_{11},n_{12}) + (n_{22} - n_{12})^{+},\\
R_{2,R}  & \leqslant &  n_{12}, \\
R_{2} + R_{2,R} & \leqslant &\max(n_{22},n_{12}),\\
R_{2} + R_{1,R} & \leqslant &\max(n_{22},n_{21}),\\
R_{2}  + R_{1} + R_{1,R}   & \leqslant &  \max(n_{22},n_{21})  + (n_{11} - n_{21})^{+}.
\end{array}
\right. 
\end{equation}
Note that $\forall (R_1,R_2) \in \Cldicfb$, there always exists a $R_{1,R} \geqslant 0$ and $R_{2,R} \geqslant 0$, such that $(R_1,R_{1,R},R_2,R_{2,R})$ satisfies the conditions in \eqref{EqHK4}.
Therefore, the relevance of Lemma \ref{LemmaLDICFBb} relies on the implication that any rate pair $(R_1,R_2) \in \Cldicfb$ is achievable by the randomized Han-Kobayashi coding scheme with feedback, under the assumption that the random common rates $R_{1,R}$ and $R_{2,R}$ are chosen accordingly to the conditions in  \eqref{EqHK4}.

The following lemma shows that when both transmitter-receiver links use the randomized Han-Kobayashi scheme with feedback and one of them unilaterally changes its coding scheme, it obtains a rate improvement that can be upper bounded. 
\begin{lemma}\label{LemmaLDICFBc}
\emph{Let $\eta \geqslant 0$ be an arbitrarily small number and let the rate tuple $\bs{R} = (R_{1,C} - \frac{\eta}{6}, R_{1,R} - \frac{\eta}{6},R_{1,P}- \frac{\eta}{6},R_{2,C} - \frac{\eta}{6}, R_{2,R} - \frac{\eta}{6}, R_{2,P}- \frac{\eta}{6})$ be achievable with the randomized Han-Kobayashi coding scheme with feedback such that $R_1 = R_{1,P} + R_{1,C} - \frac{1}{3}\eta$ and $R_2 = R_{2,P} + R_{2,C}- \frac{1}{3}\eta$. Then, any unilateral deviation of player $i$ by using any other coding scheme leads to a transmission rate $\tilde{R}_i$ that satisfies
\begin{IEEEeqnarray}{lcl}
\tilde{R}_{i} & \leqslant & \max\left( n_{ii}, n_{ij} \right) - (R_{j,C} + R_{j,R}) +  \frac{2}{3}\eta.
\end{IEEEeqnarray}
}
\end{lemma}

 \begin{IEEEproof} 
From Lemma \ref{LemmaLDICFBb}, it is known that for all rate tuples $(R_1,R_2) \in \Cldicfb$ there always exists a rate tuple $\bs{R} = (R_{1,C}, R_{1,R} ,R_{1,P},R_{2,C} , R_{2,R} ,R_{2,P})$, with $R_1 = R_{1,P} + R_{1,C}$ and $R_2 = R_{2,P} + R_{2,C}$ that satisfies \eqref{EqHK4}.
Assume that both transmitters achieve the rates $\bs{R}$ by using the randomized Han-Kobayashi scheme with feedback described in Appendix \ref{AppProofOfLemmaLDICFBb}.

Without loss of generality, let transmitter $1$ change its transmit-receive configuration while transmitter-receiver pair $2$ remains unchanged.
Let $B$ denote the number of blocks sent by both transmitters after the transmit-receive configuration change of transmitter-receiver pair $1$.  
Note that this new configuration can be arbitrary, i.e., it may or may not use feedback, and it may or may not use any random symbols. It can also use a new block length $\tilde{N}_1 \neq N_1$.  Hence, the total duration of this transmission is 
\begin{equation}
T=B\max(\tilde{N}_1, N_2)
\end{equation}
channel uses.
Denote by $W_1^{(b)}$ and $\Omega_{1}^{(b)}$ the message index and the random index of transmitter-receiver pair  $1$ during block $b$ after its deviation, with $b \in \lbrace 1, \ldots, B \rbrace$, $\bs{W}_{1} = \left(W_{1}^{(1)}, \ldots, W_{1}^{(B)} \right)$ and $\bs{\Omega}_{1} = \left(\Omega_{1}^{(1)}, \ldots, \Omega_{1}^{(B)} \right)$. 
Let also $\underline{\bs{\tilde{X}}}_1^{(b)} = \left( {\bs{\tilde{X}}_{1,1}^{(b)}}^{\sf{T}}, \ldots, {\bs{\tilde{X}}_{1,\tilde{N}_1}^{(b)}}^{\sf{T}} \right)^{\sf{T}}$ and $\underline{\bs{\tilde{Y}}}_1^{(b)} = \left( {\bs{\tilde{Y}}_{1,1}^{(b)}}^{\sf{T}}, \ldots, {\bs{\tilde{Y}}_{1,\tilde{N}_1}^{(b)}}^{\sf{T}} \right)^{\sf{T}}$ be the corresponding vector of outputs of transmitter $1$ and inputs  to receiver $1$ during block $b$, with $\bs{\tilde{X}}_{1,m}^{(b)}$ and $\bs{\tilde{Y}}_{1,m}^{(b)}$ two $q$-dimensional binary vectors for all $m \in \lbrace 1, \ldots, \tilde{N}_1\rbrace$ and for all $b \in \lbrace 1, \ldots, B \rbrace$. Hence, an upper bound for $\tilde{R}_1$ is obtained from the following inequalities:
\begin{align}
\label{EqConditionTildeR1LDICFB}
& T \tilde{R}_{1}   = H \left( \bs{W}_1 \right) = H \left( \bs{W}_1 | \bs{\Omega}_1 \right)  \nonumber\\
%
& = I\left( \bs{W}_1; \underline{\bs{\tilde{Y}}}_{1}^{(1)}, \ldots,\underline{\bs{\tilde{Y}}}_{1}^{(B)} |  \bs{\Omega}_1 \right) \nonumber \\ 
&\quad + H \left( \bs{W}_1 |  \bs{\Omega}_1, \underline{\bs{\tilde{Y}}}_{1}^{(1)}, \ldots,\underline{\bs{\tilde{Y}}}_{1}^{(B)} \right)  \nonumber\\
& \stackrel{(a)}{\leqslant} I\left( \bs{W}_1; \underline{\bs{\tilde{Y}}}_{1}^{(1)}, \ldots,\underline{\bs{\tilde{Y}}}_{1}^{(B)} |  \bs{\Omega}_1 \right) + T \delta_1(N_1')  \nonumber\\
& =  H\left(\underline{\bs{\tilde{Y}}}_{1}^{(1)}, \ldots,\underline{\bs{\tilde{Y}}}_{1}^{(B)} |  \bs{\Omega}_1 \right) - H\left(\underline{\bs{\tilde{Y}}}_{1}^{(1)}, \ldots,\underline{\bs{\tilde{Y}}}_{1}^{(B)} |  \bs{W}_{1}, \bs{\Omega}_1 \right) \nonumber\\
& \quad + T \delta_1(N_1') \nonumber \\
&  \stackrel{(b)}{\leqslant}   T \cdot\max\left( n_{11}, n_{12} \right)\nonumber
\\
& \quad - H\left(\underline{\bs{\tilde{Y}}}_{1}^{(1)}, \ldots,\underline{\bs{\tilde{Y}}}_{1}^{(B)} |  \bs{W}_{1}, \bs{\Omega}_1 \right) + T \delta_1(N_1'),
\end{align}
where, \newline
$(a)$ follows from Fano's inequality, as the rate $\tilde{R}_{1}$ is achievable by assumption, and thus, the all message indices $W_1^{(1)}, \ldots, W_1^{(B)}$ can be reliably decodable from $\underline{\bs{\tilde{Y}}}_1^{(1)}, \ldots, \underline{\bs{\tilde{Y}}}_1^{(B)}$ and $\bs{\Omega}_1$ after the deviation of transmitter-receiver pair $1$; and 
\newline
$(b)$ follows from the fact that  for all $b \in \lbrace 1, \ldots, B\rbrace$ and for all $m \in  \lbrace 1, \ldots, \tilde{N}_1 \rbrace$, $H(\tilde{\bs{Y}}_{1,m}^{(b)}|\tilde{\bs{Y}}_{1,1}^{(1)},\ldots,\tilde{\bs{Y}}_{1,m-1}^{(b)}, \bs{\Omega}_1 ) \leqslant H(\tilde{\bs{Y}}_{1,m}^{(b)}) \leqslant \max\left( n_{11}, n_{12} \right)$.

To refine this upper bound, the term $H\left(\underline{\bs{\tilde{Y}}}_{1}^{(1)}, \ldots,\underline{\bs{\tilde{Y}}}_{1}^{(B)} |  \bs{W}_{1}, \bs{\Omega}_1 \right)$ in \eqref{EqConditionTildeR1LDICFB} can be lower bounded.  Denote by $W_{2,C}^{(b)}$ and $\Omega_{2}^{(b)}$ the common message index and the random index of transmitter-receiver pair $2$ during block $b$ after the deviation of transmitter-receiver pair $1$, with $\bs{W}_{2,C} = \left(W_{2,C}^{(1)}, \ldots, W_{2,C}^{(B)} \right)$ and $\bs{\Omega}_{2,C} = \left(\Omega_{2,C}^{(1)}, \ldots, \Omega_{2,C}^{(B)} \right)$. Hence, the following holds:
\begin{align}
\nonumber
& T (R_{2,C} + R_{2,R})  = H(\bs{W}_{2,C}, \bs{\Omega}_2)\\
\nonumber
& \stackrel{(d)}{=} H(\bs{W}_{2,C}, \bs{\Omega}_2 | \bs{W}_{1}, \bs{\Omega}_1) \\
\nonumber
&  =  I(\bs{W}_{2,C}, \bs{\Omega}_2; \underline{\bs{\tilde{Y}}}_{1}^{(1)}, \ldots, \underline{\bs{\tilde{Y}}}_{1}^{(B)} | \bs{W}_{1}, \bs{\Omega}_1) \\
\nonumber
&  + H\left( \bs{W}_{2,C}, \bs{\Omega}_2| \underline{\bs{\tilde{Y}}}_{1}^{(1)}, \ldots, \underline{\bs{\tilde{Y}}}_{1}^{(B)}, \bs{W}_1, \bs{\Omega}_1 \right)\\
\nonumber
& \stackrel{(e)}{=}   I(\bs{W}_{2,C}, \bs{\Omega}_2; \underline{\bs{\tilde{Y}}}_{1}^{(1)}, \ldots, \underline{\bs{\tilde{Y}}}_{1}^{(B)} | \bs{W}_{1}, \bs{\Omega}_1) \\
\nonumber
&  + H\Big( \bs{W}_{2,C}, \bs{\Omega}_2| \underline{\bs{\tilde{Y}}}_{1}^{(1)}, \ldots, \underline{\bs{\tilde{Y}}}_{1}^{(B)}, \bs{W}_1, \bs{\Omega}_1, \underline{\bs{\tilde{X}}}_{1}^{(1)}, \ldots, \underline{\bs{\tilde{X}}}_{1}^{(B)} \Big)\\
\nonumber
&  \stackrel{(f)}{=}  I(\bs{W}_{2,C}, \bs{\Omega}_2; \underline{\bs{\tilde{Y}}}_{1}^{(1)}, \ldots, \underline{\bs{\tilde{Y}}}_{1}^{(B)} | \bs{W}_{1}, \bs{\Omega}_1) \\
\nonumber
 & + H\Big( \bs{W}_{2,C}, \bs{\Omega}_2| \underline{\bs{X}}_{2,C}^{(1)}, \ldots, \underline{\bs{X}}_{2,C}^{(B)}, \bs{W}_1, \bs{\Omega}_1, \underline{\bs{\tilde{X}}}_{1}^{(1)}, \ldots, \underline{\bs{\tilde{X}}}_{1}^{(B)} \Big)\\
\nonumber
&  \stackrel{(g)}{\leqslant}  I(\bs{W}_{2,C}, \bs{\Omega}_2; \underline{\bs{\tilde{Y}}}_{1}^{(1)}, \ldots, \underline{\bs{\tilde{Y}}}_{1}^{(B)} | \bs{W}_{1}, \bs{\Omega}_1)  + T \delta(N_2)\\
\label{EqConditionHY1W1}
& \leqslant  H(\underline{\bs{\tilde{Y}}}_{1}^{(1)}, \ldots, \underline{\bs{\tilde{Y}}}_{1}^{(B)} | \bs{W}_{1}, \bs{\Omega}_1) + T \delta(N_2),
\end{align}
where, \newline 
$(d)$ follows from the independence of the index messages $W_1^{(b)}$, $\Omega_{1}^{(b)}$, $W_2^{(b)}$, $\Omega_{2}^{(b)}$, for all $b \in \lbrace 1, \ldots, B \rbrace$; \newline
$(e)$ follows from the fact that the output of transmitter $i$ at the $m$-th channel use of block $b$ is a deterministic function of $W_{i}^{(b)}$ and the previous channel outputs $\underline{\tilde{\bs{Y}}}_{1}^{(1)}, \ldots,  \underline{\tilde{\bs{Y}}}_{\tilde{N}_1}^{(b-1)}, \tilde{\bs{Y}}_{1,1}^{(b)}, \ldots, \tilde{\bs{Y}}_{m-1}^{(b)}$; \newline
$(f)$ follows from the signal construction in \eqref{EqLDICsignals}; and finally, \newline
$(g)$ follows from Fano's inequality as the message indices $W_{2,C}^{(1)}, \ldots, W_{2,C}^{(B)}$ and $\Omega_{2}^{(1)}, \ldots, \Omega_{2}^{(B)}$ can be reliably decoded by receiver $1$ using the signals $\underline{\bs{X}}_{2,C}^{(1)}, \ldots, \underline{\bs{X}}_{2,C}^{(B)}$ as transmitter-receiver pair $2$ did not change its transmit-receive configuration. 

Substituting \eqref{EqConditionHY1W1} into \eqref{EqConditionTildeR1LDICFB}, it follows that:
\begin{IEEEeqnarray}{lcl}
\nonumber
\tilde{R}_{1} & \leqslant & \max\left( n_{11}, n_{12} \right) - \left( R_{2,C} + R_{2,R}\right) + \delta_{1}(N'_1) +  \delta_2(N_2).
\end{IEEEeqnarray}
Note that $ \delta_{1}(\tilde{N}_1)$ and $\delta_2(N_2)$ are monotonically decreasing functions of $\tilde{N}_1$ and $N_2$, respectively. Hence, there always exists an $\eta \geqslant 0$, such that
\begin{IEEEeqnarray}{lcl}
\nonumber
\tilde{R}_{1} & \leqslant &   \max\left( n_{11}, n_{12} \right) - \left( R_{2,C} + R_{2,R}\right) + \frac{2}{3}\eta.
\end{IEEEeqnarray}
The same can be proved for the other transmitter-receiver pair $2$ and this completes the proof.
\end{IEEEproof}
Lemma \ref{LemmaLDICFBc} reveals the relevance of the random symbols $\Omega_{1}$ and $\Omega_{2}$ used by the randomized Han-Kobayashi scheme in the construction of the common words $\bs{X}_{1,C}^{(b)}$ and $\bs{X}_{2,C}^{(b)}$ during block $b$, respectively. Even though the random symbols used by transmitter $j$ do not increase the effective transmission of data of the transmitter-receiver pair $j$, they strongly limit the rate improvement transmitter $i$ can obtain by deviating by the randomized Han-Kobayashi scheme.
This observation can be used to show that the randomized Han-Kobayashi scheme with feedback can be  an $\eta$-NE, when both $R_{1,R}$ and $R_{2,R}$ are properly chosen. For instance, for any achievable rate pair $(R_1,R_2) \in \Bldicfb \cap \Cldicfb$, there exists a randomized Han-Kobayashi scheme with feedback that achieves the rate tuple $\bs{R} = (R_{1,C} - \frac{\eta}{6}, R_{1,R} - \frac{\eta}{6}, R_{1,P}- \frac{\eta}{6},R_{2,C} - \frac{\eta}{6}, R_{2,R} - \frac{\eta}{6},R_{2,P}- \frac{\eta}{6})$, with $R_i = R_{i,P} + R_{i,C} - \frac{1}{3}\eta$ and $\eta$ arbitrarily small. Denote by $\tilde{R}_{i,\max} =  \max\left( n_{ii}, n_{ij} \right) - (R_{j,C} + R_{j,R}) + \frac{2}{3}\eta$ the maximum rate transmitter-receiver pair $i$ can obtain  by unilaterally deviating from its randomized Han-Kobayashi scheme. Then, if $\tilde{R}_{i,\max} - R_{i} \leqslant \eta$, any improvement obtained by either transmitter deviating from its randomized Han-Kobayashi scheme is bounded by $\eta$.  
The following lemma formalizes this observation.

\begin{lemma}\label{LemmaLDICFBd}
\emph{Let $\eta \geqslant 0$ be an arbitrarily small number and let the rate tuple $\bs{R} = (R_{1,C} - \frac{\eta}{6}, R_{1,R} - \frac{\eta}{6},R_{1,P} - \frac{\eta}{6},R_{2,C} - \frac{\eta}{6}, R_{2,R} - \frac{\eta}{6},R_{2,P}- \frac{\eta}{6})$ be achievable with the randomized Han-Kobayashi coding scheme with feedback and satisfy $\forall i \in \lbrace 1, 2 \rbrace$, 
\begin{IEEEeqnarray}{rcl}
\label{EqConditionNEa}
R_{i,P} + R_{i,C} & \geqslant & L_i - \frac{2}{3}\eta; and \\
\label{EqConditionNEc}
R_{i,C}  + R_{i,P} + R_{j,C} + R_{j,R} & = & \max(n_{ii},n_{ij}) + \frac{2}{3}\eta .
\end{IEEEeqnarray}
Then, the rate pair $(R_1,R_2)$, with $R_i = R_{i,C} + R_{i,P} - \frac{1}{3}\eta$ is a utility pair achieved at an $\eta$-NE equilibrium.
}
\end{lemma}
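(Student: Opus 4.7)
\noindent My plan is to verify the $\eta$-NE property of the randomized Han-Kobayashi scheme with feedback that realizes $\bs{R}$ by reducing the question to a one-line algebraic substitution that combines the deviation bound of Lemma~\ref{LemmaLDICFBc} with the tight equality hypothesis in \eqref{EqConditionNEc}. First I would invoke Lemma~\ref{LemmaLDICFBb} to fix the candidate equilibrium configuration $(s_1^*, s_2^*)$ as the pair of randomized Han-Kobayashi schemes that realize the tuple $\bs{R}$; this yields utilities $u_i(s_1^*, s_2^*) = R_i = R_{i,C}+R_{i,P}-\tfrac{1}{3}\eta$ for $i \in \lbrace 1, 2 \rbrace$.

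Next, fix $i$ and let $s_i$ be any alternative transmit-receive configuration played unilaterally by transmitter-receiver pair $i$ against $s_j^*$. Lemma~\ref{LemmaLDICFBc} gives
\begin{IEEEeqnarray}{lcl}
\tilde{R}_i & \leqslant & \max(n_{ii}, n_{ij}) - (R_{j,C}+R_{j,R}) + \tfrac{2}{3}\eta ,
\end{IEEEeqnarray}
and rearranging the equality \eqref{EqConditionNEc} for that same $i$ produces $R_{j,C}+R_{j,R} = \max(n_{ii}, n_{ij}) + \tfrac{2}{3}\eta - (R_{i,C}+R_{i,P})$. Direct substitution collapses the upper bound to $\tilde{R}_i \leqslant R_{i,C}+R_{i,P}$, so $\tilde{R}_i - R_i \leqslant \tfrac{1}{3}\eta \leqslant \eta$, which is precisely \eqref{EqNashEquilibrium}. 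Swapping the roles of $i$ and $j$ and invoking \eqref{EqConditionNEc} for the other index finishes the deviation check for the other player.

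I expect the main obstacle to be not this algebraic step but rather the feasibility bookkeeping that validates the hypotheses: one must exhibit non-negative randomized-common rates $R_{i,R}$, $R_{j,R}$ that simultaneously satisfy the Han-Kobayashi constraints \eqref{EqHK3}, the tight equality \eqref{EqConditionNEc}, and the lower bound \eqref{EqConditionNEa}, for every rate pair one wants to realize. Condition \eqref{EqConditionNEa} has a dual role here: it keeps the achieved rate above $L_i-\eta$ and thus consistent with the converse in Lemma~\ref{LemmaLDICFBa}, and, through the equality \eqref{EqConditionNEc}, it forces $R_{j,C}+R_{j,R} \leqslant n_{ij}$ so that Lemma~\ref{LemmaLDICFBb} is applicable. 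Once this feasibility check is in place, the $\eta$-NE conclusion follows immediately from the substitution above.
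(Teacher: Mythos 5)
Your proposal is correct and follows essentially the same route as the paper's own argument: you invoke Lemma~\ref{LemmaLDICFBc} for the deviation bound, substitute the tight equality~\eqref{EqConditionNEc}, and collapse the bound to $\tilde{R}_i \leqslant R_{i,C}+R_{i,P}$, hence $\tilde{R}_i - R_i \leqslant \tfrac{1}{3}\eta \leqslant \eta$. The only cosmetic difference is that the paper packages this substitution as a proof by contradiction (assume a deviating strategy improves by more than $\eta$ and derive $\tilde{R}_1 \leqslant R_{1,C}+R_{1,P}$, contradicting $\tilde{R}_1 \geqslant R_{1,C}+R_{1,P}+\tfrac{2}{3}\eta$), whereas you state the inequality directly; these are equivalent. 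You also correctly flag the remaining bookkeeping as checking that the random common rates can be chosen non-negatively while satisfying \eqref{EqHK3}, \eqref{EqConditionNEa} and \eqref{EqConditionNEc} simultaneously — the paper handles this via~\eqref{EqR1cR1p} and the bound $R_{2,C}+R_{2,R} < n_{12}+\tfrac{1}{3}\eta$, with which your observation that \eqref{EqConditionNEa} together with \eqref{EqConditionNEc} forces $R_{j,C}+R_{j,R}$ to stay near $n_{ij}$ agrees up to a small $\eta$-correction term that vanishes as $\eta$ is taken arbitrarily small.
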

\begin{IEEEproof}
Let $(s_1^*,s_2^*) \in \mathcal{A}_1 \times \mathcal{A}_2$ be a transmit-receive configuration pair, in which the individual strategy $s_i^*$ is a randomized Han-Kobayashi scheme with feedback satisfying conditions \eqref{EqConditionNEa}-\eqref{EqConditionNEc}. 
From the assumptions of the lemma, it follows that $(s_1^*,s_2^*) $ is an $\eta$-NE at which $u_1(s_1^*,s_2^*) = R_{1,C} + R_{1,P} - \frac{1}{3}\eta$ and $u_2(s_1^*,s_2^*) = R_{2,C} + R_{2,P} - \frac{1}{3}\eta$.  
 
 \noindent
Consider that such a transmit-receive configuration pair  $(s_1^*,s_2^*)$ is not an $\eta$-NE. Then, from Def. \ref{DefEtaNE}, there exist at least one $i \in \lbrace 1, 2 \rbrace$ and at least one strategy $s_i \in \mathcal{A}_i$ such that the utility $u_i$ is improved by at least $\eta$ bits per channel use when player $i$ deviates from $s_i^*$ to $s_i$. Without loss of generality, let $ i = 1$ be the deviating user and denote by $\tilde{R}_{1}$ the rate achieved after the deviation. Then, 
\begin{equation}\label{EqConditionNEz}
u_1(s_1,s_2^*) = \tilde{R}_1 \geqslant u_1(s_1^*,s_2^*) + \eta = R_{1,C} + R_{1,P} + \frac{2}{3}\eta.
\end{equation}
However, from Lemma \ref{LemmaLDICFBc}, it follows that
\begin{IEEEeqnarray}{lcl}
\label{EqR1tildeLDICFB}
\tilde{R}_{1} & \leqslant & \max\left( n_{11}, n_{12} \right) - (R_{2,C} + R_{2,R}) +  \frac{2}{3}\eta,
\end{IEEEeqnarray}
and from the assumption in \eqref{EqConditionNEc}, with $i = 1$, i.e.,
\begin{equation}
\label{EqR2cR2r}
R_{2,C}+ R_{2,R} = \max(n_{11},n_{12}) - (R_{1,C} + R_{1,P}) + \frac{2}{3}\eta, 
\end{equation}
it follows that
\begin{IEEEeqnarray}{lcl}
\label{EqR1tildeLDICFB1}
\tilde{R}_{1} & \leqslant & R_{1,C} + R_{1,P},
\end{IEEEeqnarray}
which is a contradiction for any $\eta \geqslant 0$, given the initial assumption \eqref{EqConditionNEz}. Hence, this proves that  there does not exist another coding scheme that brings an individual utility improvement higher than $\eta$. 
Note that from \eqref{EqConditionNEa}, with $i =1$, it follows that
\begin{equation}
\label{EqR1cR1p}
R_{1,C} + R_{1,P} \geqslant  (n_{11} - n_{12})^{+} + \frac{1}{3}\eta. 
\end{equation}
Then, combining  \eqref{EqR2cR2r} and \eqref{EqR1cR1p}, it yields $R_{2,C}+R_{2,R} < n_{12} + \frac{1}{3}\eta$. This verifies that there always exists a rate $R_{2,R}$ that simultaneously satisfies  \eqref{EqConditionNEa}-\eqref{EqConditionNEc} and the corresponding conditions in \eqref{EqHK3}.
The same can be proved for the other transmitter-receiver pair. This completes the proof.
\end{IEEEproof}
The following lemma shows that all the rate pairs $(R_{1}, R_{2}) \in \Cldicfb  \cap \Bldicfb$ can be achieved by at least one $\eta$-NE.
\begin{lemma}\label{LemmaLDICFBe}\emph{
Let $\eta \geqslant 0$ be an arbitrarily small number. Then, for all rate pairs $(R_{1}, R_{2}) \in  \Cldicfb  \cap \Bldicfb$, there always exists at least one $\eta$-NE transmit-receive configuration pair $(s_1^*,s_2^*) \in \mathcal{A}_1\times\mathcal{A}_2$, such that $u_1(s_1^*,s_2^*) = R_{1}$ and $u_2(s_1^*,s_2^*) = R_{2}$.
}
\end{lemma}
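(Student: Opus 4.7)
The plan is to construct, for each target rate pair $(R_1,R_2) \in \Cldicfb \cap \Bldicfb$, an explicit rate decomposition $(R_{1,C}, R_{1,R}, R_{1,P}, R_{2,C}, R_{2,R}, R_{2,P})$ satisfying both the achievability conditions in Lemma \ref{LemmaLDICFBb} and the equilibrium conditions in Lemma \ref{LemmaLDICFBd}. Once such a decomposition is exhibited, Lemma \ref{LemmaLDICFBd} directly supplies a randomized Han-Kobayashi transmit-receive configuration pair $(s_1^*,s_2^*)$ that is an $\eta$-NE and achieves utilities $(R_1,R_2)$, which is precisely the claim.

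The first step is to choose the private/common split for the target rates. Using the same split as for the non-feedback Han-Kobayashi decomposition, I would set $R_{i,P} = \min\!\big(R_i, (n_{ii}-n_{ij})^+\big)$ and $R_{i,C} = R_i - R_{i,P}$ for $i \in \{1,2\}$. Since $R_i \geq (L_i - \eta)^+ = ((n_{ii}-n_{ij})^+ - \eta)^+$ by the assumption $(R_1,R_2) \in \Bldicfb$, condition \eqref{EqConditionNEa} of Lemma \ref{LemmaLDICFBd}, namely $R_{i,P} + R_{i,C} \geq L_i - \tfrac{2}{3}\eta$, follows immediately.

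The second step is to use the equality in \eqref{EqConditionNEc} to \emph{define} the random rates by
\begin{equation}
R_{j,R} = \max(n_{ii},n_{ij}) - (R_{i,C}+R_{i,P}) - R_{j,C} + \tfrac{2}{3}\eta, \quad i \neq j.
\end{equation}
I would then verify that $R_{j,R} \geq 0$ and that the full tuple $(R_{1,C}, R_{1,R}, R_{1,P}, R_{2,C}, R_{2,R}, R_{2,P})$ satisfies every inequality in \eqref{EqHK3} (equivalently \eqref{EqHK4}). Non-negativity of $R_{j,R}$ reduces to the sum-rate bound $R_i + R_j \leq \max(n_{ii},n_{ij}) + (n_{jj}-n_{ji})^+$, which is exactly one of the inequalities defining $\Cldicfb$ in Lemma \ref{LemmaCldicfb}. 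Likewise, the individual rate constraints in \eqref{EqRegionCwFB1}--\eqref{EqRegionCwFB2} yield the bounds $R_{j,R} \leq n_{ij}$ and $R_j + R_{j,R} \leq \max(n_{jj},n_{ji})$; the remaining inequalities of \eqref{EqHK4} follow similarly from the capacity-region bounds in Lemma \ref{LemmaCldicfb} applied to both orderings of the users.

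The main obstacle will be the case analysis required in step two: the bound $\max(n_{ii},n_{ij})$ appearing in the equality \eqref{EqConditionNEc} behaves differently in the weak-interference regime ($n_{ij} < n_{ii}$) and in the strong-interference regime ($n_{ij} \geq n_{ii}$), and for extreme points of $\Cldicfb \cap \Bldicfb$ one may need to re-balance the split, for instance by moving some rate from $R_{i,C}$ to $R_{i,P}$ (when $n_{ii}>n_{ij}$) to keep the Han-Kobayashi inequalities on the common/private mix tight. I would handle this by treating the two regimes separately and, within each, verifying that the choice above simultaneously satisfies \eqref{EqHK3} and the equality in \eqref{EqConditionNEc} using the sum-rate and single-rate bounds of $\Cldicfb$ together with the lower bound $R_i \geq (L_i-\eta)^+$ of $\Bldicfb$. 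Once this verification is carried out in all regimes, invoking Lemma \ref{LemmaLDICFBd} completes the argument and thus establishes \eqref{EqPart2LDICFB} and Theorem \ref{TheoremMainResultLDICFB}.
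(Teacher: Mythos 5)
Your overall strategy matches the paper's exactly: both proofs invoke Lemma \ref{LemmaLDICFBd}, then argue that for each $(R_1,R_2) \in \Cldicfb \cap \Bldicfb$ one can find a rate tuple $(R_{1,C},R_{1,R},R_{1,P},R_{2,C},R_{2,R},R_{2,P})$ satisfying both the achievability constraints \eqref{EqHK3} and the equilibrium conditions \eqref{EqConditionNEa}--\eqref{EqConditionNEc}. Where the paper reduces the system to \eqref{EqNERegion2} by a Fourier--Motzkin-style elimination and concludes ``by inspection,'' you propose an explicit closed-form decomposition, which is a more constructive route to the same destination.

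However, the specific split you write down is not correct. You set $R_{i,P} = \min\!\big(R_i, (n_{ii}-n_{ij})^+\big)$, i.e., you use $(n_{ii}-n_{ij})^+ = L_i$, which is the cap on the levels that are \emph{interference-free at receiver $i$}. But the achievability condition in \eqref{EqHK3} requires $R_{i,P} \leqslant (n_{ii}-n_{ji})^{+}$ (note the subscripts: $n_{ji}$, not $n_{ij}$), since a private symbol of transmitter $i$ is one that lies \emph{below the cross gain to receiver $j$} and so is hidden from it. In the symmetric case $n_{12}=n_{21}$ the two quantities coincide, which may be why the slip went unnoticed, but in an asymmetric channel with $n_{ij} < n_{ji}$ one has $(n_{ii}-n_{ij})^{+} > (n_{ii}-n_{ji})^{+}$, and your split can make $R_{i,P}$ exceed the bound in \eqref{EqHK3}. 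The fix is to take $R_{i,P} = \min\!\big(R_i, (n_{ii}-n_{ji})^{+}\big)$; your subsequent verification then goes through essentially unchanged, since condition \eqref{EqConditionNEa} only constrains the sum $R_{i,P}+R_{i,C}=R_i$ (which is guaranteed by $(R_1,R_2)\in\Bldicfb$) and is insensitive to the split. Your remark about ``moving some rate from $R_{i,C}$ to $R_{i,P}$'' is also backwards for this failure mode --- the problem is having put \emph{too much} into $R_{i,P}$, so the re-balancing goes the other way. Beyond this, your non-negativity argument for $R_{j,R}$ via the sum-rate bound of $\Cldicfb$ is correct and is exactly the content of the paper's ``by inspection'' check of \eqref{EqNERegion2}; you still owe the reader verification of the remaining constraints $R_{i,C}+R_{i,R}\leqslant n_{ji}$ and $R_{j,C}+R_{j,R}\leqslant n_{ij}$, which follow from the single-rate bounds \eqref{EqRegionCwFB1}--\eqref{EqRegionCwFB2} and $R_i \geqslant L_i$, but you should state this explicitly rather than relegating it to ``the remaining inequalities follow similarly.''
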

\begin{proof}
From Lemma \ref{LemmaLDICFBd}, it is known that a transmit-receive configuration pair $(s_1^*,s_2^*)$ in which each player's transmit-receive configuration is the randomized Han-Kobayshi scheme with feedback satisfying conditions \eqref{EqConditionNEa}- \eqref{EqConditionNEc} is an $\eta$-NE and achieves any rate tuple $(R_{1,C}, R_{1,R}, R_{1,P},R_{2,C}, R_{2,R}, R_{2,P})$. Thus, from the conditions in \eqref{EqHK3}, \eqref{EqConditionNEa}-\eqref{EqConditionNEc}, it follows that
\begin{equation}\label{EqNERegion1}
\left\lbrace 
\begin{array}{ll} 
R_{1,C} + {R}_{1,P}  & \geqslant  (n_{11} - n_{12})^{+},\\
R_{1,C} + R_{1,R}   & \leqslant  n_{21},\\
R_{2,P}  & \leqslant  \left(n_{22} - n_{12} \right)^+, \\
R_{1,C} + R_{1,P} + R_{2,C} + R_{2,R} & =  \max(n_{11},n_{12}),\\
R_{2,C} + {R}_{2,P}  & \geqslant  (n_{22} -n_{21})^{+},\\
R_{2,C} + R_{2,R}   & \leqslant  n_{12},\\
R_{1,P}  & \leqslant  \left(n_{11} - n_{21} \right)^+, \\
R_{2,C} + R_{2,P} + R_{1,C} + R_{1,R} & =  \max(n_{22},n_{21}).
\end{array}
\right. 
\end{equation}
The region characterized by \eqref{EqNERegion1} can be written in terms of $R_{1} = R_{1,C} + R_{1,P}$ and $R_{2} = R_{2,C} + R_{2,P}$. This yields
\begin{equation}\label{EqNERegion2}
\left\lbrace 
\begin{array}{lcl} 
R_{1} & \geqslant & (n_{11} -n_{12})^{+}\\
\nonumber
R_{1,R} & \leqslant & n_{21}\\
\nonumber
R_{1} + R_{1,R} & \leqslant & \max\left(n_{11}, n_{21} \right)\\
\nonumber
R_{1} + R_{2,R} & \leqslant & \max\left(n_{11}, n_{12} \right)\\
R_{1} + R_{2} + R_{2,R} & \leqslant  & \max\left(n_{11},n_{12}\right) + \left(n_{22} - n_{12}\right)^{+}\\
R_{2} & \geqslant & (n_{22} -n_{21})^{+}\\
\nonumber
R_{2,R} & \leqslant &n_{12}\\
\nonumber
R_{2} + R_{2,R} & \leqslant & \max\left(n_{22}, n_{12} \right)\\
\nonumber
R_{2} + R_{1,R} & \leqslant & \max\left(n_{22}, n_{21} \right)\\
R_{2} + R_{1} + R_{1,R} & \leqslant  & \max\left(n_{22},n_{21}\right) + \left(n_{11} - n_{21}\right)^{+}.
\end{array}
\right. 
\end{equation}
Finally, it is proved by inspection that for all $(R_{1},R_{2})\in \Cldicfb \cap \Bldicfb$, with $R_{i} = R_{i,C} + {R}_{i,P}$ and $i \in \lbrace 1, 2 \rbrace$, there always exist an $R_{1,R} \geqslant 0$ and  an $R_{2,R} \geqslant 0$ such that conditions \eqref{EqNERegion2} are always met and thus, the rate pair $(R_{1},R_{2})$ can be achieved at an $\eta$-NE.
This completes the proof.
\end{proof}

\subsection{Discussion}

 \begin{figure*}[t]
 \centerline{\epsfig{figure=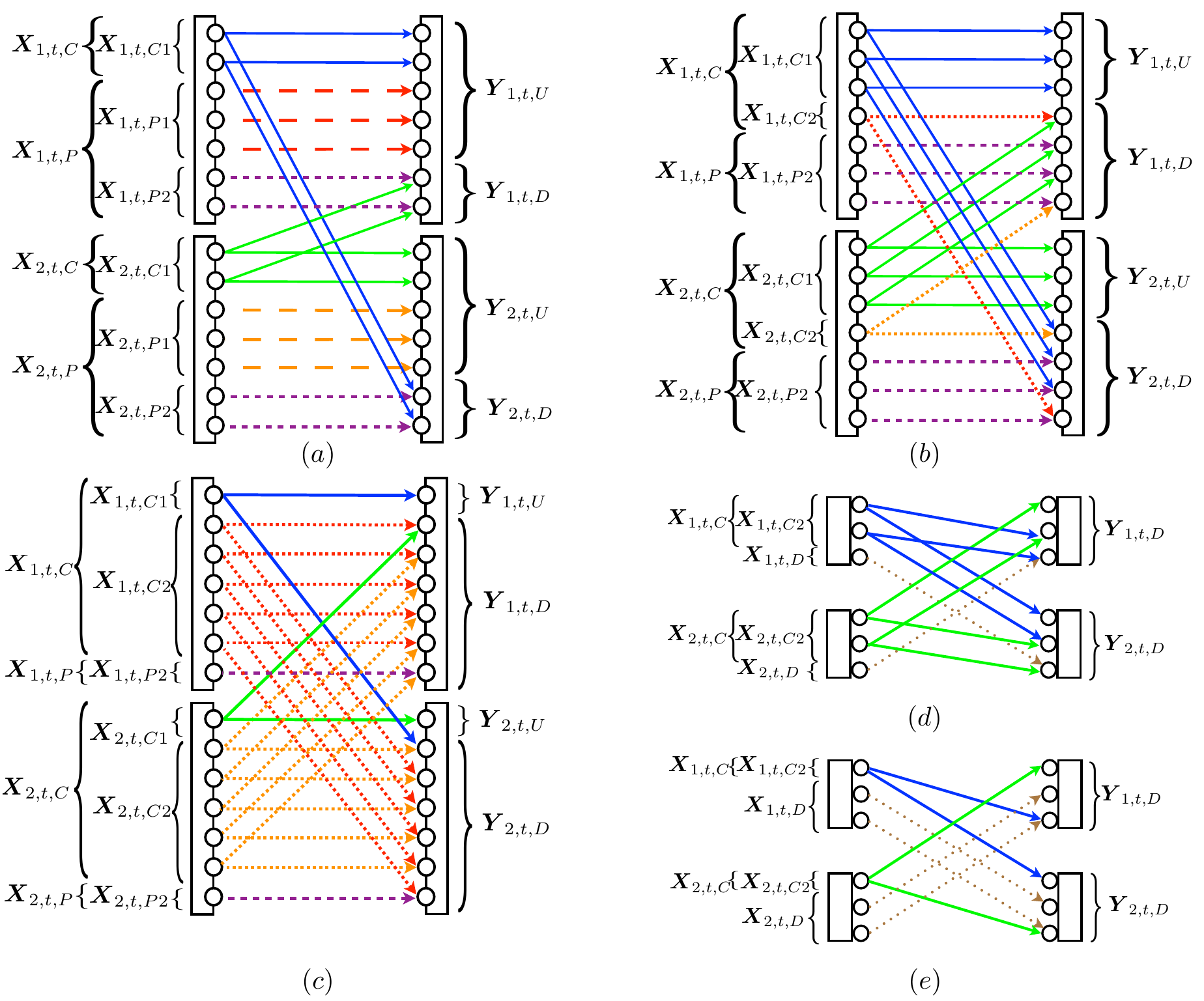,width=.85\textwidth}}
  \caption{Different components of channel input $\bs{X}_{i,t} = \left( \bs{X}_{i,t,C}, \bs{X}_{i,t,P},\bs{X}_{i,t,D}\right)$ and channel output $\bs{Y}_{i,t} = \left( \bs{Y}_{i,t,U}, \bs{Y}_{i,t,D}\right)$ at channel use $t$, with $t \in \lbrace 1, \ldots, \max(N_1, N_2)\rbrace$ and $i \in \lbrace 1, 2 \rbrace$, in the very weak interference regime $(a)$, weak interference regime $(b)$, moderate interference regime $(c)$, strong interference regime $(d)$ and very strong interference regime $(e)$.}
\label{FigMessagesLDICFB}
\end{figure*}

In this section, some properties of the $\eta$-NE transmit-receive configurations are highlighted. For this purpose, special notation is introduced. At each channel use $t$, the channel input vector $\bs{X}_{i,t}$ in \eqref{EqXt} can be written as the concatenation of three vectors: 
$\bs{X}_{i,t,P}$, $\bs{X}_{i,t,C}$ and $\bs{X}_{i,t,D}$, i.e., $\bs{X}_{i,t} = \left(\bs{X}_{i,t,C}^{\sf{T}},\bs{X}_{i,t,P}^{\sf{T}}, \bs{X}_{i,t,D}^{\sf{T}}\right)^{\sf{T}}$, as shown in Fig. \ref{FigMessagesLDICFB}. More specifically, 
\begin{itemize}
\item $\bs{X}_{i,t,C}$ contains the levels at transmitter $i$ that are observed at both receiver $i$ and receiver $j$ and thus,
\begin{IEEEeqnarray}{lcl}
\dim{\bs{X}_{i,t,C}} & = & \min\left( n_{ii}, n_{ji} \right) ; 
\end{IEEEeqnarray}
\item $\bs{X}_{i,t,P}$ contains the levels at transmitter $i$ that are observed only at receiver $i$, and thus,
\begin{IEEEeqnarray}{lcl}
\dim{\bs{X}_{i,t,P}} & = & (n_{ii} - n_{ji})^+; \mbox{ and }
\end{IEEEeqnarray}
\item $\bs{X}_{i,t,D}$ contains the levels at transmitter $i$ that are observed only at receiver $j$, and thus,
\begin{IEEEeqnarray}{lcl}
\dim{\bs{X}_{i,t,D}} & = & (n_{ji} - n_{ii})^+.
\end{IEEEeqnarray}
\end{itemize}
Note that vectors $\bs{X}_{i,t,P}$ and $\bs{X}_{i,t,D}$ do not exist simultaneously. The former exists when $n_{ii} > n_{ji}$, while the latter exists when $n_{ii} < n_{ji}$.  
Vector $\bs{X}_{i,t,C}$ can be written as the concatenation of two vectors: $\bs{X}_{i,t,C1}$ and $\bs{X}_{i,t,C2}$, i.e., $\bs{X}_{i,t,C} = \left(\bs{X}_{i,t,C1}^{\sf{T}}, \bs{X}_{i,t,C2}^{\sf{T}}\right)^{\sf{T}}$.
Vector $\bs{X}_{i,t,C1}$ (resp. $\bs{X}_{i,t,C2}$) contains the levels of $\bs{X}_{i,t,C}$ that are seen at receiver $i$ without any interference (resp. with interference of transmitter $j$). Hence,
\begin{IEEEeqnarray}{lcl}
\dim{\bs{X}_{i,t,C1}} & = & \min\left( \left(n_{ii} - n_{ij}\right)^{+}, n_{ji}\right), \mbox{ and }\\
\nonumber
\dim{\bs{X}_{i,t,C2}} & = &  \min\left( n_{ii}, n_{ji} \right) -  \min\left( \left( n_{ii} - n_{ij}\right)^{+}, n_{ji}\right).
\end{IEEEeqnarray}
Vector $\bs{X}_{i,t,P}$ can also be written as the concatenation of two vectors: $\bs{X}_{i,t,P1}$ and $\bs{X}_{i,t,P2}$, i.e., $\bs{X}_{i,t,P} = \left(\bs{X}_{i,t,P1}^{\sf{T}}, \bs{X}_{i,t,P2}^{\sf{T}}\right)^{\sf{T}}$. Vector $\bs{X}_{i,t,P1}$ (resp. $\bs{X}_{i,t,P2}$) contains the levels of $\bs{X}_{i,t,P}$ that are seen at receiver $i$ without any interference (resp. with interference of transmitter $j$). Hence,
\begin{IEEEeqnarray}{lcl}
\dim{\bs{X}_{i,t,P1}} & = & \left( \left( n_{ii} - n_{ji} \right)^{+} - n_{ij}\right)^{+} \mbox{ and }\\
\dim{\bs{X}_{i,t,P2}} & = &  \min\left( \left( n_{ii} - n_{ji} \right)^{+} , n_{ij}\right)^{+} .
\end{IEEEeqnarray}
The channel output $\bs{Y}_{i,t}$ can also be written as a concatenation of two vectors: $\bs{Y}_{i,t,U}$ and $\bs{Y}_{i,t,D}$, i.e., $\bs{Y}_{i,t} = \left(\bs{Y}_{i,t,U}^{\sf{T}}, \bs{Y}_{i,t,D}^{\sf{T}}\right)^{\sf{T}}$. Vectors $\bs{Y}_{i,t,D}$ and $\bs{Y}_{i,t,U}$ contain the levels in $\bs{Y}_{i,t}$ that are received with and without the interference from transmitter $j$, respectively. Hence,
\begin{IEEEeqnarray}{lcl}
\dim{\bs{Y}_{i,t,U}} & = & (n_{ii} - n_{ij})^+,\\
\dim{\bs{Y}_{i,t,D}} & = & n_{ij}.
\end{IEEEeqnarray}

Using this notation, the first property of an $\eta$-NE transmit-receive configuration is highlighted by the following remark.

\textbf{Remark $\mathbf{5}$:} A necessary condition for a transmit-receive configuration pair ($s_1^*,s_{2}^*)$ to be an $\eta$-NE, is that independently of $s_j^*$, $s_i^*$ requires that the levels  $\bs{X}_{i,t,C1}$ and $\bs{X}_{i,t,P1}$ are used at each channel use $t$ either to transmit new bits to receiver $i$ and increase the individual rate; or to re-transmit bits previously transmitted by transmitter $j$ that are needed at the receiver $i$ for canceling interference. This reasoning is aligned with intuition as these levels are seen interference-free at receiver $i$. If $s_i^*$ does not use these bits for at least one of these two purposes at each channel use, there always exists another configuration that does this and thus, it contradicts the fact that ($s_1^*,s_{2}^*)$ is an $\eta$-NE, with $\eta$ arbitrarily small.
 
\textbf{Remark $\mathbf{6}$:} All levels in $\bs{X}_{i,t,C2}$ are potentially subject to the interference of transmitter $j$. Hence, if at an $\eta$-NE, transmitter $j$ is interfering at levels $\bs{X}_{i,t,C2}$ with bits that are independent of bits previously sent by transmitter $i$, no new bits can be sent by transmitter $i$ via $\bs{X}_{i,t,C2}$ to increase its individual rate, as the interference cannot be cancelled. Alternatively, if transmitter $j$ is interfering at levels $\bs{X}_{i,t,C2}$ with bits previously sent by transmitter $i$ or not interfering at all, then new bits are sent by transmitter $i$ at each channel use via $\bs{X}_{i,t,C2}$. This implies that there exists a set of NEs transmit-receive configurations in which some levels (or all levels) in $\bs{X}_{i,t,C2}$ are not used to transmit new information. Similarly, all levels in $\bs{X}_{i,t,D}$ are seen only by receiver $j$. Thus, if transmitter $j$ does not re-transmit these bits to receiver $i$, these levels are not used by transmitter $i$ to send new bits at each channel use $t$. Hence, there exists a set of NEs at which some levels (or all levels) in $\bs{X}_{i,t,D}$ are not used by transmitter $i$ to increase its own rate.

Finally, from Remark $5$ and Remark $6$, it becomes clear that for every bit of interference that is cancelled by receiver $i$, transmitter $i$ needs to re-transmit the corresponding interfering bit via a level that is being reliably decoded at receiver $i$. Hence, the same rate can be obtained by a transmit-receive configuration in which feedback is not used. However, as both configurations provide the same rate, transmitter-receiver pair $i$ indifferently uses either of them, as it does not make any difference with respect to the utility function in \eqref{EqUtility}. 
Nonetheless, the situation is dramatically different for transmitter-receiver $j$ when transmitter-receiver pair $i$ decides whether to or not to re-transmit interfering bits. Indeed, for every bit that transmitter-receiver $i$ decides to retransmit to cancel interference, transmitter-receiver pair $j$ gains one additional bit per channel use. This is basically because that re-transmitted bit has been already reliably decoded by receiver $j$ and thus, its interference can be cancelled. This highlights the altruistic nature of using feedback, as first suggested in \cite{Perlaza-ALLERTON-2012}.

\section{Gaussian Interference Channel with Feedback}\label{SecGICFB}

This section presents an approximation to within $1$ bit per channel use of the $\eta$-NE region of the GIC with feedback (GIC-FB), with $\eta \geqslant 1$ bits per channel use. This result is given in terms of existing inner and outer bounds of the capacity region of the GIC-FB, which are briefly described hereunder.

\subsection{Preliminaries and Existing Results}

The following definition provides a formal description of a class of bounds known as ``approximation to within $b$ units''.
\begin{definition}[Approximation to within $\xi$ units]\label{DefApproximationInNbis}\emph{
A closed and convex region $\mathcal{X} \subset \mathds{R}_{+}^n$ is approximated to within $\xi$ units if there exist two sets $\underline{\mathcal{X}}$ and $\overline{\mathcal{X}}$ such that $\underline{\mathcal{X}} \subseteq \mathcal{X}  \subseteq \overline{\mathcal{X}}$ and $\forall \bs{x}= \left(x_1, \ldots, x_n\right) \in \overline{\mathcal{X}}$ then $\left((x_1-\xi)^+, \ldots, (x_n-\xi)^+\right) \in \underline{\mathcal{X}}$.
}
\end{definition}

Using Def. \ref{DefApproximationInNbis} existing results can be easily described.

\subsubsection{Capacity Region of the Gaussian IC}

The capacity region of the real GIC without feedback is denoted by $\Cgic$. An exact characterization of $\Cgic$ is known only for the case of the very weak interference regime \cite{Shang-TIT-2009, Motahari-TIT-2009,Annapureddy-TIT-2009} and the very strong interference regime \cite{Han-TIT-1981, Carleial-TIT-1978}. In all the other regimes, the capacity region is approximated to within half a bit per dimension \cite{Etkin-TIT-2008} (see Def. \ref{DefApproximationInNbis}). The approximation in \cite{Etkin-TIT-2008} is given in terms of two regions: (a) A region $\underR$ that is achievable with a ``simplified'' Han-Kobayashi scheme; and (b) an outer bound of the capacity region, denoted by $\overR$. The full descriptions of both $\underR$ and $\overR$  are available in \cite{Etkin-TIT-2008}.

\subsubsection{$\eta$-NE Region of the Gaussian IC}
The $\eta$-NE region of the GIC without feedback is denoted by $\Ngic$ and it has been approximated to within half a bit per dimension in \cite{Berry-TIT-2011}. This approximation is given in terms of two other regions: (a) $\underBgic$ that acts as an inner bound; and (b) $\overBgic$ that acts as an outer bound. Here, for the case of the real GIC, it follows that
\begin{IEEEeqnarray}{lcl}
\nonumber
\overBgic & = & \lbrace (R_1,R_2) : L_i \leqslant R_i \leqslant \tilde{U}_i, \; i \in \lbrace 1,2\rbrace \rbrace, \\
\nonumber
\underBgic & = & \lbrace (R_1,R_2) : L_i \leqslant R_i \leqslant \max\left\lbrace \tilde{U}_i- 1, L_i\right\rbrace \; i \in \lbrace 1,2\rbrace \rbrace,
\end{IEEEeqnarray}
and
\begin{IEEEeqnarray}{lcl}
\label{EqLiGaussian}
L_i & = &   \frac{1}{2}\log\left( 1 + \frac{\SNR_i}{1 + \INR_{ij}}\right)
\end{IEEEeqnarray}
and
\begin{IEEEeqnarray}{lcl}
\nonumber
\tilde{U}_i & = &    \frac{1}{2} \min\Bigg( 
\log\left(1 + \SNR_i + \INR_{ij} \right)  \\
\nonumber
& & -\log\left( 1 + \frac{\left[\SNR_{j} - \max\left( \INR_{ji}, \SNR_j/ \INR_{ij} \right]^+ \right) }{1 + \INR_{ji} + \max\left( \INR_{ji},  \SNR_j/\INR_{ij}\right)}\right),\\
\label{EqUiGaussian}
& &  \log\left( 1 + \SNR_i \right)
\Bigg).
\end{IEEEeqnarray}
Note that $L_i$ is the rate achieved by the transmitter-receiver pair $i$ when it saturates the power constraint in \eqref{EqPowerConstraint} and treats interference as noise. Following this notation, the $\eta$-NE region of the two-user GIC can be written as in the following lemma.
\begin{lemma}[Theorem $2$ in \cite{Berry-TIT-2011}]\label{LemmaNgic}\emph{
Let $\eta \geqslant 0$. The $\eta$-NE region of the two-user GIC $\Ngic$ is approximated to within half a bit per dimension by the regions
$\underR \cap \underBgic$ and $\overR \cap \overBgic$ and thus, $\underR \cap \underBgic \subseteq \Ngic \subseteq \overR \cap \overBgic$.
}
\end{lemma}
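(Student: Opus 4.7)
The plan mirrors the two-step architecture already used in this paper for the linear deterministic cases (Theorem \ref{TheoremMainResultLDICFB}): characterize the $\eta$-NE region as an intersection of the capacity region (or its one-bit approximation) with a pair of per-user feasibility bounds, then establish the two set inclusions $\underR \cap \underBgic \subseteq \Ngic \subseteq \overR \cap \overBgic$ separately. The construction of both bounds imports the Etkin--Tse--Wang Han--Kobayashi analysis \cite{Etkin-TIT-2008} for the capacity side, and injects the ``randomized common messages'' device (random indices $\Omega_i$ of rate $R_{i,R}$) to freeze the opponent's deviation gain, exactly as in Lemma \ref{LemmaLDICFBc} below but in a Gaussian, constant-gap, flavour.

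For the outer bound, any $\eta$-NE rate pair is achievable by Definition \ref{DefAchievableRatePairs}, so $\Ngic \subseteq \Cgic \subseteq \overR$ from \cite{Etkin-TIT-2008}. The inclusion $\Ngic \subseteq \overBgic$ decomposes into a lower bound $R_i \geqslant L_i$ and an upper bound $R_i \leqslant \tilde{U}_i$. The former follows from individual rationality: transmitter $i$ can always unilaterally use a Gaussian codebook saturating \eqref{EqPowerConstraint} and let receiver $i$ treat any signal coming from $s_j^*$ as Gaussian noise of variance at most $\INR_{ij}$, which secures rate $L_i$ of \eqref{EqLiGaussian} regardless of $s_j^*$; therefore any $\eta$-NE must satisfy $R_i \geqslant L_i - \eta$. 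The latter is a genie-aided single-user-type bound: at any $\eta$-NE, player $j$'s configuration $s_j^*$ must itself be $\eta$-optimal, which limits how much of $P_j$ can be placed in the common layer without compromising player $j$'s own rate; propagating this constraint into a Fano + chain-rule argument against receiver $i$ yields the inner minimum in \eqref{EqUiGaussian}.

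For the inner bound, I would use a Gaussian randomized Han--Kobayashi scheme, i.e. the private/common split of \cite{Etkin-TIT-2008} in which each common codebook is further split into a rate-$R_{i,C}$ message-carrying part and a rate-$R_{i,R}$ pseudo-random part generated from $\Omega_i$ and known to receiver $i$ only. The achievability of every rate pair in $\underR$ by such a scheme follows from \cite{Etkin-TIT-2008} after relabelling $R_{i,C} + R_{i,R}$ as the effective common rate used in the Han--Kobayashi region. Then, mimicking the argument of Lemma \ref{LemmaLDICFBc}, one shows that a unilateral deviation of player $i$ to any coding scheme attains at most
\begin{equation}
\tilde{R}_i \;\leqslant\; \log\!\bigl(1+\SNR_i+\INR_{ij}\bigr) - \bigl(R_{j,C}+R_{j,R}\bigr) + \mbox{const},
\end{equation}
where the constant absorbs the Gaussian-noise slack in the Fano inequality. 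Selecting $R_{j,R}$ so that the right-hand side equals $R_i + \eta$ is feasible precisely when $(R_1,R_2)\in \underBgic$ (the inner-bound definition of $\overBgic$ minus one bit accommodates the constant), so the scheme is an $\eta$-NE realising $(R_1,R_2)$.

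The hard part is converting the exact, layer-counting entropy bookkeeping of Lemma \ref{LemmaLDICFBc} into a constant-gap Gaussian statement. In the LD-IC the inequality $\tilde{R}_i \leqslant \max(n_{ii},n_{ij}) - (R_{j,C}+R_{j,R})$ is exact because channel outputs are deterministic functions of the inputs modulo feedback-less randomness; in the Gaussian case every step of the chain-rule derivation loses a bounded number of bits ($\log 2\pi e$, power-splitting constants, Etkin--Tse--Wang's gap to $\underR$), and all these slacks must be simultaneously absorbed into the single-bit approximation claimed in the lemma. In particular, the auxiliary private-power allocations must be chosen so that treating the opponent's private signal as noise causes only a bounded degradation, and the equality $R_{i,C}+R_{i,P}+R_{j,C}+R_{j,R}=\max(n_{ii},n_{ij})+\tfrac{2}{3}\eta$ from \eqref{EqConditionNEc} must be replaced by an approximate Gaussian counterpart with an additive $O(1)$ slack — verifying this tight bookkeeping across all five interference regimes is the principal technical obstacle.
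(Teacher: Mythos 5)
This lemma is not proved in the paper; it is a cited result (Theorem~2 of \cite{Berry-TIT-2011}) and the paper uses it as a black box. There is therefore no proof in this manuscript to compare against, only a citation. Your sketch is a plausible reconstruction of the two-step architecture that Berry and Tse actually use, and it correctly parallels the structure this paper employs for the feedback version (Theorem~\ref{TheoremMainResultGICFB}): a per-user lower bound $R_i \geqslant L_i - \eta$ obtained from individual rationality (deviate to a Gaussian codebook treating interference as noise), a per-user upper bound $R_i \leqslant \tilde{U}_i$, and achievability via a randomized Han--Kobayashi inner bound with a deviation-rate bound used to close the equilibrium argument.

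Two cautions about the details, though. First, you justify several steps by pointing to Lemma~\ref{LemmaLDICFBc} and Eq.~\eqref{EqConditionNEc}, but both of those belong to the \emph{feedback} deterministic model and embed Markov block encoding and backward decoding; the lemma you are proving is for the \emph{non-feedback} Gaussian channel, and the corresponding achievability scheme and deviation bound in \cite{Berry-TIT-2011} use the ordinary (non-block-Markov) Han--Kobayashi scheme with random common messages. The form $\tilde{R}_i \leqslant \log(1+\SNR_i+\INR_{ij}) - (R_{j,C}+R_{j,R}) + \mathrm{const}$ is actually the right shape for the non-feedback case precisely because $X_1, X_2$ are independent there (no $2\rho\sqrt{\SNR\cdot\INR}$ cross term), but this needs to be derived from scratch, not imported from the feedback lemma. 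Second, your argument for the upper bound $R_i \leqslant \tilde{U}_i$ is stated only at the level of intuition (``player $j$'s $\eta$-optimality limits its common power, propagate via Fano''); this is the genuinely delicate part of the Berry--Tse proof and you would need to make the power-split constraint on player $j$ quantitative, track the two cases $\INR_{ij}\lessgtr\SNR_i$ separately, and show the resulting bound matches the explicit $\tilde{U}_i$ of \eqref{EqUiGaussian}. Acknowledging the constant-gap bookkeeping as ``the principal technical obstacle'' is honest, but that bookkeeping is most of the content, so as written the proposal is an outline rather than a proof.
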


\subsubsection{Capacity Region of the Gaussian IC with Feedback}

One of the approximations of $\Cgicfb$ is given by Suh and Tse in \cite{Suh-TIT-2011}. This approximation is to within one bit (Def. \ref{DefApproximationInNbis}) for the case of the real GIC and it is given in terms of two regions: 
(a) a region $\underRfb$ achievable with a simplified Han-Kobayashi scheme with feedback that uses block Markov encoding and backward decoding (Theorem $2$ in \cite{Suh-TIT-2011}); and 
(b) an outer-bound region (Theorem $3$ in \cite{Suh-TIT-2011}), denoted in the following by $\overRfb$. 
The set of rate pairs $(R_1, R_2) \in \underRfb$ satisfy the following set of inequalities for a given $\rho \in [0,1]$: 
\begin{IEEEeqnarray}{l}
\nonumber
R_1  \leqslant  \frac{1}{2} \log\Big( 1 + \SNR_{1} + \INR_{12}  + 2\rho \sqrt{\SNR_1 \INR_{12}} \Big) \\
\label{EqUnderRfb1}
\qquad  - \frac{1}{2},\\
\nonumber
R_1  \leqslant  \frac{1}{2}\log\left( 1 + (1-\rho)\INR_{21}\right) + \frac{1}{2}\log\left( 2 + \frac{\SNR_1}{\INR_{21}} \right) \\
\label{EqUnderRfb2}
\qquad - 1, \quad\\
\nonumber
R_2  \leqslant  \frac{1}{2}\log\Big( 1 + \SNR_{2} + \INR_{21}+ 2\rho \sqrt{\SNR_2 \INR_{21}} \Big)\\
\label{EqUnderRfb3}
\qquad  - \frac{1}{2},\\ 
\nonumber
R_2  \leqslant   \frac{1}{2} \log\left( 1 + (1-\rho)\INR_{12}\right)  +  \frac{1}{2} \log\left( 2 + \frac{\SNR_2}{\INR_{12}} \right) \\
\label{EqUnderRfb4}
\qquad - 1\\
\label{EqUnderRfb5}
R_1 + R_2  \leqslant  \frac{1}{2} \log\left( 2 + \frac{\SNR_{1}}{\INR_{21}} \right)  - 1   \\
       \nonumber
\qquad  +  \frac{1}{2} \log\Big( 1 +  \SNR_2+ \INR_{21} + 2\rho\sqrt{\SNR_2  \INR_{21}} \Big),\\
\label{EqUnderRfb6}
R_1 + R_2  \leqslant   \frac{1}{2}\log\left( 2 + \frac{\SNR_{2}}{\INR_{12}} \right)  - 1 \\
\nonumber
\qquad +  \frac{1}{2} \log\Big( 1 +  \SNR_1 + \INR_{12} + 2\rho\sqrt{\SNR_1 \INR_{12}} \Big) .
\end{IEEEeqnarray}
The region $\overRfb$ is an outer bound of the capacity region, i.e., $\Cgicfb \subseteq \overRfb$. The region $\overRfb$ is the set of rate pairs $(R_1, R_2)$ that satisfy the following set of inequalities, with $\rho \in [0,1]$: %
\begin{IEEEeqnarray}{l}
\label{EqUpperBoundR1Capacity1}
R_1 \leqslant  \frac{1}{2}\log\Big( 1 + \SNR_{1} + \INR_{12}  + 2\rho \sqrt{\SNR_{1} \INR_{12}} \Big), \quad\\  
\nonumber
R_1  \leqslant  \frac{1}{2} \log\left( 1 + (1-\rho^2)\INR_{21}\right) \\
\label{EqUpperBoundR1Capacity2}
\qquad  + \frac{1}{2} \log\left( 1 + \frac{(1-\rho^2)\SNR_1}{1+ (1-\rho^2)\INR_{21}} \right), \\
\label{EqUpperBoundR1Capacity3}
R_2  \leqslant  \frac{1}{2} \log\Big( 1 + \SNR_{2} + \INR_{21} + 2\rho \sqrt{\SNR_{2} \INR_{21}} \Big),\\  
\nonumber
R_2  \leqslant  \frac{1}{2} \log\left( 1 + (1-\rho^2)\INR_{12}\right) \\
\label{EqUpperBoundR1Capacity4}
\qquad + \frac{1}{2}\log\left( 1 + \frac{(1-\rho^2)\SNR_2}{1+ (1-\rho^2)\INR_{12}} \right),\\
\label{EqUpperBoundR1Capacity5}
R_1 + R_2  \leqslant  \frac{1}{2} \log\left( 1 + \frac{(1-\rho^2)\SNR_{1}}{1 + (1-\rho^2)\INR_{21}} \right) \\
\nonumber    
\quad + \frac{1}{2} \log\Big( 1 +  \SNR_2 + \INR_{21} + 2\rho\sqrt{\SNR_2\INR_{21}} \Big) \mbox{, and }\\
\label{EqUpperBoundR1Capacity6}
R_1 + R_2  \leqslant  \frac{1}{2} \log\left( 1 + \frac{(1-\rho^2)\SNR_{2}}{1 + (1-\rho^2)\INR_{12}} \right) \\
\nonumber    
\quad + \frac{1}{2} \log\Big( 1 +  \SNR_1 + \INR_{12} + 2\rho\sqrt{\SNR_1 \INR_{12}} \Big).
\end{IEEEeqnarray}

The approximation of $\Cgicfb$ is described in terms of $\underRfb$ and $\overRfb$ by the following lemma for the case of the real GIC.

\begin{lemma}[Theorem $4$ in \cite{Suh-TIT-2011}]\label{LemmaCgicfb}\emph{
The capacity region $\Cgicfb$ is approximated to within one bit by the regions $\underRfb$ and $\overRfb$.
}
\end{lemma}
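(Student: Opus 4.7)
The plan is to leverage two results already embedded in the statement of the lemma: the achievability $\underRfb \subseteq \Cgicfb$ (the simplified Han-Kobayashi scheme with block-Markov encoding and backward decoding of Theorem~2 of \cite{Suh-TIT-2011}) and the converse $\Cgicfb \subseteq \overRfb$ (the genie-aided cut-set bound of Theorem~3 of \cite{Suh-TIT-2011}, in which the correlation induced at the channel inputs by feedback is captured by a single parameter $\rho \in [0,1]$). With these two inclusions available, the proof reduces, by Def. \ref{DefApproximationInNbis}, to showing that for every $(R_1,R_2) \in \overRfb$ the pair $((R_1-2)^+,(R_2-2)^+)$ lies in $\underRfb$.

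I would fix an arbitrary $\rho \in [0,1]$ in the outer characterization \eqref{EqUpperBoundR1Capacity1}--\eqref{EqUpperBoundR1Capacity6} and then use the \emph{same} $\rho$ in the inner characterization \eqref{EqUnderRfb1}--\eqref{EqUnderRfb6}, performing a term-by-term comparison. The single-rate bounds \eqref{EqUpperBoundR1Capacity1} and \eqref{EqUpperBoundR1Capacity3} match \eqref{EqUnderRfb1} and \eqref{EqUnderRfb3} up to an explicit $-1$, so a $2$-bit reduction comfortably absorbs the gap. For the second individual bounds \eqref{EqUpperBoundR1Capacity2}, \eqref{EqUpperBoundR1Capacity4} and for the sum-rate bounds \eqref{EqUpperBoundR1Capacity5}, \eqref{EqUpperBoundR1Capacity6}, the outer expressions carry $(1-\rho^2)$ whereas the inner ones carry $(1-\rho)$. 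The key algebraic tool is $1-\rho^2 = (1-\rho)(1+\rho) \leq 2(1-\rho)$, which yields
\[
\log\bigl(1+(1-\rho^2)\INR\bigr) \;\leq\; 1 + \log\bigl(1+(1-\rho)\INR\bigr),
\]
together with the monotonicity estimate
\[
\log\!\left(1 + \frac{(1-\rho^2)\SNR}{1+(1-\rho^2)\INR}\right) \;\leq\; \log\!\left(2 + \frac{\SNR}{\INR}\right),
\]
obtained by dropping the $1$ in the denominator and then relaxing $1+\SNR/\INR \leq 2+\SNR/\INR$. Summing these two estimates, each second-type individual bound loses at most $2$ bits, and applying them jointly to the sum-rate bounds loses at most $4$ bits, exactly the budget supplied by reducing each of $R_1$ and $R_2$ by $2$.

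The step I expect to be most delicate is the coupling of $\rho$ across the two regions: because $\underRfb$ is a union over $\rho \in [0,1]$ of the feasible sets defined by \eqref{EqUnderRfb1}--\eqref{EqUnderRfb6}, I must verify that the $\rho$ witnessing a given point in $\overRfb$ still admits a non-empty inner slice containing the shifted pair. Boundary values, in particular $\rho \to 1$ (where the second outer bound becomes trivial while the first can still be binding) and $\rho = 0$ (no feedback cooperation), need a small perturbation argument or a separate treatment. Once that has been dealt with, the remaining work is routine logarithmic manipulation, and once every pair of inequalities is shown to be within $2$ bits per rate, the desired inclusion $((R_1-2)^+,(R_2-2)^+) \in \underRfb$ follows, completing the proof of Lemma~\ref{LemmaCgicfb}.
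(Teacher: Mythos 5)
This lemma is cited verbatim from Theorem~4 of \cite{Suh-TIT-2011}; the present paper does not reprove it, so there is no in-paper argument to compare against. Your overall strategy is right in outline (reduce, via Def.~\ref{DefApproximationInNbis}, to showing that every $(R_1,R_2)\in\overRfb$ shifts by $2$ bits per coordinate into $\underRfb$), but the same-$\rho$ term-by-term comparison does not close the budget for the second-type individual bounds. Compare \eqref{EqUpperBoundR1Capacity2} with \eqref{EqUnderRfb2}. Your two estimates give
\begin{equation*}
\log\bigl(1+(1-\rho^2)\INR_{21}\bigr) \leqslant 1 + \log\bigl(1+(1-\rho)\INR_{21}\bigr)
\end{equation*}
(a loss of one bit) and
\begin{equation*}
\log\left(1 + \frac{(1-\rho^2)\SNR_1}{1+(1-\rho^2)\INR_{21}}\right) \leqslant \log\left(2+\frac{\SNR_1}{\INR_{21}}\right)
\end{equation*}
(no loss). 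But \eqref{EqUnderRfb2} already carries an explicit $-2$, so what you actually establish is that the right-hand side of \eqref{EqUpperBoundR1Capacity2} is bounded by the right-hand side of \eqref{EqUnderRfb2} plus $1+0+2=3$ bits. Shifting $R_1$ alone by $2$ supplies only a $2$-bit budget for an individual-rate constraint, so there is a $1$-bit deficit, and the claim that ``each second-type individual bound loses at most $2$ bits'' is an arithmetic slip. The sum-rate comparison is fine (gap at most $2$, budget $2+2=4$), and so is the first-type individual bound (gap exactly $1$, budget $2$); only the second-type individual bounds break.

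The remedy is exactly the step you flagged as ``most delicate'' and then deferred: the $\rho'$ used to place the shifted pair in $\underRfb$ need not equal the $\rho$ witnessing $(R_1,R_2)\in\overRfb$. For instance, at $\rho=0.9$ with $\SNR_1=\INR_{21}=10^4$, the right-hand side of \eqref{EqUpperBoundR1Capacity2} exceeds that of \eqref{EqUnderRfb2} by roughly $2.3$ bits at the same $\rho$, yet lowering the inner-bound parameter to $\rho'\approx 0.8$ re-enlarges $\log\bigl(1+(1-\rho')\INR_{21}\bigr)$ enough to absorb the shifted point without breaking the remaining inner constraints. A complete proof (as in \cite{Suh-TIT-2011}) must therefore choose $\rho'$ as a function of $\rho$, or argue regime by regime about which constraints are actually active at the corner points; the fixed-$\rho$ comparison as you have written it is not by itself a proof.
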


\subsection{Main Results}

In this subsection, the $\eta$-NE region of the real GIC-FB $\Ngicfb$ is approximated to within one bit per channel use (Def. \ref{DefApproximationInNbis}), with $\eta \geqslant 1$ bit per channel use.
This approximation is given in terms of three regions: $\underRfb$, $\overRfb$ and $\Bgicfb$, where  the closed region $\Bgicfb$ is
\begin{IEEEeqnarray}{lcl}
\label{EqBox}
\Bgicfb & = & \lbrace (R_1,R_2) : (L_i - \eta)^{+} \leqslant R_i , \; i \in \lbrace 1,2\rbrace \rbrace, 
\end{IEEEeqnarray}
with $L_i$ given by \eqref{EqLiGaussian}.
Using these elements, the main result is given by the following theorem.
\begin{theorem}[$\eta$-NE Region of the GIC with Feedback]\label{TheoremMainResultGICFB}\emph{
Let $\eta \geqslant 1$.
Then, the  $\eta$-NE region $\Ngicfb$ of the real Gaussian interference channel with perfect output feedback is approximated to within $1$ bit  by the regions $\underRfb \cap \Bgicfb$ and $\overRfb \cap \Bgicfb$ and it satisfies:
\begin{equation}
\underRfb \cap \Bgicfb \subseteq \Ngicfb \subseteq \overRfb \cap \Bgicfb.
\end{equation}
}
\end{theorem}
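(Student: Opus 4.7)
The plan is to mirror the LD-IC proof of Theorem \ref{TheoremMainResultLDICFB} by establishing the two inclusions $\Ngicfb \subseteq \overRfb \cap \Bgicfb$ and $\underRfb \cap \Bgicfb \subseteq \Ngicfb$ separately. The inclusion $\Ngicfb \subseteq \overRfb$ is immediate since any $\eta$-NE rate pair must be achievable in the ordinary sense and hence belong to $\Cgicfb$, which is contained in $\overRfb$ by Lemma \ref{LemmaCgicfb}. The inclusion $\Ngicfb \subseteq \Bgicfb$ is the Gaussian analog of Lemma \ref{LemmaLDICFBa}: if $R_i(s_1^*,s_2^*) < L_i - \eta$ for some $i$, then transmitter $i$ can unilaterally switch to a point-to-point Gaussian random code that saturates the power constraint in \eqref{EqPowerConstraint} and treats interference as noise, achieving a rate of at least $L_i = \log(1+\SNR_i/(1+\INR_{ij}))$ regardless of $s_j^*$; the resulting improvement strictly exceeds $\eta$, contradicting the $\eta$-NE property and forcing $R_i \geqslant (L_i - \eta)^+$.

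For the achievability, I would adapt the feedback Han-Kobayashi scheme of Theorem 2 in \cite{Suh-TIT-2011}, augmenting each user's common message by an additional random sub-message of rate $R_{i,R}$ whose bits are a priori known at the intended receiver but appear as genuine data to the non-intended receiver. The split $(R_{i,C},R_{i,R},R_{i,P})$ would satisfy the Gaussian counterpart of \eqref{EqHK4}, so that every $(R_1,R_2) \in \underRfb$ is jointly achievable for some nonnegative $(R_{1,R},R_{2,R})$. The critical step is then a Gaussian version of Lemma \ref{LemmaLDICFBc}, upper bounding the rate $\tilde{R}_i$ that player $i$ can achieve by any unilateral deviation. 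Starting from Fano's inequality I would obtain $T\tilde{R}_i \leqslant H(\underline{\tilde{\bs{Y}}}_i | \bs{\Omega}_i) - H(\underline{\tilde{\bs{Y}}}_i | \bs{W}_i, \bs{\Omega}_i) + T\delta$, upper bound the first term by $T\log(1+\SNR_i+\INR_{ij})$ using Gaussian maximum entropy together with \eqref{EqPowerConstraint}, and lower bound the second term by $T(R_{j,C}+R_{j,R})$ minus small constants, exploiting the fact that the non-deviating user's common and random messages remain decodable from $\underline{\tilde{\bs{Y}}}_i$ and that its codewords are deterministic functions of those messages.

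Once this deviation bound is in place, a Gaussian analog of Lemma \ref{LemmaLDICFBd} prescribes choosing $R_{i,R}$ so that $R_{i,C}+R_{i,P}+R_{j,C}+R_{j,R}$ saturates the Gaussian sum-rate inequality up to a fixed constant, which makes $\tilde{R}_i - R_i \leqslant \eta$ and certifies the $\eta$-NE property. An inspection argument in the spirit of Lemma \ref{LemmaLDICFBe} then confirms that for every $(R_1,R_2) \in \underRfb \cap \Bgicfb$ the pair $(R_{1,R},R_{2,R})$ required simultaneously by the achievability constraints \eqref{EqUnderRfb1}--\eqref{EqUnderRfb6} and by the equilibrium-saturation conditions does exist.

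The main obstacle is the tight control of the constants in the Gaussian deviation bound. Unlike the deterministic case, upper bounding $H(\underline{\tilde{\bs{Y}}}_i | \bs{\Omega}_i)$ in the Gaussian model requires accommodating an arbitrary (possibly non-Gaussian, possibly feedback-driven) deviating code that satisfies only the average-power constraint \eqref{EqPowerConstraint}; combining the resulting slack with the two-bit gap between $\underRfb$ and $\overRfb$ inherited from Lemma \ref{LemmaCgicfb} and the additional penalty absorbed by the random-symbol layer produces the threshold $\eta \geqslant \log_2(6) = 1 + \log_2 3$. Verifying that this precise constant is sufficient for every boundary point of $\overRfb \cap \Bgicfb$, and not merely at a few corner points, is the most delicate part of the argument and ultimately forces the $2$-bit approximation in the statement.
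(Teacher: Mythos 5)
Your overall architecture mirrors the paper's proof correctly: Lemma~\ref{LemmaGICFBa} for the outer inclusion, a randomized Han--Kobayashi scheme with a random common sub-message for achievability (Lemma~\ref{LemmaGICFBb}), a deviation bound \emph{\`a la} Lemma~\ref{LemmaLDICFBc} (Lemma~\ref{LemmaGICFBc}), a saturation argument (Lemma~\ref{LemmaGICFBd}), and a Fourier--Motzkin inspection (Lemma~\ref{LemmaGICFBe}), finishing with the $2$-bit gap (Lemma~\ref{LemmaGICFBf}). However, there is a concrete error in your key estimate. You propose to bound $H\left(\underline{\tilde{\bs{Y}}}_i \mid \bs{\Omega}_i\right)$ by $T \log\left(1 + \SNR_i + \INR_{ij}\right)$ via Gaussian maximum entropy and the power constraint. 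This bound is wrong: it implicitly assumes $X_i$ and $X_j$ are uncorrelated. With perfect output feedback, the deviating transmitter observes $Y_{i,t-1} = h_{ii}X_{i,t-1} + h_{ij}X_{j,t-1} + Z_{i,t-1}$ and can therefore drive arbitrary correlation between its future inputs and the other transmitter's signal. The tight single-letter bound is $\log\left(1 + \left(\sqrt{\SNR_i}+\sqrt{\INR_{ij}}\right)^2\right) = \log\left(1 + \SNR_i + \INR_{ij} + 2\sqrt{\SNR_i\INR_{ij}}\right)$, which is exactly what the paper uses in step~$(b)$ of \eqref{EqConditionTildeR1}. Dropping the $2\sqrt{\SNR_i\INR_{ij}}$ term leaves a bound that a deviating code can violate, so the whole deviation lemma collapses.

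Relatedly, your explanation of where $\eta \geqslant \log_2(6)$ comes from is off: it is \emph{not} obtained by ``combining the slack with the two-bit gap between $\underRfb$ and $\overRfb$.'' The $\log_2(6)$ threshold and the $2$-bit region gap are logically independent. The threshold is computed entirely inside the deviation analysis of Lemma~\ref{LemmaGICFBd}: the difference between the deviation bound $\log\left(1 + \SNR_i + \INR_{ij} + 2\sqrt{\SNR_i\INR_{ij}}\right) - (R_{jC}+R_{jR})$ (full correlation, $\rho=1$) and the rate $R_{iC}+R_{iP} = \log\left(1 + \SNR_i + \INR_{ij} + 2\rho\sqrt{\SNR_i\INR_{ij}}\right) - (R_{jC}+R_{jR}) - 1$ achieved by the randomized HK scheme at correlation parameter $\rho \in [0,1]$ is bounded, uniformly in $\rho$ and in the channel parameters, by $\log_2\!\left(1 + \frac{2\max(\SNR_i,\INR_{ij})}{\max(\SNR_i,\INR_{ij})}\right) + 1 = \log_2 3 + 1 = \log_2 6$. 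Here $\log_2 3$ is the price of the deviating user's free correlation advantage and $+1$ is the HK achievability slack; the $2$-bit gap from Lemma~\ref{LemmaCgicfb} enters only afterwards in Lemma~\ref{LemmaGICFBf} to assert that $\underRfb \cap \Bgicfb$ and $\overRfb \cap \Bgicfb$ sandwich $\Ngicfb$ within $2$ bits. You need to repair the entropy bound with the correlation term and then rederive the constant by the $\rho$-dependent comparison described above.
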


It is worth noting that Theorem \ref{TheoremMainResultGICFB} is analogous to Theorem \ref{TheoremMainResultLDICFB}. In the case of the LD-IC-FB (Theorem \ref{TheoremMainResultLDICFB}), the $\eta$-NE region is fully characterized for any $\eta\geqslant 0$ arbitrarily small. In the case of the decentralized real GIC-FB (Theorem \ref{TheoremMainResultGICFB}),  the $\eta$-NE region is approximated to within $1$ bit and for any $\eta \geqslant 1$. This implies that for all $(R_1,R_2) \in \Ngicfb$, any unilateral deviation from the equilibrium strategy that achieves these rates does not bring a rate improvement larger than $1$ bit per channel use. 
The relevance of Theorem \ref{TheoremMainResultGICFB} relies on two important implications:    
$(a)$ If the pair of configurations $(s_1, s_2)$ is an $\eta$-NE, then players $1$ and $2$ always achieve a rate equal to or higher than $(L_1 -\eta)^+$ and $(L_2 -\eta)^+$, with  $L_1$ and $L_{2}$ as in \eqref{EqLiGaussian}, respectively; and $(b)$ There always exists an $\eta$-NE transmit-receive configuration pair $(s_1,s_2)$ that achieves a rate pair $(R_1(s_1,s_2), R_2(s_1,s_2))$ that is at most $1$ bit per channel use per user away from the outer bound of the capacity region.

\subsection{Proof of Theorem \ref{TheoremMainResultGICFB}}

The proof of Theorem \ref{TheoremMainResultGICFB} closely follows along the same lines as the proof of Theorem \ref{TheoremMainResultLDICFB}. In the first part of this proof, given an $\eta \geqslant 0$, it is shown that $R_1 \geqslant  (L_1 -\eta)^+$ and $R_2 \geqslant  (L_2 -\eta)^+$ are necessary conditions for the rate pair $(R_1, R_2)$ to be achievable at an $\eta$-NE. That is, 
\begin{equation}
\label{EqPartI}
\Ngicfb \subseteq  \overRfb  \cap \Bgicfb.
\end{equation}
In the second part of the proof, it  is shown that any point in $\underRfb  \cap \Bgicfb$ is an $\eta$-NE, with $ \eta \geqslant 1$, that is, 
\begin{equation}
\label{EqPartII}
\underRfb \cap \Bgicfb \subseteq \Ngicfb,
\end{equation}
which proves Theorem \ref{TheoremMainResultGICFB}.

\subsubsection{Proof of \eqref{EqPartI}}
The proof of \eqref{EqPartI} is completed by the following lemma.

\begin{lemma}\label{LemmaGICFBa}\emph{
A rate pair $(R_1, R_2) \in \Cgicfb$, with either
$ R_1  < \left( \frac{1}{2} \log(1 + \frac{\SNR_{1}}{1 + \INR_{12}}) - \eta \right)^+$ or 
$R_2  <  \left(\frac{1}{2} \log(1 + \frac{\SNR_{2}}{1 + \INR_{21}}) - \eta \right)^+$
is not an $\eta$-NE, for a given $\eta \geqslant 0$.
}
\end{lemma}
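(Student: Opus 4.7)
The plan is to mirror the argument used for Lemma \ref{LemmaLDICFBa} in the linear deterministic setting, replacing the interference-free top levels with the classical treat-interference-as-noise (TIN) strategy, which in the Gaussian case yields rate exactly $L_i = \log_2\bigl(1 + \SNR_i/(1+\INR_{ij})\bigr)$. The key point is that TIN is a universal deviation: user $i$ can achieve at least $L_i$ no matter what user $j$ does, so any equilibrium utility below $L_i - \eta$ would admit a profitable unilateral deviation of size strictly greater than $\eta$.

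First, I would argue by contradiction. Suppose $(s_1^*, s_2^*)$ is an $\eta$-NE and, without loss of generality, $R_1(s_1^*, s_2^*) < (L_1 - \eta)^+$. Consider the alternative configuration $s_1' \in \mathcal{A}_1$ in which transmitter $1$ uses an i.i.d.\ Gaussian codebook of average power $P_1 = 1$ (no feedback, no randomization), and receiver $1$ employs a single-user decoder that treats $h_{12} X_{2,t} + Z_{1,t}$ as additive noise. The configuration $s_1'$ does not depend on $s_2^*$ at all, so its achievable rate can be bounded without any assumption on user $2$'s behavior.

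Next, I would justify that $R_1(s_1', s_2^*) \geq L_1$ for every feasible $s_2^*$. The standard argument is: for any sequence of channel inputs $\{X_{2,t}\}$ produced by $s_2^*$ satisfying the power constraint \eqref{EqPowerConstraint}, the aggregate noise seen by the decoder has average power at most $1 + \INR_{12}$. Since the transmit codebook is i.i.d.\ Gaussian, a nearest-neighbor (mismatched) decoder attains mutual information at least $\log_2\bigl(1 + \SNR_1/(1 + \INR_{12})\bigr) = L_1$. This is the well-known robustness of Gaussian-TIN against arbitrary bounded-power interference, and the same fact is invoked implicitly in Lemma \ref{LemmaNgic} for the no-feedback case. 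Thus $R_1(s_1', s_2^*) - R_1(s_1^*, s_2^*) \geq L_1 - (L_1 - \eta)^+ \cdot \mathds{1}_{\{L_1 > \eta\}} > \eta$, violating Def.~\ref{DefEtaNE}. The symmetric argument handles the case $R_2(s_1^*,s_2^*) < (L_2 - \eta)^+$.

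The main subtlety, and the step I would spend the most care on, is the universal achievability claim $R_1(s_1',s_2^*) \geq L_1$ for arbitrary $s_2^*$: user $2$'s signal need not be Gaussian, i.i.d., or even stationary, and it may in principle be correlated with past outputs via feedback. The clean way to close this gap is to appeal to the mismatched-decoder/Gaussian-TIN lower bound, which depends on $s_2^*$ only through the second-order moment constraint $\frac{1}{N_2}\sum_t |X_{2,t}|^2 \leq P_2 = 1$; this constraint is guaranteed by \eqref{EqPowerConstraint} regardless of the encoding functions $f_2^{(t)}$. Once this universal bound is in place, the contradiction with the $\eta$-NE condition is immediate and the lemma follows, establishing the outer bound $\Ngicfb \subseteq \overRfb \cap \Bgicfb$ of \eqref{EqPartI}.
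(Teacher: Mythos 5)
Your proof follows essentially the same argument as the paper's: deviate to a Gaussian codebook with interference treated as noise, observe that this achieves $L_i$ regardless of the other user's configuration, and conclude that the rate improvement would exceed $\eta$, contradicting Def.~\ref{DefEtaNE}. The paper states the universal achievability of $L_i$ under arbitrary $s_j^*$ without elaboration; your invocation of the mismatched (nearest-neighbor) decoder robustness result to justify that step is a reasonable and slightly more careful treatment of the same idea.
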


\begin{proof}
Let $(s_1^*,s_2^*)$ be a transmit-receive configuration pair such that users achieve the rate pair $R_1 = R_1(s_1^*,s_2^*)$ and $R_2 = R_2(s_1^*,s_2^*)$, respectively, and assume $(s_1^*, s_2^*)$ is an $\eta$-NE. Hence, from Def. \ref{DefEtaNE}, it holds that any rate improvement of a transmitter-receiver pair that unilaterally deviates from $(s_1^*, s_2^*)$ is upper bounded by $\eta$. 
Without loss of generality, let $R_1(s_1^*,s_2^*) <\left( \frac{1}{2}\log(1 + \frac{\SNR_{1}}{1 + \INR_{12}}) - \eta \right)^{+}$. Then, note that independently of the transmit-receive configuration of transmitter-receiver pair $2$, transmitter-receiver pair  $1$ can always use a transmit-receive configuration $s_1'$ in which transmitter $1$ saturates the average power constraint \eqref{EqPowerConstraint} and interference is treated as noise at receiver $1$. Thus, transmitter-receiver pair  $1$ is always able to achieve the rate $R(s_1', s_2^*) = \frac{1}{2} \log(1 + \frac{\SNR_{1}}{1 + \INR_{12}})$, which implies that a utility improvement $R(s_1', s_2^*) - R(s_1^*, s_2^*) > \eta$ is always possible.  Thus, from Def. \ref{DefEtaNE}, the assumption that the rate pair $(R_1(s_1^*, s_2^*),R_2(s_1^*, s_2^*))$ is an $\eta$-NE does not hold. This completes the proof.
\end{proof}

\subsubsection{Proof of \eqref{EqPartII}}
 
Consider the randomized Han-Kobayashi scheme with feedback introduced in Sec. \ref{SecLDIC}, for the case of the LD-IC model. This coding scheme can be extended to the Gaussian case by letting $\mathcal{X}_{i,P}$ and $\mathcal{X}_{i,C}$ be the set of private and common codewords of length $N_i$ symbols for transmitter $i$ such that  $\forall \bs{X}_{i,C} \in \mathcal{X}_{i,C}$, $\frac{1}{N_i} \mathds{E}[\bs{X}_{i,C}^{\sf{T}} \bs{X}_{i,C}] \leqslant \lambda_{i,C}$ and $\forall \bs{X}_{i,P} \in \mathcal{X}_{i,P}$, $\frac{1}{N_i} \mathds{E}[\bs{X}_{i,P}^{\sf{T}} \bs{X}_{i,P}] \leqslant \lambda_{i,P}$. The terms $\lambda_{i,P}$ and $\lambda_{i,C}$ are the fractions of power assigned to the common and private codewords, i.e.,  $\lambda_{i,C} + \lambda_{i,P} \leqslant 1$.  
As suggested in \cite{Suh-TIT-2011}, the fraction $\lambda_{i,P}$ is chosen such that the interference produced at receiver $j$ is at the level of the noise, i.e., $\lambda_{i,P} \INR_{ji} \leqslant 1$ and thus, $\forall i \in \lbrace 1, 2 \rbrace$,
\begin{equation}\label{EqLambdaip}
\lambda_{i,P} =\left\lbrace
\begin{array}{l}
\min\Big(\frac{1}{\INR_{ji}}, 1 \Big), \text{ if } \INR_{ij} < \SNR_{i},\\
0, \qquad \qquad \qquad  \text{ otherwise}.
\end{array}
\right.
\end{equation}
This choice of the power allocation reproduces the main assumption of the linear deterministic model in which the private messages, when they exist, do not appear in the other receiver as they are seen at a lower or equal level than the noise. 
More interestingly, note that by using this power allocation, transmitter $i$ uses message splitting only in the very weak, weak and moderate interference regimes ($\INR_{ij} < \SNR_{i}$). For instance, in the very weak interference regime, e.g., $\INR_{ji} < 1$, no common message is used by transmitter $i$ as the channel between transmitter $i$ and receiver $j$ has a very small capacity and thus,  it privileges the private messages, i.e., $\lambda_{i,P} = 1$. 
In the very strong interference regimes ($\INR_{ij} > \SNR_{i}$), transmitter $i$ uses the alternative path provided by feedback to communicate with receiver $i$, i.e., the link \emph{transmitter~$i$ - receiver~$j$ - transmitter~$j$ - receiver~$i$}, and thus, no private message is used, i.e., $\lambda_{i,P} = 0$.  

The proof of \eqref{EqPartII} is immediate from the following lemmas. In particular,  Lemma \ref{LemmaGICFBb} states that the randomized Han-Kobayashi scheme with feedback achieves all the rate pairs $(R_{1},R_{2}) \in  \underRfb$;
Lemma  \ref{LemmaGICFBc} provides the maximum rate improvement that a given transmitter-receiver pair achieves by unilateral deviation from the randomized Han-Kobayashi scheme with feedback; 
 Lemma \ref{LemmaGICFBd} states that when the rates of the random components $R_{1,R}$ and $R_{2,R}$ are properly chosen, the randomized Han-Kobayashi scheme with feedback forms an $\eta$-NE, with $\eta \geqslant 1$; 
Lemma \ref{LemmaGICFBe} shows that $\underRfb \cap \Bgicfb \subseteq \Ngicfb$; and finally, Lemma \ref{LemmaGICFBf} states that the regions  $\underRfb \cap \Bgicfb$ and $\overRfb \cap \Bgicfb$ approximate the $\eta$-NE region within $1$ bit, and this completes the proof of \eqref{EqPartII}.

\begin{lemma}\label{LemmaGICFBb}
\emph{The achievable region of the randomized Han-Kobayashi coding scheme with feedback in the GIC-FB is the set of non-negative rates 
$(R_{1,C}, R_{1,R},R_{1,P}, R_{2,C}, R_{2,R},R_{2,P} )$ that satisfy, $\forall i \in \lbrace 1, 2 \rbrace$ and $\rho \in [0,1]$,
 \begin{IEEEeqnarray}{l} 
\label{EqAchievableRatesGICFB1}
R_{i,C} + R_{i,R}   \leqslant  \frac{1}{2} \log\big(1+ (1-\rho)\INR_{ji} \big) - \frac{1}{2},\\
\label{EqAchievableRatesGICFB2}
R_{i,P}   \leqslant \frac{1}{2}\log\Big( 2 + \frac{\SNR_{i}}{\INR_{ji}} \Big) - \frac{1}{2}, \\
\label{EqAchievableRatesGICFB3}%
R_{i,C} + R_{i,P}  + R_{j,C} + R_{j,R} \\
\nonumber
\qquad   \leqslant \frac{1}{2}\log\Big(1 + \SNR_{i} + \INR_{ij} + 2 \rho \sqrt{\SNR_{i} \INR_{ij}} \Big) -\frac{1}{2}.
\end{IEEEeqnarray}
}
\end{lemma}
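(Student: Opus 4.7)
The plan is to adapt the block Markov encoding with backward decoding scheme of Suh and Tse (Theorem~2 in \cite{Suh-TIT-2011}), while inserting a randomized sub-message of rate $R_{iR}$ into each user's common codeword, in the same spirit as the construction of Berry and Tse in \cite{Berry-TIT-2011} for the non-feedback case. The full proof will mirror the one sketched in Appendix~\ref{AppProofOfLemmaLDICFBb} for the linear deterministic version (Lemma~\ref{LemmaLDICFBb}), with the binary symbol operations replaced by Gaussian inputs and the deterministic interference relations replaced by mutual information expressions under the power split $(\lambda_{iP},\lambda_{iC})$ given in \eqref{EqLambdaip}.

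First, I would specify the codebook. Transmission proceeds over $B$ blocks of length $N$. In each block $b$, transmitter $i$ uses superposition coding in which the outer layer is the pair of common codewords $(\bs{X}_{1C}^{(b-1)},\bs{X}_{2C}^{(b-1)})$ from the \emph{previous} block, the middle layer is the current common codeword $\bs{X}_{iC}^{(b)}$ indexed by the pair $(W_{iC}^{(b)},\Omega_i^{(b)})$ of aggregate rate $R_{iC}+R_{iR}$ and generated i.i.d.\ $\mathcal{CN}(0,\lambda_{iC})$ conditioned on the outer layer with correlation coefficient $\rho$, and the inner layer is the private codeword $\bs{X}_{iP}^{(b)}$ of rate $R_{iP}$ generated i.i.d.\ $\mathcal{CN}(0,\lambda_{iP})$ conditioned on the common layers. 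The transmitted sequence is $\bs{X}_i^{(b)}=\bs{X}_{iC}^{(b)}+\bs{X}_{iP}^{(b)}$, which satisfies the power constraint \eqref{EqPowerConstraint}. At the end of block $b$, transmitter $i$ decodes from the feedback signal $\bs{Y}_i^{(b)}$ the pair $(W_{jC}^{(b)},\Omega_j^{(b)})$ of the other user, treating its own codewords as known and the other user's private codeword as noise; this yields the first bound \eqref{EqAchievableRatesGICFB1}, with the $-1$ term coming from the unit-variance bump caused by the noise-level private interference produced by the choice \eqref{EqLambdaip}.

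Second, I would analyze decoding at receiver $i$ by backward decoding across blocks. Starting from block $B$ and moving backwards, receiver $i$ uses the already-decoded common indices of block $b+1$ as side information to identify the cloud center of block $b$, then jointly decodes $(W_{iC}^{(b)},\Omega_i^{(b)},W_{iP}^{(b)},W_{jC}^{(b)},\Omega_j^{(b)})$. Standard joint-typicality error analysis on the MAC from the two users' current-block codewords to receiver $i$ yields the usual four inequalities; the binding ones under the power split \eqref{EqLambdaip} reduce to the single-user bound on $R_{iP}$ coming from $I(\bs{X}_{iP};\bs{Y}_i\mid \bs{X}_{iC},\bs{X}_{jC})$, which, because the other user's private codeword contributes at most unit power above noise, evaluates to \eqref{EqAchievableRatesGICFB2}, and to the sum bound $R_{iC}+R_{iR}+R_{iP}+R_{jC}+R_{jR}\leqslant I(\bs{X}_{iC},\bs{X}_{iP},\bs{X}_{jC};\bs{Y}_i)$, which under the chosen Gaussian input distribution with correlation $\rho$ between $\bs{X}_{iC}$ and $\bs{X}_{jC}$ gives \eqref{EqAchievableRatesGICFB3}. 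Note that $\Omega_i^{(b)}$ is known at receiver $i$, so the relevant joint-decoding constraint at receiver~$i$ is on the sum including only the \emph{other} user's $R_{jR}$; the symmetric argument at receiver $j$ yields the symmetric sum bound, and together these two produce \eqref{EqAchievableRatesGICFB3}.

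The main obstacle I expect is correctly tracking the role of the randomized component $R_{iR}$: the decoder must resolve the full common codeword (data plus randomization), so the MAC-type constraints at receiver $i$ involve $R_{jC}+R_{jR}$ for the non-intended user but only $R_{iC}$ for the intended user (since $\Omega_i$ is locally known), while the feedback-decoding step at transmitter $j$ must resolve $R_{iC}+R_{iR}$. Care is also needed to verify that the Markov superposition structure, combined with backward decoding, does not introduce any additional constraint tying $\Omega_i^{(b)}$ to $\Omega_i^{(b-1)}$ in a way that would reduce the achievable set. Once this bookkeeping is done, the six inequalities of Lemma~\ref{LemmaGICFBb} follow directly, and letting $N\to\infty$ and $B\to\infty$ with the standard rate-loss argument completes the proof.
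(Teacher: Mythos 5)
Your proposal takes essentially the same route as the paper. The paper's Appendix~\ref{AppProofOfLemmaGICFBb} explicitly observes that the code construction and error-probability analysis of Appendix~\ref{AppProofOfLemmaLDICFBb} (the block-Markov superposition encoder with randomized common sub-message and backward decoding, after Suh--Tse and Berry--Tse) are channel-agnostic, and then simply instantiates the mutual information constraints in \eqref{EqRateRegion1}--\eqref{EqRateRegion2} with the Gaussian input distribution $X_i=U+U_i+X_{iP}$, $U\sim\mathcal{CN}(0,\rho)$, $U_i\sim\mathcal{CN}(0,\lambda_{iC})$, $X_{iP}\sim\mathcal{CN}(0,\lambda_{iP})$, and the power split \eqref{EqLambdaip}; the $-1$ terms arise exactly as you describe, by bounding $\log(1+\lambda_{jP}\INR_{ij})\leqslant 1$ when $\INR_{ji}\geqslant 1$. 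Your observation that $\Omega_i^{(b)}$ is known at receiver $i$, so the sum constraint at receiver $i$ involves $R_{jC}+R_{jR}$ but only $R_{iC}$ for the intended user, is precisely the step by which \eqref{EqRateRegion0} collapses to \eqref{EqRateRegion1} in the paper, and your bookkeeping of the feedback-decoding step at the transmitter matches \eqref{EqConditionNoError1}. The one small imprecision is your description of $\rho$ as simultaneously the power of the shared cloud center and a correlation coefficient; in the paper's parameterization $\rho$ is the power allocated to $U$, with $\rho+\lambda_{iC}+\lambda_{iP}=1$, and the induced correlation between $X_1$ and $X_2$ is then also $\rho$, so the two readings coincide but the distinction matters when writing the differential entropies.
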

The proof of Lemma \ref{LemmaGICFBb} is presented in Appendix \ref{AppProofOfLemmaGICFBb}.  
The set of inequalities in Lemma \ref{LemmaGICFBb} can be written in terms of $R_1 = R_{1,C} + R_{1,P}$ and $R_2 = R_{2,C} + R_{2,P}$. This yields the following set of conditions:
\begin{IEEEeqnarray}{l}
\label{EqAchievableRatesGICFB1a}
R_{1,R}  \leqslant  \frac{1}{2} \log\Big( 1 + (1-\rho) \INR_{21} \Big) -\frac{1}{2},\\
\nonumber
R_{1} + R_{1,R}  \leqslant  \frac{1}{2}\log\Big( 1 + (1-\rho) \INR_{21} \Big) \\
\label{EqAchievableRatesGICFB2a}
\qquad + \frac{1}{2}\log\Big( 2 + \frac{\SNR_1}{\INR_{21} }\Big)  - 1,\\
\nonumber
R_{1} + R_{2,R}   \leqslant  \frac{1}{2}\log\Big( 1 + \SNR_{1} + \INR_{12} \\
\label{EqAchievableRatesGICFB3a}
\qquad + 2\rho\sqrt{\SNR_1 \INR_{12} } \Big)  -\frac{1}{2},\\ 
\label{EqAchievableRatesGICFB4a}
R_{2,R}  \leqslant \frac{1}{2}\log\Big( 1 + (1-\rho) \INR_{12} \Big) -\frac{1}{2},\\
\nonumber
R_{2} + R_{2,R}  \leqslant \frac{1}{2}\log\Big( 1 + (1-\rho) \INR_{12} \Big) \\
\label{EqAchievableRatesGICFB5a}
\qquad   + \frac{1}{2}\log\Big( 2 + \frac{\SNR_2}{\INR_{12} }\Big) - 1,\\
\nonumber
R_{2} + R_{1,R}  \leqslant \frac{1}{2}\log\Big( 1 + \SNR_{2} + \INR_{21} \\
\label{EqAchievableRatesGICFB6a}
\qquad   + 2\rho\sqrt{\SNR_2 \INR_{21} } \Big) -\frac{1}{2},\\
\label{EqAchievableRatesGICFB7a}
R_{1}  + R_{2} + R_{2,R}  \leqslant  \frac{1}{2}\log\Big( 2 + \frac{\SNR_2}{\INR_{12} }\Big)\\
\nonumber
\qquad +  \frac{1}{2}\log\Big( 1 + \SNR_{1} + \INR_{12} + 2\rho\sqrt{\SNR_1 \INR_{12} } \Big)  - 1,
\end{IEEEeqnarray}
\begin{IEEEeqnarray}{l}
\label{EqAchievableRatesGICFB8a}
R_{2}    + R_{1} + R_{1,R}  \leqslant \frac{1}{2}\log\Big( 2 + \frac{\SNR_1}{\INR_{21} }\Big)\\
\nonumber
\qquad  + \frac{1}{2}\log\Big( 1 + \SNR_{2} + \INR_{21} + 2\rho\sqrt{\SNR_2 \INR_{21} } \Big)  - 1.
\end{IEEEeqnarray}
It is worth noting that for any rate pair $(R_1,R_2)$ that satisfies \eqref{EqUnderRfb1} - \eqref{EqUnderRfb6}, that is $ \forall (R_1,R_2) \in \underRfb$, there always exist a $R_{1,R} \geqslant 0$ and  a $R_{2,R} \geqslant 0$ such that that  $(R_1,R_{1,R},R_2,R_{2,R})$ satisfies \eqref{EqAchievableRatesGICFB1a} - \eqref{EqAchievableRatesGICFB8a}. 
This implies that any rate pair $(R_1,R_2) \in \underRfb$ is also achievable by the randomized Han-Kobayashi scheme with feedback as long as the rates  $R_{1,R}$ and $R_{2,R}$ are properly chosen.

The following lemma determines the maximum rate improvement that can be achieved by a transmitter that unilaterally deviates from a strategy pair in which both transmitters use the randomized Han-Kobayashi scheme with feedback. The statement of the lemma as well as its proof are analogous to Lemma \ref{LemmaLDICFBc} in the LD-IC-FB case.

\begin{lemma}\label{LemmaGICFBc}
\emph{Let the rate tuple $\bs{R} = (R_{1,C}, R_{1,R},R_{1,P},R_{2,C}, R_{2,R} ,R_{2,P})$ be achievable with the randomized Han-Kobayashi coding scheme with feedback such that $R_1 = R_{1,P} + R_{1,C}$ and $R_2 = R_{2,P} + R_{2,C}$. Then, any unilateral deviation of transmitter-receiver pair $i$ by using any other coding scheme leads to a transmission rate $\tilde{R}_i$ that satisfies
\begin{IEEEeqnarray}{ccl}
\nonumber
\tilde{R}_{i} & \leqslant  & \frac{1}{2}\log\big( 1 + \SNR_{i} + \INR_{ij}  + 2 \sqrt{\SNR_i \INR_{ij} }  \big) \\
& & - (R_{j,C} + R_{j,R}).
\end{IEEEeqnarray}
}
\end{lemma}
\begin{IEEEproof}
From Lemma \ref{LemmaGICFBb}, it is known that for all rate tuples $(R_1,R_2) \in \underRfb$ there always exists a rate tuple $\bs{R} = (R_{1,C}, R_{1,R} ,R_{1,P},R_{2,C} , R_{2,R} ,R_{2,P})$ , with $R_1 = R_{1,P} + R_{1,C}$ and $R_2 = R_{2,P} + R_{2,C}$ that satisfies \eqref{EqAchievableRatesGICFB1} - \eqref{EqAchievableRatesGICFB3}.
Assume that both transmitters achieve the rates $\bs{R}$ by using the Han-Kobayashi scheme following the code construction in \eqref{EqConstructionX_i} and the power allocation \eqref{EqGaussianInputDistribution1} - \eqref{EqGaussianInputDistribution3}, in Appendix \ref{AppProofOfLemmaGICFBb}.

Without loss of generality, let transmitter $1$ change its transmit-receive configuration while the transmitter-receiver pair $2$ remains unchanged.
Let $B$ denote the number of blocks sent by both transmitters after the transmit-receive configuration change of transmitter-receiver pair $1$.  
Note that the new transmit-receive configuration of transmitter-receiver pair $1$ can be arbitrary, i.e., it may or may not use feedback, and it may or may not use any random symbols. It can also use a new block length $\tilde{N}_1 \neq N_1$.  Hence, the total duration of the transmission after the deviation is 
\begin{equation}
T=B\max(\tilde{N}_1, N_2)
\end{equation}
channel uses.
Denote by $W_1^{(b)}$ and $\Omega_{1}^{(b)}$ the message index and the random index of transmitter-receiver pair  $1$ during block $b$ after its deviation, with $b \in \lbrace 1, \ldots, B \rbrace$, $\bs{W}_{1} = \left(W_{1}^{(1)}, \ldots, W_{1}^{(B)} \right)$ and $\bs{\Omega}_{1} = \left(\Omega_{1}^{(1)}, \ldots, \Omega_{1}^{(B)} \right)$. 
Let also $\bs{\tilde{X}}_1^{(b)} = \left( \tilde{X}_{1,1}^{(b)}, \ldots, \tilde{X}_{1,\tilde{N}_1}^{(b)} \right)$ and $\bs{\tilde{Y}}_1^{(b)} = \left( \tilde{Y}_{1,1}^{(b)}, \ldots, \tilde{Y}_{1,\tilde{N}_1}^{(b)} \right)$ be the corresponding vector of outputs of transmitter $1$ and inputs  to receiver $1$ during block $b$. Hence, an upper bound for $\tilde{R}_1$ is obtained from the following inequalities:
\begin{align}
\label{EqConditionTildeR1}
& T \tilde{R}_{1}  \nonumber\\
& = H \left( \bs{W}_1 \right) = H \left( \bs{W}_1 | \bs{\Omega}_1 \right)  \nonumber\\
& = I\left( \bs{W}_1; \bs{\tilde{Y}}_{1}^{(1)}, \ldots,\bs{\tilde{Y}}_{1}^{(B)} |  \bs{\Omega}_1 \right) \nonumber \\
& \quad + H \left( \bs{W}_1 |  \bs{\Omega}_1, \bs{\tilde{Y}}_{1}^{(1)}, \ldots,\bs{\tilde{Y}}_{1}^{(B)} \right)  \nonumber\\
& \stackrel{(a)}{\leqslant} I\left( \bs{W}_1; \bs{\tilde{Y}}_{1}^{(1)}, \ldots,\bs{\tilde{Y}}_{1}^{(B)} |  \bs{\Omega}_1 \right) + T \delta_1(N_1')  \nonumber\\
& =  h\left(\bs{\tilde{Y}}_{1}^{(1)}, \ldots,\bs{\tilde{Y}}_{1}^{(B)} |  \bs{\Omega}_1 \right) - h\left(\bs{\tilde{Y}}_{1}^{(1)}, \ldots,\bs{\tilde{Y}}_{1}^{(B)} |  \bs{W}_{1}, \bs{\Omega}_1 \right) \nonumber\\
& \quad + T \delta_1(N_1') \nonumber \\
&  \stackrel{(b)}{\leqslant}   \frac{T}{2} \log\big( 2 \pi e \big(  1 + \SNR_{1} + \INR_{12} + 2 \sqrt{\SNR_1 \INR_{12} }\big)\big) \nonumber
\\
& \quad - h\left(\bs{\tilde{Y}}_{1}^{(1)}, \ldots,\bs{\tilde{Y}}_{1}^{(B)} |  \bs{W}_{1}, \bs{\Omega}_1 \right) + T \delta_1(N_1'),
\end{align}
where, \newline
$(a)$ follows from Fano's inequality, as the rate $\tilde{R}_{1}$ is achievable by assumption, and thus, the all message indices $W_1^{(1)}, \ldots, W_1^{(B)}$ can be reliably decodable from $\bs{\tilde{Y}}_1^{(1)}, \ldots, \bs{\tilde{Y}}_1^{(B)}$ and $\bs{\Omega}_1$ after the deviation of transmitter-receiver pair $1$; and \newline
$(b)$ follows from the fact that  for all $b \in \lbrace 1, \ldots, B\rbrace$ and for all $m \in  \lbrace 1, \ldots, T \rbrace$, $h(\tilde{Y}_{1,m}^{(b)}|\tilde{Y}_{1,1}^{(1)},\ldots,\tilde{Y}_{1,m-1}^{(b)}, \bs{\Omega}_1 ) \leqslant h(\tilde{Y}_{1,m}^{(b)}) \leqslant \frac{1}{2}\log\big( 2 \pi e \big(1 + \SNR_{1} + \INR_{12} + 2 \sqrt{\SNR_1 \INR_{12} } \big) \big)$.

To refine this upper bound, the term $h\left(\bs{\tilde{Y}}_{1}^{(1)}, \ldots,\bs{\tilde{Y}}_{1}^{(B)} |  \bs{W}_{1}, \bs{\Omega}_1 \right)$ in \eqref{EqConditionTildeR1} can be lower bounded.  Denote by $W_{2,C}^{(b)}$ and $\Omega_{2}^{(b)}$ the common message index and the random index of transmitter-receiver pair $2$ during block $b$ after the deviation of transmitter-receiver pair $1$, with $\bs{W}_{2,C} = \left(W_{2,C}^{(1)}, \ldots, W_{2,C}^{(B)} \right)$ and $\bs{\Omega}_{2,C} = \left(\Omega_{2,C}^{(1)}, \ldots, \Omega_{2,C}^{(B)} \right)$. Hence, the following holds:
\begin{IEEEeqnarray}{l}
\label{EqConditionhY1W1}
T (R_{2,C} + R_{2,R})   =  H(\bs{W}_{2,C}, \bs{\Omega}_2)\\
\nonumber
 \stackrel{(d)}{=}  H(\bs{W}_{2,C}, \bs{\Omega}_2 | \bs{W}_{1}, \bs{\Omega}_1) \\
\nonumber
 =  I(\bs{W}_{2,C}, \bs{\Omega}_2; \bs{\tilde{Y}}_{1}^{(1)}, \ldots, \bs{\tilde{Y}}_{1}^{(B)} | \bs{W}_{1}, \bs{\Omega}_1) \\
\nonumber
  + H\left( \bs{W}_{2,C}, \bs{\Omega}_2| \bs{\tilde{Y}}_{1}^{(1)}, \ldots, \bs{\tilde{Y}}_{1}^{(B)}, \bs{W}_1, \Omega_1 \right)\\
\nonumber
 \stackrel{(e)}{=}   I(\bs{W}_{2,C}, \bs{\Omega}_2; \bs{\tilde{Y}}_{1}^{(1)}, \ldots, \bs{\tilde{Y}}_{1}^{(B)} | \bs{W}_{1}, \bs{\Omega}_1) \\
\nonumber
  + H\Big( \bs{W}_{2,C}, \bs{\Omega}_2| \bs{\tilde{Y}}_{1}^{(1)}, \ldots, \bs{\tilde{Y}}_{1}^{(B)}, \bs{W}_1, \Omega_1,  \bs{\tilde{X}}_{1}^{(1)}, \ldots, \bs{\tilde{X}}_{1}^{(B)} \Big)
  \end{IEEEeqnarray}
\begin{IEEEeqnarray}{l}
\nonumber
 \stackrel{(f)}{=}   I(\bs{W}_{2,C}, \bs{\Omega}_2; \bs{\tilde{Y}}_{1}^{(1)}, \ldots, \bs{\tilde{Y}}_{1}^{(B)} | \bs{W}_{1}, \bs{\Omega}_1) \\
\nonumber
  + H\Big( \bs{W}_{2,C}, \bs{\Omega}_2| \bs{\hat{Y}}_{1}^{(1)}, \ldots, \bs{\hat{Y}}_{1}^{(B)}, \bs{W}_1, \Omega_1    \bs{\tilde{X}}_{1}^{(1)}, \ldots, \bs{\tilde{X}}_{1}^{(B)} \Big)\\
\nonumber
 \stackrel{(g)}{\leqslant}  I(\bs{W}_{2,C}, \bs{\Omega}_2; \bs{\tilde{Y}}_{1}^{(1)}, \ldots, \bs{\tilde{Y}}_{1}^{(B)} | \bs{W}_{1}, \bs{\Omega}_1)    + T \delta(N_2)\\
\nonumber
 \stackrel{(h)}{\leqslant}   h(\bs{\tilde{Y}}_{1}^{(1)}, \ldots, \bs{\tilde{Y}}_{1}^{(B)} | \bs{W}_{1}, \bs{\Omega}_1)  + T \left( \delta(N_2) -  \frac{1}{2}\log_{2}(2 \pi e) \right),
\end{IEEEeqnarray}
where, \newline
$(d)$ follows from the independence of the index messages $W_1^{(b)}$, $\Omega_{1}^{(b)}$, $W_2^{(b)}$, $\Omega_{2}^{(b)}$, for all $b \in \lbrace 1, \ldots, B \rbrace$; \newline
$(e)$ follows from the fact that the output of transmitter $i$ at the $m$-th channel use of block $b$ is a deterministic function of $W_{i}^{(b)}$ and the previous channel outputs $\tilde{Y}_{1}^{(1)}, \ldots, \tilde{Y}_{m-1}^{(b)}$; \newline
$(f)$ follows from the signal construction in \eqref{EqGICsignals}, such that $\tilde{Y}_{1,m}^{(b)} = \hat{Y}_{1,m}^{(b)} + h_{11}\tilde{X}_{1,m}^{(b)}$, with $\hat{Y}_{1,m}^{(b)} = h_{12} X_{2,m}^{(b)} + Z_{1,m}^{(b)}$; \newline
$(g)$ follows from Fano's inequality as the message indices $W_{2,C}^{(1)}, \ldots, W_{2,C}^{(B)}$ and $\Omega_{2}^{(1)}, \ldots, \Omega_{2}^{(B)}$ can be reliably decoded by receiver $1$ using the signals $\bs{\hat{Y}}_{1}^{(1)}, \ldots, \bs{\hat{Y}}_{1}^{(B)}$, and $\bs{\Omega}_1$ as transmitter-receiver pair $2$ did not change its transmit-receive configuration and by assumption of the lemma the rate tuple $\bs{R}$ is achievable; and finally, \newline
$(h)$ follows from the fact that $h(\bs{\tilde{Y}}_{1}^{(1)}, \ldots, \bs{\tilde{Y}}_{1}^{(B)} | \bs{W}_{1}, \bs{\Omega}_1, \bs{W}_{2,C}, \bs{\Omega}_2) > \frac{T}{2} \log_{2}(2 \pi e)$. 

Substituting \eqref{EqConditionhY1W1} into \eqref{EqConditionTildeR1}, it follows that:
\begin{IEEEeqnarray}{lcl}
\label{EqConditionNEgicfb4}
\tilde{R}_{1} & \leqslant &   \frac{1}{2} \log\big( 1 + \SNR_{1} + \INR_{12} + 2 \sqrt{\SNR_1 \INR_{12} } \big)  \\
\nonumber
& & - \left( R_{2,C} + R_{2,R}\right) + \delta_{1}(N'_1) +  \delta_2(N_2).
\end{IEEEeqnarray}
Note that $ \delta_{1}(\tilde{N}_1)$ and $\delta_2(N_2)$ are monotonically decreasing functions of $\tilde{N}_1$ and $N_2$, respectively. Hence, in the asymptotic regime, it follows that:
\begin{IEEEeqnarray}{lcl}
\nonumber
\tilde{R}_{1} & \leqslant &  \frac{1}{2}  \log\big( 1 + \SNR_{1} + \INR_{12} + 2 \sqrt{\SNR_1 \INR_{12} } \big)  \\
& & - \left( R_{2,C} + R_{2,R}\right).
\end{IEEEeqnarray}
The same can be proved for the other transmitter-receiver pair $2$ and this completes the proof.
\end{IEEEproof}

Note that if there exists  an $\eta \geqslant 0$ and a  rate tuple $\bs{R} = (R_{1,C} , R_{1,R} ,R_{1,P},R_{2,C} , R_{2,R} ,R_{2,P})$  achievable with the randomized Han-Kobayashi coding scheme with feedback,  such that $\tilde{R}_{i} - (R_{i,C} + R_{i,P}) < \eta$, then the rate pair $(R_1,R_2)$, with $R_1 = R_{1,C} + R_{1,P}$ and $R_2 = R_{2,C} + R_{2,P}$, is achievable at an $\eta$-NE. The following lemma formalizes this observation.

\begin{lemma}\label{LemmaGICFBd}
\emph{Let $\eta \geqslant 1$  and let the rate tuple $\bs{R} = (R_{1,C} , R_{1,R} ,R_{1,P},R_{2,C} , R_{2,R} ,R_{2,P})$ be achievable with the randomized Han-Kobayashi coding scheme with feedback. Let also $\rho \in [0,1]$ and $\forall i \in \lbrace 1, 2 \rbrace$, 
\begin{IEEEeqnarray}{l}
\label{EqConditionNEgicfb1}
R_{i,P} + R_{i,C}  \geqslant  \frac{1}{2} \log\left( 1 + \frac{\SNR_{i}}{1 +\INR_{ij}}\right) \\
\label{EqConditionNEgicfb2}
R_{i,C}  + R_{i,P} + R_{j,C} + R_{j,R}  \\
\nonumber
\qquad =  \frac{1}{2} \log\big( 1 + \SNR_{i}  + \INR_{ij}  + 2 \rho \sqrt{\SNR_i \INR_{ij} } \big) -\frac{1}{2}  .
\end{IEEEeqnarray}
Then, the rate pair $(R_1,R_2)$, with $R_i = R_{i,C} + R_{i,P}$ is a utility pair achieved at an $\eta$-NE.
}
\end{lemma}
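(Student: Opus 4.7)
The plan is to proceed by contradiction, mirroring the argument of Lemma \ref{LemmaLDICFBd}. Let $(s_1^*, s_2^*) \in \mathcal{A}_1 \times \mathcal{A}_2$ be the transmit-receive configuration pair in which each $s_i^*$ is the randomized Han--Kobayashi scheme with feedback that achieves the rate tuple $\bs{R}$ of the lemma, so that $u_i(s_1^*,s_2^*) = R_{iC} + R_{iP} = R_i$. Suppose, for contradiction, that $(s_1^*,s_2^*)$ is not an $\eta$-NE. By Def.~\ref{DefEtaNE}, there is a player---say player $1$ without loss of generality---and an alternative $s_1 \in \mathcal{A}_1$ with $\tilde{R}_1 := u_1(s_1, s_2^*) > R_1 + \eta$.

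First, I would apply Lemma~\ref{LemmaGICFBc} with $i=1$ to obtain $\tilde{R}_1 \leqslant \log\bigl(1 + \SNR_1 + \INR_{12} + 2\sqrt{\SNR_1 \INR_{12}}\bigr) - (R_{2C} + R_{2R})$. Next I would substitute the equality \eqref{EqConditionNEgicfb2} with $i=1$, which expresses $R_{2C} + R_{2R}$ in closed form in terms of $R_1$ and the Han--Kobayashi correlation parameter $\rho$. After cancellation this yields
\[
\tilde{R}_1 - R_1 \;\leqslant\; 1 + \log\!\left(\frac{1 + \SNR_1 + \INR_{12} + 2\sqrt{\SNR_1 \INR_{12}}}{1 + \SNR_1 + \INR_{12} + 2\rho \sqrt{\SNR_1 \INR_{12}}}\right).
\]
The right-hand side is maximized at $\rho = 0$, and by the AM--GM inequality $2\sqrt{\SNR_1 \INR_{12}} \leqslant \SNR_1 + \INR_{12}$, the logarithm is bounded by $1$. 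Hence $\tilde{R}_1 - R_1 \leqslant 2 \leqslant \log_2(6) \leqslant \eta$, contradicting the assumed improvement. By symmetry the same conclusion holds for a deviation by player $2$.

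To complete the proof I would verify that the NE conditions \eqref{EqConditionNEgicfb1}--\eqref{EqConditionNEgicfb2} are simultaneously compatible with the achievability constraints \eqref{EqAchievableRatesGICFB1}--\eqref{EqAchievableRatesGICFB3} of Lemma~\ref{LemmaGICFBb}. Combining the lower bound \eqref{EqConditionNEgicfb1} with the equality \eqref{EqConditionNEgicfb2} yields a closed-form expression for $R_{jC}+R_{jR}$, and the task reduces to checking that a non-negative random-common rate $R_{jR}$ exists so that this expression, together with the inequalities on $R_{jC}$ and $R_{jP}$ in \eqref{EqAchievableRatesGICFB1}--\eqref{EqAchievableRatesGICFB3}, can be satisfied. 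The main obstacle is precisely this feasibility step, which is where the slack $\log_2(6)$ enters: it must simultaneously absorb the $1$-bit gap in the Han--Kobayashi sum-rate inequality of Lemma~\ref{LemmaGICFBb}, the $\leqslant 2$-bit (AM--GM) gap arising in the deviation analysis above, and the lower-bound constraint from Lemma~\ref{LemmaGICFBa} that forces $R_i \geqslant (L_i - \eta)^+$ at any $\eta$-NE. Once feasibility is verified, $(s_1^*,s_2^*)$ is an $\eta$-NE achieving $(R_1,R_2)$, as required.
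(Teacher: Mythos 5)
Your core argument is correct and follows essentially the same route as the paper: contradiction via Lemma~\ref{LemmaGICFBc}, substitution of the equality~\eqref{EqConditionNEgicfb2}, and a bound on the resulting $\rho$-dependent ratio of logarithms. The one genuine difference is in the final bounding step. The paper writes the deviation gap as $\Delta = 1 + \log_2\bigl(1 + \frac{2(1-\rho)\sqrt{\SNR_1\INR_{12}}}{1+\SNR_1+\INR_{12}+2\rho\sqrt{\SNR_1\INR_{12}}}\bigr)$, then uses $\sqrt{ab}\leqslant\max(a,b)$ in the numerator and lower-bounds the denominator by $\max(\SNR_1,\INR_{12})$ to get $\Delta \leqslant 1 + \log_2(3) = \log_2(6)$. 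You instead set $\rho=0$ in the denominator and apply AM--GM to the numerator, obtaining the slightly sharper bound $\Delta\leqslant 2$. Both bounds suffice since $2 < \log_2(6) \leqslant \eta$, so your route is valid; it even shows the threshold could nominally be reduced to $\eta\geqslant 2$ as far as this particular step is concerned (though $\log(6)$ is also baked into the hypothesis~\eqref{EqConditionNEgicfb1}, which comes from Lemma~\ref{LemmaGICFBa}).

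One point of confusion: the feasibility analysis in your last paragraph is not part of the proof of this lemma. The lemma's hypothesis already stipulates that the rate tuple $\bs{R}$ \emph{is} achievable with the randomized Han--Kobayashi scheme with feedback, so there is nothing to verify here; the only thing to establish is that a unilateral deviation cannot gain more than $\eta$, which your contradiction argument does. The task you describe --- checking that nonnegative $R_{jR}$ exist so that conditions~\eqref{EqConditionNEgicfb1}--\eqref{EqConditionNEgicfb2} are jointly compatible with the achievability constraints~\eqref{EqAchievableRatesGICFB1}--\eqref{EqAchievableRatesGICFB3} --- is precisely the content of Lemma~\ref{LemmaGICFBe}, which the paper handles separately via Fourier--Motzkin elimination. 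Your description of the slack $\log_2(6)$ having to ``simultaneously absorb'' three gaps also muddles the roles: the $\log_2(6)$ in~\eqref{EqConditionNEgicfb1} is the slack from Lemma~\ref{LemmaGICFBa}, and the deviation-gap bound you derive is an independent calculation; they coincide by design of the lemma's statement, not because one must absorb the other.
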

The proof of Lemma \ref{LemmaGICFBd} follows the same steps as in the proof of Lemma \ref{LemmaLDICFBd}.
\begin{IEEEproof}
Condition \eqref{EqConditionNEgicfb1} is a necessary condition for the rate tuple $(R_1,R_2)$ to be an $\eta$-NE, for any value of $\eta \geqslant 1$ (Lemma \ref{LemmaGICFBa}). 
 Let $s_i^* \in \mathcal{A}_i$ be a transmit-receive configuration in which communication takes place using the randomized Han-Kobayashi scheme with feedback and $R_{1,R}$ and $R_{2,R}$ are chosen according to condition \eqref{EqConditionNEgicfb2}, with $i=1$ and $i=2$, respectively. From the assumptions of the lemma such a transmit-receive configuration pair  $(s_1^*,s_2^*)$ is an $\eta$-NE and 
\begin{IEEEeqnarray}{l}
\label{EqMaxUtilityNEa}
 u_i(s_1^*,s_2^*)  = R_i  =  R_{i,C} + R_{i,P} \\
\nonumber
 = \frac{1}{2}  \log\big( 1 + \SNR_{i}  + \INR_{ij}  + 2 \rho \sqrt{\SNR_i \INR_{ij} } \big) \\
 \nonumber
 - (R_{j,C} + R_{j,R}) -\frac{1}{2} ,
\end{IEEEeqnarray}  
where the last equality holds from \eqref{EqConditionNEgicfb2}.
Then, from Def. \ref{DefEtaNE}, it holds that for all $i \in \lbrace 1, 2 \rbrace$ and for all transmit-receive configuration $s_i \neq s_i^* \in \mathcal{A}_i$, the utility improvement is upper bounded by $\eta$, that is,  
\begin{equation}
\label{EqNEdiff}
u_i(s_i,s_j^*)  - u_i(s_i^*,s_j^*) \leqslant \eta. 
\end{equation}
Without loss of generality, let $ i = 1$ be the deviating transmitter-receiver pair and assume it achieves the highest improvement (Lemma \ref{LemmaGICFBc}), that is, 
\begin{IEEEeqnarray}{lcl}
\nonumber
u_1(s_1,s_2^*) & = & \frac{1}{2}  \log\big( 1 + \SNR_{1} + \INR_{12} + 2 \sqrt{\SNR_1 \INR_{12} } \big) \\
\label{EqMaxUtilityNE}
& &  - \left( R_{2,C} + R_{2,R}\right).
\end{IEEEeqnarray}
Hence, replacing \eqref{EqMaxUtilityNEa} and \eqref{EqMaxUtilityNE} in \eqref{EqNEdiff}, it yields 
\begin{IEEEeqnarray}{l}
\label{EqConditionNEinGICFB}
  u_1(s_1,s_2^*) - u_1(s_1^*,s_2^*) \nonumber\\
 =   \frac{1}{2}  \log\big( 1 + \SNR_{1} + \INR_{12} + 2 \sqrt{\SNR_1 \INR_{12} } \big) \nonumber
\\
  - \frac{1}{2}  \log\big( 1 + \SNR_{1} + \INR_{12} + 2 \rho \sqrt{\SNR_1 \INR_{12} } \big) + \frac{1}{2}  \nonumber
\\
 \stackrel{(a)}{\leqslant}  1  \leqslant  \eta,
\end{IEEEeqnarray}
where $(a)$ follows from the fact that $\Delta = \frac{1}{2}  \log\big( 1 + \SNR_{1} + \INR_{12} + 2 \sqrt{\SNR_1 \INR_{12} } \big) - \frac{1}{2}  \log\big( 1 + \SNR_{1} + \INR_{12} + 2 \rho \sqrt{\SNR_1 \INR_{12} } \big) + 1$ satisfies the following inequality:
\begin{IEEEeqnarray}{lcl}
\nonumber
\Delta & = & \frac{1}{2}  \log\left( 1 + \frac{2 (1-\rho) \sqrt{\SNR_1 \INR_{12} } }{1 + \SNR_{1} + \INR_{12} + 2 \rho \sqrt{\SNR_1 \INR_{12} } }\right) \\
\nonumber
&  &  + \frac{1}{2}  \\
\nonumber
&\leqslant & \frac{1}{2}  \log\left( 1 + \frac{2 \sqrt{\SNR_1 \INR_{12} } }{1 + \SNR_{1} + \INR_{12} }\right)  + \frac{1}{2} \\
\nonumber
& \leqslant &  \frac{1}{2}  \log\left( 1 + \frac{\SNR_{1} + \INR_{12}}{1 +\SNR_{1} + \INR_{12} }\right) + \frac{1}{2}  \\
\nonumber
& \leqslant &  \frac{1}{2}  \log\left(2\right) + \frac{1}{2} \\
& = & 1.
\end{IEEEeqnarray}
This verifies that any rate improvement by unilateral deviation of the transmit-receive configuration $(s_1^*,s_2^*)$ is upper bounded by any $\eta$ arbitrarily close to $1$, i.e., $\eta \geqslant 1$.
The same can be proved for the other transmitter-receiver pair and this completes the proof.
\end{IEEEproof}
 
The following lemma shows that all the rate pairs $(R_1,R_2) \in \underRfb \cap \Bgicfb$ can be achieved at an $\eta$-NE, with $\eta \geqslant 1$, by a randomized Han-Kobayashi scheme with feedback, and thus, $\underRfb \cap \Bgicfb \subseteq \Ngicfb$.
\begin{lemma}\label{LemmaGICFBe}\emph{
For all rate pairs $(R_{1}, R_{2}) \in  \underRfb \cap \Bgicfb$, there always exists at least one $\eta$-NE transmit-receive configuration pair $(s_1^*,s_2^*) \in \mathcal{A}_1\times\mathcal{A}_2$, with $\eta \geqslant 1$, such that $u_1(s_1^*,s_2^*) = R_{1}$ and $u_2(s_1^*,s_2^*) = R_{2} $.
}
\end{lemma}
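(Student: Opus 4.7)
The plan is to follow the blueprint of the proof of Lemma \ref{LemmaLDICFBe}, adapted to the Gaussian setting. Given any $(R_1, R_2) \in \underRfb \cap \Bgicfb$, the goal is to exhibit non-negative rate components $(R_{1C}, R_{1R}, R_{1P}, R_{2C}, R_{2R}, R_{2P})$ with $R_i = R_{iC} + R_{iP}$ that simultaneously (i) lie in the achievable region of the randomized Han-Kobayashi scheme with feedback described by \eqref{EqAchievableRatesGICFB1a}--\eqref{EqAchievableRatesGICFB8a}, and (ii) satisfy the equilibrium conditions \eqref{EqConditionNEgicfb1}--\eqref{EqConditionNEgicfb2} of Lemma \ref{LemmaGICFBd}. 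Once such a tuple is produced, Lemma \ref{LemmaGICFBd} yields a randomized Han-Kobayashi configuration pair $(s_1^*, s_2^*)$ that is an $\eta$-NE with $\eta \geqslant \log_2(6)$ and delivers utilities $u_i(s_1^*,s_2^*) = R_i$.

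First, I would fix the power split using the Suh-Tse convention \eqref{EqLambdaip}; this caps $R_{iP}$ at $\log(2 + \SNR_i/\INR_{ji}) - 1$, so I set $R_{iP}$ at this cap (or at $R_i$, whichever is smaller) and put $R_{iC} = R_i - R_{iP}$. The membership $(R_1,R_2) \in \Bgicfb$ ensures $R_i \geqslant (L_i - \log_2(6))^+$, which is precisely condition \eqref{EqConditionNEgicfb1} with $\eta = \log_2(6)$. The random-index rates are then forced by the equality \eqref{EqConditionNEgicfb2}:
\begin{equation}
R_{jR} = \log\!\big(1 + \SNR_i + \INR_{ij} + 2\rho\sqrt{\SNR_i \INR_{ij}}\big) - 1 - R_i - R_{jC}.
\end{equation}
Substituting this choice into the sum-rate achievability inequalities \eqref{EqAchievableRatesGICFB7a}--\eqref{EqAchievableRatesGICFB8a} reproduces exactly the sum-rate bounds \eqref{EqUnderRfb5}--\eqref{EqUnderRfb6} defining $\underRfb$; the single-rate inequalities \eqref{EqAchievableRatesGICFB2a}, \eqref{EqAchievableRatesGICFB5a} collapse to \eqref{EqUnderRfb2}, \eqref{EqUnderRfb4}, and \eqref{EqAchievableRatesGICFB3a}, \eqref{EqAchievableRatesGICFB6a} collapse to \eqref{EqUnderRfb1}, \eqref{EqUnderRfb3}. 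All of these are guaranteed by the hypothesis $(R_1, R_2) \in \underRfb$.

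The main obstacle is verifying feasibility of $R_{jR} \geqslant 0$ together with the stand-alone upper bounds \eqref{EqAchievableRatesGICFB1a} and \eqref{EqAchievableRatesGICFB4a} on $R_{iR}$. Non-negativity amounts to the inequality
\begin{equation}
R_i + R_{jC} \leqslant \log\!\big(1 + \SNR_i + \INR_{ij} + 2\rho\sqrt{\SNR_i \INR_{ij}}\big) - 1,
\end{equation}
which, once $R_{jC} = R_j - R_{jP}$ is substituted with $R_{jP}$ saturated at its cap, follows from the per-user bounds of $\underRfb$. The remaining upper bound $R_{jR} \leqslant \log(1+(1-\rho)\INR_{ij}) - 1$ is handled by choosing the time-sharing parameter $\rho$ to coincide with the $\rho$ that certifies $(R_1, R_2) \in \underRfb$; under that choice, the constraint reduces, after routine cancellation, to an inequality already implied by the sum-rate bounds of $\underRfb$. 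A short case split across the five interference regimes, mirroring the regime-dependent power allocation \eqref{EqLambdaip}, confirms feasibility in every case. With all constraints met and $R_{1R}, R_{2R} \geqslant 0$, Lemma \ref{LemmaGICFBd} produces the desired $\eta$-NE transmit-receive configuration pair, completing the proof.
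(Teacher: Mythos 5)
Your overall strategy matches the paper's: fix a tuple $(R_{iC},R_{iR},R_{iP})$, invoke Lemma~\ref{LemmaGICFBd}, and reduce the problem to a feasibility check. The paper performs this reduction systematically by Fourier--Motzkin elimination of $R_{iC},R_{iP}$ from the achievability constraints \eqref{EqAchievableRatesGICFB1}--\eqref{EqAchievableRatesGICFB3} and the equilibrium conditions \eqref{EqConditionNEgicfb1}--\eqref{EqConditionNEgicfb2}, arriving at the system \eqref{EqNEGICFB1}--\eqref{EqNEGICFB12} in the reduced variables $(R_1,R_2,R_{1R},R_{2R})$, and then declares feasibility ``verified by inspection.'' You instead commit to an explicit split (saturate $R_{iP}$ at its cap, then determine $R_{jR}$ from the equilibrium equality). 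That is a reasonable variant, and your non-negativity check for $R_{jR}$ is correct: with $R_{jP}$ saturated, $R_{jR}\geqslant 0$ indeed collapses to the sum-rate upper bound of $\underRfb$.

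However, your treatment of the other active constraint contains a genuine error. After substituting the forced value
\begin{equation}
R_{jR}=\log\!\big(1+\SNR_i+\INR_{ij}+2\rho\sqrt{\SNR_i\INR_{ij}}\big)-1-R_i-R_{jC}
\end{equation}
into the achievability cap $R_{jR}\leqslant\log(1+(1-\rho)\INR_{ij})-1$, the resulting constraint is a \emph{lower} bound,
\begin{equation}
R_i+R_{jC}\geqslant\log\!\big(1+\SNR_i+\INR_{ij}+2\rho\sqrt{\SNR_i\INR_{ij}}\big)-\log\!\big(1+(1-\rho)\INR_{ij}\big),
\end{equation}
and with $R_{jC}=R_j-R_{jP}$ a \emph{lower} bound on $R_i+R_j$. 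You assert this is ``already implied by the sum-rate bounds of $\underRfb$,'' but the constraints defining $\underRfb$ in \eqref{EqUnderRfb5}--\eqref{EqUnderRfb6} are \emph{upper} bounds on $R_1+R_2$: they point in the opposite direction and cannot imply the inequality you need. This lower-bound family is exactly what shows up in the paper's Fourier--Motzkin output as the $\geqslant$ inequalities \eqref{EqNEGICFB5}--\eqref{EqNEGICFB6}, \eqref{EqNEGICFB9}--\eqref{EqNEGICFB12}; they have to be discharged against the lower-bound structure of $\Bgicfb$ (possibly after also un-fixing your split of $R_j$ between $R_{jC}$ and $R_{jP}$, since your choice of saturating $R_{jP}$ makes $R_{jC}$ as small as possible, which is the wrong direction for this inequality). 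The paper does not supply this verification either, beyond the phrase ``by inspection,'' but your writeup goes further in claiming a specific implication that is false; a correct proof must instead engage directly with those lower-bound conditions, and it is not clear that a regime-by-regime case split of the kind you sketch resolves them for all $(R_1,R_2)\in\underRfb\cap\Bgicfb$ with the fixed split you chose.
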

\begin{proof}
A rate tuple $(R_{1,C}, R_{1,R}, R_{1,P},R_{2,C}, R_{2,R}, R_{2,P})$ that is achievable with the randomized Han-Kobayashi scheme with feedback satisfies the inequalities in \eqref{EqAchievableRatesGICFB1} - \eqref{EqAchievableRatesGICFB3}. Additionally, any rate tuple $(R_{1,C}, R_{1,R}, R_{1,P},R_{2,C}, R_{2,R}, R_{2,P})$ that satisfies \eqref{EqAchievableRatesGICFB1} - \eqref{EqAchievableRatesGICFB3}, \eqref{EqConditionNEgicfb1} and \eqref{EqConditionNEgicfb2} is an $\eta$-NE (Lemma \ref{LemmaGICFBd}). 
A Fourier-Motzkin elimination of inequalities \eqref{EqAchievableRatesGICFB1} - \eqref{EqAchievableRatesGICFB3}, \eqref{EqConditionNEgicfb1} and \eqref{EqConditionNEgicfb2} leads to the following set of inequalities:
\begin{IEEEeqnarray}{l}
\label{EqNEGICFB1}
R_{1,R}  \leqslant \frac{1}{2}  \log\left( 1 + (1-\rho) \INR_{21} \right) - \frac{1}{2}  \\
\label{EqNEGICFB2}
R_{2,R}  \leqslant  \frac{1}{2}  \log\left( 1 + (1-\rho) \INR_{12} \right) -\frac{1}{2}   \\
\label{EqNEGICFB3}
R_{1} + R_{2,R}  \leqslant  \\
\nonumber
\quad  \frac{1}{2}  \log\big( 1 + \SNR_1 + \INR_{12}  + 2 \rho \sqrt{\SNR_1 \INR_{12}} \big) -\frac{1}{2}  \\
\nonumber
R_{1} + R_{1,R}  \leqslant  \frac{1}{2} \log \left(2 + \frac{\SNR_1}{\INR_{21}} \right) \\
\label{EqNEGICFB4}
\quad   +  \frac{1}{2} \log\left( 1 + (1-\rho) \INR_{21} \right) -1 \\
\label{EqNEGICFB5}
R_{1}   \geqslant   \frac{1}{2} \log\left( 1 + \frac{ \SNR_{1}}{1 + \INR_{12}} \right)  \\ 
\nonumber
R_{1} + R_{2,R}  \geqslant  \\
\nonumber
\quad \frac{1}{2} \log\big( 1 + \SNR_1 + \INR_{12} + 2 \rho \sqrt{\SNR_1 \INR_{12}} \big)\\
\label{EqNEGICFB6}
\quad - \frac{1}{2}  \log\left( 1 + (1-\rho) \INR_{12} \right), \\
\nonumber
R_{2} + R_{1,R}  \leqslant  \\
\nonumber
\quad  \frac{1}{2} \log\big( 1 + \SNR_2 + \INR_{21} + 2 \rho \sqrt{\SNR_2 \INR_{21}} \big)\\
\label{EqNEGICFB7}
\quad   -\frac{1}{2},\\
\nonumber
R_{2} + R_{2,R}  \leqslant  \frac{1}{2}  \log \left(2 + \frac{\SNR_2}{\INR_{12}} \right)\\
\label{EqNEGICFB8}
\qquad    + \frac{1}{2} \log\left( 1 + (1-\rho) \INR_{12} \right)  -\frac{1}{2} \\
\label{EqNEGICFB9}
R_{2}   \geqslant   \frac{1}{2} \log\left( 1 + \frac{ \SNR_{2}}{1 + \INR_{21}} \right) \\ 
\nonumber
R_{2} + R_{1,R}  \geqslant \frac{1}{2}   \\
\nonumber
\qquad \log\big( 1 + \SNR_2 + \INR_{21}+ 2 \rho \sqrt{\SNR_2 \INR_{21}} \big)\\
\label{EqNEGICFB10}
\qquad  -\frac{1}{2}  \log\left( 1 + (1-\rho) \INR_{21} \right) \\
\label{EqNEGICFB11}
R_{1} + R_{2} + R_{2,R}  \geqslant   \frac{1}{2}  \log \left(2 + \frac{\SNR_2}{\INR_{12}} \right) \\
\nonumber
\qquad + \frac{1}{2} \log\big( 1 + \SNR_1 + \INR_{12}   + 2 \rho \sqrt{\SNR_1 \INR_{12}} \big) -1 \\
\label{EqNEGICFB12}
R_{1} + R_{2} + R_{1,R}  \geqslant \frac{1}{2}  \log \left(2 + \frac{\SNR_1}{\INR_{21}} \right) \\
\nonumber
\qquad   +\frac{1}{2}  \log\big( 1 + \SNR_2 + \INR_{21}  + 2 \rho \sqrt{\SNR_2 \INR_{21}} \big) -1.
\end{IEEEeqnarray}
Finally, it is verified by inspection that for any rate pair $(R_1, R_2)  \in  \underRfb \cap  \Bgicfb$, there always exists a $R_{1,R} \geqslant 0$ and a $R_{2,R} \geqslant 0$ such that inequalities \eqref{EqNEGICFB1} - \eqref{EqNEGICFB12} are also satisfied. Thus, $\forall (R_{1}, R_{2}) \in  \underRfb \cap \Bgicfb$, the rate pair $(R_{1}, R_{2})$ can be achieved at an $\eta$-NE, with $\eta \geqslant 1$. This completes the proof.
\end{proof}
It is worth highlighting that for all rate pairs $\left( R_1, R_2 \right) \in \underRfb \cap \Bgicfb$, it also holds that $\left( R_1 + 2, R_2  + 2\right) \in \overRfb \cap \Bgicfb$. This is formally stated by the following lemma.
\begin{lemma}[$\eta$-NE region within $1$ bit] \label{LemmaGICFBf}\emph{
Let $\eta \geqslant 1$.
Then, the  $\eta$-NE region $\Ngicfb$ of the Gaussian interference channel with perfect output feedback is approximated to within $1$ bit  by the regions $\underRfb \cap \Bgicfb$ and $\overRfb \cap \Bgicfb$. 
}
\end{lemma}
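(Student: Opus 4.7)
The plan is to obtain the two-bit approximation by composing a sandwich inclusion $\underRfb\cap\Bgicfb \subseteq \Ngicfb \subseteq \overRfb\cap\Bgicfb$ with the two-bit approximation of $\Cgicfb$ already provided by Lemma \ref{LemmaCgicfb}. The conceptual point is that the box $\Bgicfb$, which encodes the $\eta$-NE lower bounds $R_i\geqslant(L_i-\eta)^+$, is \emph{common} to the inner and outer candidate regions; consequently the entire two-bit gap is inherited from Suh and Tse's approximation of the capacity region, and no new information-theoretic estimate has to be produced.

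First I would assemble the sandwich inclusion. The inner containment $\underRfb\cap\Bgicfb\subseteq\Ngicfb$ is precisely the content of Lemma \ref{LemmaGICFBe}. For the outer containment, any $(R_1,R_2)\in\Ngicfb$ is achievable by Definition \ref{DefNERegion}, hence lies in $\Cgicfb\subseteq\overRfb$ by Lemma \ref{LemmaCgicfb}; moreover, Lemma \ref{LemmaGICFBa} (already used to establish \eqref{EqPartI}) guarantees $R_i\geqslant(L_i-\eta)^+$, so the pair also lies in $\Bgicfb$. This gives $\Ngicfb\subseteq\overRfb\cap\Bgicfb$.

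Next, to verify the condition of Definition \ref{DefApproximationInNbis} I would pick an arbitrary $(R_1,R_2)\in\overRfb\cap\Bgicfb$ and show that the shrunken pair $((R_1-2)^+,(R_2-2)^+)$ belongs to $\underRfb\cap\Bgicfb$. The $\underRfb$ half is an immediate application of Lemma \ref{LemmaCgicfb} to the point $(R_1,R_2)\in\overRfb$. The $\Bgicfb$ half is where the only non-routine step arises and is the main obstacle: if $L_i>\eta$ and $R_i$ lies in the thin boundary slab $[(L_i-\eta)^+,(L_i-\eta)^++2)$, naive shrinking by $2$ pushes the coordinate below $(L_i-\eta)^+$. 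I would resolve this by a coordinate-wise projection back up to $(L_i-\eta)^+$; the projected point is within $2$ bits of the original, satisfies the $\Bgicfb$ lower bound by construction, and still belongs to $\underRfb$ because the achievable region is downward-closed per coordinate (any smaller positive rate is achievable by simply sending fewer information bits) and because the treat-interference-as-noise corner $(L_1,L_2)$ lies inside $\underRfb$, making the entire dominated rectangle available by time-sharing. The case $L_i\leqslant\eta$ is vacuous since $(L_i-\eta)^+=0$. Combining the two inclusions with this controlled shrinking yields the two-bit approximation and completes the proof of the lemma.
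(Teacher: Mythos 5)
Your proposal spots a real subtlety that the paper's own proof of Lemma~\ref{LemmaGICFBf} glosses over. The paper simply invokes Theorem~4 of \cite{Suh-TIT-2011} to conclude $(R_1-2,R_2-2)\in\underRfb$ and then asserts, without any justification, that the shrunk point is also in $\Bgicfb$; but if $L_i>\eta$ and $R_i=(L_i-\eta)^+$, the shrunk coordinate falls below $(L_i-\eta)^+$ and leaves $\Bgicfb$, exactly as you note. So on this point your write-up is actually more careful than the paper. Your assembly of the sandwich $\underRfb\cap\Bgicfb\subseteq\Ngicfb\subseteq\overRfb\cap\Bgicfb$ from Lemmas~\ref{LemmaGICFBe}, \ref{LemmaGICFBa} and \ref{LemmaCgicfb} is fine, and the $\underRfb$ half of the shrinking step is indeed immediate from Lemma~\ref{LemmaCgicfb}.

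The gap is in your repair of the $\Bgicfb$ half. After raising a clipped coordinate back up to $(L_i-\eta)^+$, you claim the projected point stays in $\underRfb$ by appealing to downward-closedness together with the assertion that $(L_1,L_2)\in\underRfb$ and the rectangle it dominates. This does not go through: the projected point need not be dominated by $(L_1,L_2)$. If, say, $R_2\in[(L_2-\eta)^+,(L_2-\eta)^++2)$ forces a projection of the second coordinate while $R_1$ is large, the first coordinate $(R_1-2)^+$ of the projected point can exceed $L_1$; the point then lies neither in $[0,L_1]\times[0,L_2]$ nor below the already-verified shrunk pair $((R_1-2)^+,(R_2-2)^+)$ (it strictly dominates the latter in coordinate $2$), so neither downward-closedness argument applies, and the sum-rate constraints of $\underRfb$ are not automatically satisfied. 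Moreover, the assertion $(L_1,L_2)\in\underRfb$ is itself unverified and is not obvious given the $-1$ and $-2$ offsets in \eqref{EqUnderRfb1}--\eqref{EqUnderRfb6}. To close the argument one would need a direct per-constraint check that $((R_1-2)^+,(L_2-\eta)^+)$ satisfies \eqref{EqUnderRfb1}--\eqref{EqUnderRfb6} for a suitable $\rho$, using both the $\overRfb$ bounds on $(R_1,R_2)$ and the slack $\eta\geqslant\log_2(6)$ in $(L_2-\eta)^+\leqslant L_2-\log_2(6)$; your sketch does not carry out this verification.
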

\begin{proof}
From Theorem $4$ in \cite{Suh-TIT-2011}, it follows that $\forall (R_1,R_2) \in \overRfb$ it holds that $(R_1 - 1,R_2 -1) \in \underRfb$. Hence, since $\underRfb \cap \Bgicfb \subseteq \underRfb$ and $\overRfb \cap \Bgicfb \subseteq \overRfb$, it follows that $\forall (R_1,R_2) \in \overRfb\cap \Bgicfb$ it holds that $(R_1 - 1,R_2 - 1) \in \underRfb \cap \Bgicfb$. This completes the proof.
\end{proof}

\section{Conclusions}\label{SecConclusions}

In this paper, the $\eta$-NE region of the two-user decentralized linear deterministic interference channel with feedback has been fully characterized for $\eta \geqslant 0$ arbitrarily small. Using this initial result, the $\eta$-NE region of the two-user decentralized Gaussian interference channel with feedback has been approximated to within $1$ bit for $\eta \geqslant 1$. 
The main conclusions of this work are twofold: $(i)$ The $\eta$-NE region achieved with feedback is larger than or equal to the NE region without feedback. $(ii)$ The use of feedback allows the achievability at an $\eta$-NE of rate pairs that are at a maximum gap of $1$-bit per channel use of the outer bounds of the capacity region even when the network is fully decentralized. 

Some interesting questions that remain unsolved and become directions for future work are the following: 
$(i)$  Theorem \ref{TheoremMainResultLDICFB} and Theorem \ref{TheoremMainResultGICFB} do not provide any insight into the uniqueness of the $\eta$-NE strategy pair $(s_1, s_2)$ that generates the rate pair $(R_1(s_1,s_2),R_2(s_1,s_2))$. Indeed, it is possible that another equilibrium transmit-receive configuration pair $(s_1',s_2')$ generates the same rate pair, i.e.,  $(R_1(s_1,s_2),R_2(s_1,s_2)) = (R_1(s_1',s_2'),R_2(s'_1,s_2'))$. Theorem \ref{TheoremMainResultLDICFB} and Theorem \ref{TheoremMainResultGICFB} do not characterize the set of NE transmit-receive configurations but the set of rate pairs observed at any of the possible $\eta$-NE of the decentralized GIC with feedback. The characterization of all the transmit-receive configuration pairs $(s_1,s_2)$ that are $\eta$-NEs is still an open problem.
$(ii)$ An interesting question that remains open is how to achieve an NE in the case in which players do not know the set of strategies of all the other players and only local information is available. The intuition obtained from the results presented here leads to imply that other tools different from those brought from information theory and game theory would be needed to solve these questions. 
Finally, $(iii)$ despite recent results in the centralized case \cite{Quintero-ITW-2015, SyQuoc-TIT-2013, Quintero-TR-2015} and decentralized case \cite{Perlaza-ICSSP-2014, Perlaza-ISIT-2014}, it is worth mentioning that very little can be said about the benefits of feedback in the NE region of interference channel when the feedback links are impaired by noise or when the number of transmitter-receiver pairs is larger than two.

\appendices
\section{Proof of Lemma \ref{LemmaSumMaxNE}}\label{AppProofOfLemmaSumMaxNE}

 \begin{figure}[t]
 \centerline{\epsfig{figure=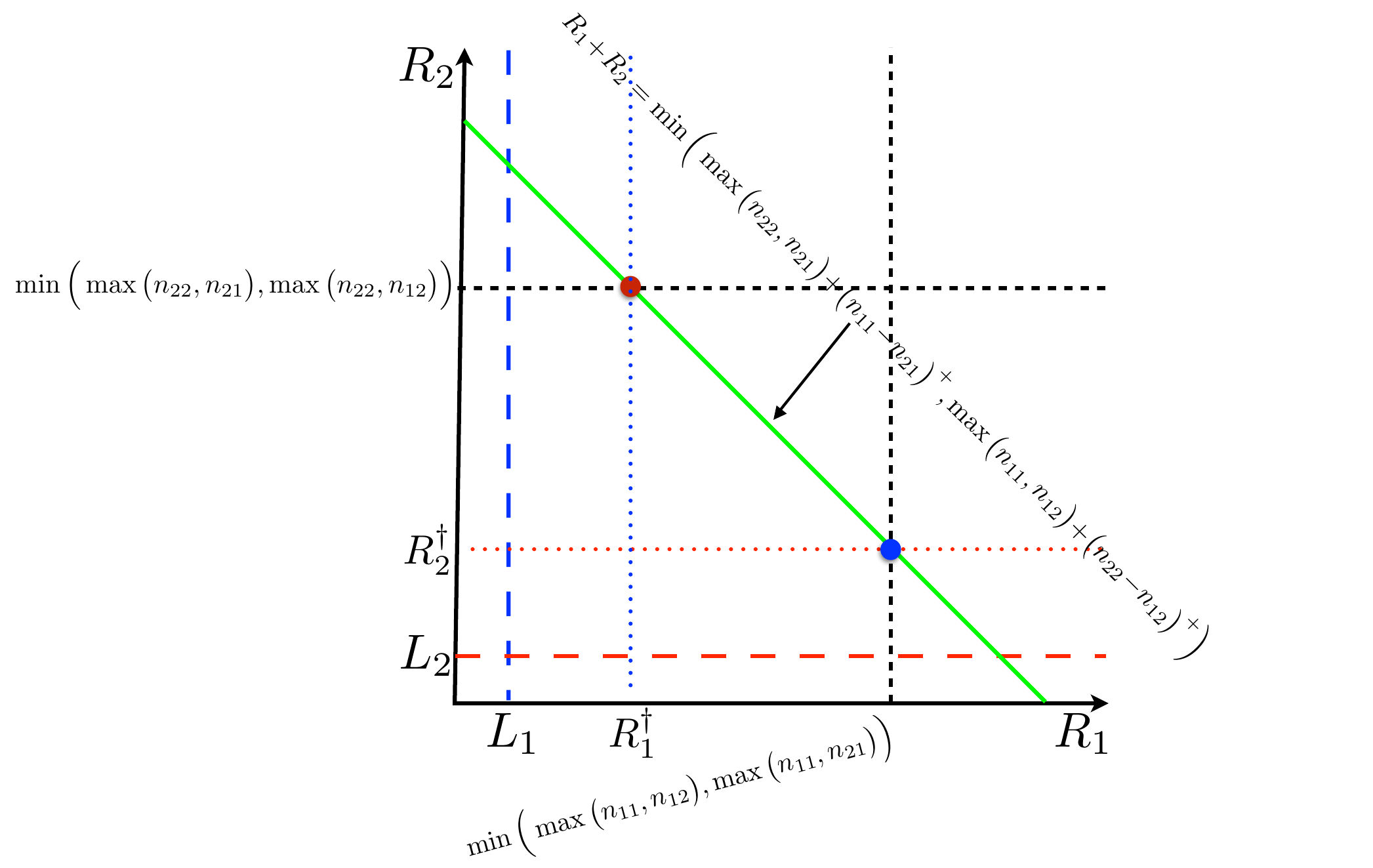,width=.4\textwidth}}
  \caption{Ilustration of the rates $R_{1}^{\dagger}$ and $R_{2}^{\dagger}$.}
  \label{FigProofSumRateNE}
\end{figure}

This appendix provides a proof of Lemma \ref{LemmaSumMaxNE}. 
The proof relies on the fact that the capacity region $\Cldicfb$ and the $\eta$-NE region $\Nldicfb$ are defined by the intersection of at most $5$ half-planes:  one upper bound and one lower bound (positive rates) for each individual rate, $R_1$ and $R_2$; and one upper-bound on the sum  $R_1+R_2$. The intersection of these half-planes defines a polygon that can be a pentagon, a quadrilateral or a triangle as shown in Fig. \ref{FigProofSumRateNE}.  The upper-bound $R_i \leqslant U_i$ is never strictly active. In particular, note that the hyperplane that defines the upper-bound on $R_1+R_2$ satisfies the following equalities:
\begin{IEEEeqnarray}{lcl}
\label{EqHyperplane3}
R_1+R_2&  =  \min\Big( &  \max\big(n_{22}, n_{21}\big) + \big(n_{11} - n_{21}\big)^+ , \\
\nonumber
& & \max\big(n_{11}, n_{12}\big) + \big(n_{22} - n_{12}\big)^+ \Big),\\
\nonumber
& = \min\Big( &  \max\big(n_{22}, n_{21}\big) + \big(n_{11} - n_{21}\big)^+ , \\
\nonumber
& & \max\big(n_{22}, n_{12}\big) + \big(n_{11} - n_{12}\big)^+ \Big)\\
\nonumber
& = \min\Big( & \max\big(n_{11}, n_{21}\big) + \big(n_{22} - n_{21}\big)^+ , \\
\nonumber
& & \max\big(n_{11}, n_{12}\big) + \big(n_{22} - n_{12}\big)^+ \Big).
\end{IEEEeqnarray}
Denote by $R_1^{\dagger}$ (resp. $R_2^{\dagger}$), the maximum rate at which transmitter $1$ (resp. transmitter $2$) can transmit information to receiver $1$ (resp. receiver $2$) when transmitter $2$ (resp. transmitter $1$) transmits information at a rate $R_2 = \min\Big( \max\big( n_{22}, n_{21} \big), \max\big( n_{22}, n_{12}\big) \Big)$ (resp. $R_1 = \min\Big( \max\big( n_{11}, n_{12} \big), \max\big( n_{11}, n_{21}\big) \Big)$); see Fig. \ref{FigProofSumRateNE}.

Hence, all the sum-optimal rate pairs of $\Cldicfb$ are included in the $\eta$-NE region $\Nldicfb$ if $L_1 \leqslant R_{1}^{\dagger}$ and $L_2 \leqslant R_{2}^{\dagger}$, where, 
\begin{IEEEeqnarray}{lcl}
R_{i}^{\dagger} & = & \min\Big( \max(n_{jj}, n_{ij}) + \left(n_{ii} - n_{ij} \right)^+ ,  \max(n_{jj},n_{ji})  \\
\nonumber
& & + \left(n_{ii} - n_{ji}\right)^+  \Big) - \min\big( \max(n_{jj}, n_{ji}),\max(n_{jj}, n_{ij})   \big).
\end{IEEEeqnarray}
Hence, $L_i \leqslant R_{i}^{\dagger}$ if and only if
\begin{IEEEeqnarray}{lcl}
\nonumber
 \left(n_{ii} - n_{ij} \right)^+ -  \left(n_{ii} - n_{ji}\right)^+  & \leqslant & \Big( (n_{ji} - n_{jj}) - (n_{ij} - n_{jj}) \Big)^+,\\
\end{IEEEeqnarray}
which completes the proof.

\section{Proof of Lemma \ref{LemmaLDICFBb} }\label{AppProofOfLemmaLDICFBb}
This appendix provides a description of the randomized Han-Kobayashi scheme with feedback used in Sec. \ref{SecLDICFB} and provides a proof of Lemma \ref{LemmaLDICFBb}.

\textbf{Codebook Generation:} 
Fix a joint probability distribution $p(U, U_1,U_2, X_1, X_2) = p(U)p(U_1|U)p(U_2 | U)p(X_1|U,U_1)p(X_2|U,U_2)$.
Generate $2^{N (R_{1,C} +R_{1,R} + R_{2,C} +R_{2,R})}$ i.i.d. $N$-length codewords $\bs{u}(s, r) = \left(u_{1}(s,r), \ldots, u_{N}(s,r)\right)$ according to $p(\bs{u}(s,r)) = \ds\prod_{m =1}^N p(u_{m}(s,r))$, with $s \in \lbrace 1, \ldots, 2^{N (R_{1,C} +R_{1,R} )}\rbrace$ and $r \in \lbrace 1, \ldots, 2^{N ( R_{2,C} +R_{2,R} )}\rbrace$. 

\noindent
For encoder $1$, for each codeword $\bs{u}(s,r)$,  generate  $2^{N (R_{1,C} +R_{1,R})}$ i.i.d. $N$-length codewords $\bs{u}_1(s,r,k) = \left(u_{1,1}(s,r,k), \ldots, u_{1,N}(s,r,k)\right)$  according to $p(\bs{u}_1(s,r,k)|\bs{u}(s,r)) = \ds\prod_{m =1}^N p(u_{1,m}(s,r,k)|u_{m}(s,r))$, with $k \in \lbrace 1, \ldots, 2^{N (R_{1,C} +R_{1,R} )}\rbrace$.  For each pair of codewords $(\bs{u}(s,r),\bs{u}_{1}(s,r,k))$, generate  $2^{N(R_{1,P})}$ i.i.d. $N$-length codewords $\bs{x}_1(s,r,k,l) = \left(x_{1,1}(s,r,k,l), \ldots, x_{1,N}(s,r,k,l)\right)$ according to $p\left(\bs{x}_1(s,r,k,l)|\bs{u}(s,r),\bs{u}_{1}(s,r,k)\right) = \ds\prod_{m =1}^N p(x_{1,m}(s,r,k,l)|u_{m}(s,r),u_{1,m}(s,r,k))$, with $l \in \lbrace 1, \ldots, 2^{N R_{1,P}}\rbrace$. 

For encoder $2$, for each codeword $\bs{u}(s,r)$,  generate $2^{N (R_{2,C} +R_{2,R})}$ i.i.d. $N$-length codewords $\bs{u}_2(s,r,q) = \left(u_{2,1}(s,r,q), \ldots, u_{2,N}(s,r,q)\right)$  according to $p(\bs{u}_2(s,r,q)|\bs{u}(s,r)) = \ds\prod_{m =1}^N p(u_{2,m}(s,r,q)|u_{m}(s,r))$, with $q \in \lbrace 1, \ldots, 2^{N(R_{2,C} +R_{2,R} )}\rbrace$.  For each pair of codewords $(\bs{u}(s,r),\bs{u}_{2}(s,r,q))$, generate  $2^{N(R_{2,P})}$ i.i.d. $N$-length codewords $\bs{x}_2(s,r,q,z) = \left(x_{2,1}(s,r,q,z), \ldots, x_{2,N}(s,r,q,z)\right)$ according to $p(\bs{x}_{2}(s,r,q,z)|\bs{u}(s,r),\bs{u}_{2}(s,r,q)) = \ds\prod_{m =1}^N p(x_{2,m}(s,r,q,z)|u_{m}(s,r),u_{2,m}(s,r,q))$, with $z \in \lbrace 1, \ldots, 2^{NR_{2,P}}\rbrace$.\\
\textbf{Encoding}: 
Denote by $W_{i}^{(t)} \in \lbrace 1, \ldots, 2^{N (R_{i,C} + R_{i,P})}\rbrace$ and $\Omega_{i}^{(t)} \in \lbrace 1, \ldots, 2^{N R_{i,R} } \rbrace$ the message index and the random message index of transmitter $i$ during block $t$, respectively.  
Let $W_{i}^{(t)} = (W_{i,C}^{(t)},W_{i,P}^{(t)})$ be the message index composed of the common message index $W_{i,C}^{(t)}\in \lbrace 1, \ldots, 2^{N R_{i,C}} \rbrace$ and private message $W_{i,P}^{(t)}\in \lbrace 1, \ldots, 2^{N R_{i,P}} \rbrace$. 
Let also $ \left(W_{i,C}^{(t)},\Omega_{i}^{(t)} \right) \in \lbrace 1, \ldots, 2^{N (R_{i,C}  + R_{i,R})} \rbrace$ be the (joint) index of the common message and common random message sent by transmitter $i$ during block $t$, respectively.  
Consider Markov encoding with a length of $T$ blocks. 
At block $t \in \lbrace 1, \ldots, T \rbrace$, transmitter $1$ sends the codeword $\bs{x}_1^{(t)} = \bs{x}_1\left( (W_{1,C}^{(t-1)},\Omega_{1}^{(t-1)}) (W_{2,C}^{(t-1)},\Omega_{2}^{(t-1)}), (W_{1,C}^{(t)},\Omega_{1}^{(t)}), W_{1,P}^{(t)}\right)$, where $W_{1,C}^{(0)} = s^*$, $\Omega_{1}^{(0)} = \omega_{1}^*$, $W_{2,C}^{(0)} = r^*$, $\Omega_{2}^{(0)} = \omega_{2}^*$, $W_{1,C}^{(T)} = k^*$, $\Omega_{1}^{(T)} = \omega_{1}^*$, $W_{2,C}^{(T)} = l^*$ and $\Omega_{2}^{(T)} = \omega_{2}^*$. The $4$-tuple $(s^*,k^*,\omega_{1}^*,r^*,l^*,\omega_{2}^*) \in  \lbrace 1, \ldots, 2^{N R_{1,C}} \rbrace^2 \times \lbrace 1, \ldots, 2^{N R_{1,R}} \rbrace \times \lbrace 1, \ldots, 2^{N R_{2,C}} \rbrace^2 \times \lbrace 1, \ldots, 2^{N R_{2,R}} \rbrace$  is predefined and known at both receivers and transmitters. 
It is worth noting that the message index $W_{2,C}^{(t-1)}$ is obtained by transmitter $1$ via the feedback of $\bs{y}_{1}^{(t-1)}$ at the end of block $t-1$. That is, for $t > 1$, $(W_{2,C}^{(t-1)}, \Omega_{2}^{(t-1)}) = (\hat{W}_{2,C}^{(t-1)},\hat{\Omega}_{2}^{(t-1)}) \in \lbrace 1, \ldots, 2^{n(R_{2c} + R_{2r})}\rbrace$, which are the only indices that satisfy
\begin{IEEEeqnarray}{ll}
\label{EqDecodingW2r}
\Bigg(& \bs{u}\Big((W_{1,C}^{(t-2)},\Omega_{1}^{(t-2)}) , (W_{2,C}^{(t-2)},\Omega_{2}^{(t-2)})\Big), \\
\nonumber
& \bs{u}_1  \Big((W_{1,C}^{(t-2)},\Omega_{1}^{(t-2)}) , (W_{2,C}^{(t-2)},\Omega_{2}^{(t-2)}), (W_{1,C}^{(t-1)},\Omega_{1}^{(t-1)})\Big),  \\
\nonumber 
& \bs{x}_1 \Big((W_{1,C}^{(t-2)},\Omega_{1}^{(t-2)}) , (W_{2,C}^{(t-2)},\Omega_{2}^{(t-2)}), (W_{1,C}^{(t-1)},\Omega_{1}^{(t-1)}),\\
\nonumber
& W_{1,P}^{(t-1)}\Big), \bs{u}_2\Big((W_{1,C}^{(t-2)},\Omega_{1}^{(t-2)}),  (W_{2,C}^{(t-2)},\Omega_{2}^{(t-2)}),  \\
\nonumber
&(\hat{W}_{2,C}^{(t-1)}, \hat{\Omega}_{2}^{(t-1)}) \Big),
\bs{y}_1^{(t-1)}  \Bigg) \in   \mathcal{A}_{e}^{(N)},
\end{IEEEeqnarray}
where $\mathcal{A}_{e}^{(n)}$ represents the set of jointly typical sequences under the assumption that $W_{2,C}^{(t-2)}$ and $\Omega_{2}^{(t-2)}$ have been decoded without errors at transmitter $1$ during the previous block. Transmitter $1$ knows $W_{1,C}^{(t-2)}$, $\Omega_{1}^{(t-2)}$, $W_{2,C}^{(t-2)}$, $\Omega_{2}^{(t-2)}$, $W_{1,C}^{(t-1)}$ and $\Omega_{1}^{(t-1)}$. 
Moreover, the random message indices $\Omega_{1}^{(1)},\ldots, \Omega_{1}^{(T)}$ are known by transmitter and receiver $1$ and thus, it does not represent an effective transmission of information. Conversely, indices $\Omega_{2}^{(2)},\ldots, \Omega_{2}^{(T-1)}$ are ignored and thus, must be estimated.
 Transmitter $2$ follows a similar encoding scheme.\\
\textbf{Decoding}: Both receivers decode their messages at the end of block $T$ in a backward decoding fashion. At each step $t \in \lbrace 1, \ldots, T \rbrace$, receiver $1$ estimates the message indices $(\hat{W}_{1,C}^{(T-t)},\Omega_{1}^{(T-t)}, \hat{W}_{2,C}^{(T-t)},\hat{\Omega}_{2}^{(T-t)}, \hat{W}_{1,P}^{(T-t)}) \in \lbrace 1, \ldots, 2^{N R_{1,C}}\rbrace \times \lbrace 1, \ldots, 2^{N R_{1,R}}\rbrace \times \lbrace 1, \ldots, 2^{N R_{2,C}}\rbrace \times \lbrace 1, \ldots, 2^{N R_{2,R}}\rbrace \times \lbrace 1, \ldots, 2^{N R_{1,P}}\rbrace$. The tuple $(\hat{W}_{1,C}^{(T-t)},\Omega_{1}^{(T-t)}, \hat{W}_{2,C}^{(T-t)},\hat{\Omega}_{2}^{(T-t)}, \hat{W}_{1,P}^{(T-t)})$ is the unique tuple that satisfies
\begin{IEEEeqnarray}{lll}
\label{EqDecodingW1cW2cW1p}
\Bigg(& \bs{u}\Big( & (\hat{W}_{1,C}^{(T-t)}, \Omega_{1}^{(T-t)}) , (\hat{W}_{2,C}^{(T-t)},\hat{\Omega}_{2}^{(T-t)})\Big), \\
\nonumber
& \bs{u}_1\Big(& (\hat{W}_{1,C}^{(T-t)}, \Omega_{1}^{(T-t)}) , (\hat{W}_{2,C}^{(T-t)},\hat{\Omega}_{2}^{(T-t)}), \\
\nonumber
& &(W_{1,C}^{(T-(t-1))}, \Omega_{1}^{(T-(t-1))})\Big),  \\
\nonumber 
& \bs{x}_1\Big(& (\hat{W}_{1,C}^{(T-t)},\Omega_{1}^{(T-t)}) , (\hat{W}_{2,C}^{(T-t)},\hat{\Omega}_{2}^{(T-t)}),  \\
\nonumber
& & (W_{1,C}^{(T-(t-1))},\Omega_{1}^{(T-(t-1))}), \hat{W}_{1,P}^{(T-t)}\Big),  \\
\nonumber
&\bs{u}_2\Big(& (\hat{W}_{1,C}^{(T-t)},\Omega_{1}^{(T-t)}), (\hat{W}_{2,C}^{(T-t)},\hat{\Omega}_{2}^{(T-t)}),  \\
\nonumber
& & (W_{2,C}^{(T-(t-1))}, \Omega_{2}^{(T-(t-1))}) \Big), \bs{y}_1^{(T-t)}  \Bigg)  \in \mathcal{A}_{e}^{(N)},
\end{IEEEeqnarray}
where $\mathcal{A}_{e}^{(n)}$ represents the set of jointly typical sequences.
Receiver $2$ follows a similar decoding scheme.\\
\textbf{Probability of Error Analysis:} An error might occur during the coding phase at the beginning of block $t$ if the common message index $W_{j,C}^{(t-1)}$ is not correctly decoded at transmitter $i$. For instance, this error might occur at transmitter $1$ because: $(i)$ there does not exist an index $\hat{r} \in \lbrace 1, \ldots, 2^{N (R_{2,C} + R_{2,R} )}\rbrace$ that satisfies \eqref{EqDecodingW2r}, or $(ii)$ several indices simultaneously satisfy \eqref{EqDecodingW2r}. From the asymptotic equipartion property (AEP) \cite{Cover-Book-1991}, the probability of error due to $(i)$ tends to zero when $N$ grows to infinity. The probability of error due to $(ii)$ can be made arbitrarily close to zero when $N$ grows to infinity, if  
\begin{eqnarray}\label{EqConditionNoError1}
R_{i,C}+R_{i,R} & \leqslant &  I\left( U_i ; Y_j | X_j, U \right).
\end{eqnarray} 

An error might occur during the (backward) decoding step $t$ if the messages $W_{1,C}^{(T-t)}$, $W_{2,C}^{(T-t)}$ and $W_{1,P}^{(T-t)}$ are not decoded correctly given that the message indices $W_{1,C}^{(T-(t-1))}$ and $W_{2,C}^{(T-(t-1))}$ were correctly decoded in the previous decoding step $t-1$. 
These errors might arise for two reasons: $(i)$ there does not exist a tuple $(\hat{W}_{1,C}^{(T-t)},\Omega_{1}^{(T-t)}, \hat{W}_{2,C}^{(T-t)},\hat{\Omega}_{2}^{(T-t)}, \hat{W}_{1,P}^{(T-t)})$ that satisfies \eqref{EqDecodingW1cW2cW1p}, or $(ii)$ there exist several tuples $(\hat{W}_{1,C}^{(T-t)},\Omega_{1}^{(T-t)}, \hat{W}_{2,C}^{(T-t)},\hat{\Omega}_{2}^{(T-t)}, \hat{W}_{1,P}^{(T-t)})$ that simultaneously satisfy \eqref{EqDecodingW1cW2cW1p}. From the AEP, the probability of an error at receiver $1$ due to $(i)$ tends to zero when $N$ tends to infinity. Consider the error due to $(ii)$ and define the following event during the decoding of block $t$, 
\begin{IEEEeqnarray}{llll}
\nonumber
E_{(s, r, k)}^{(t)} = & \Big\lbrace \Big( & 
\bs{u}\Big( & (\hat{W}_{1,C}^{(T-t)},\Omega_{1}^{(T-t)}), (\hat{W}_{2,C}^{(T-t)},\hat{\Omega}_{2}^{(T-t)})\Big),\\
\nonumber 
& & \bs{u}_1\Big( & (\hat{W}_{1,C}^{(T-t)},\Omega_{1}^{(T-t)}), (\hat{W}_{2,C}^{(T-t)},\hat{\Omega}_{2}^{(T-t)}), \\
\nonumber
& & &(W_{1,C}^{(T-(t-1))},\Omega_{1}^{(T-(t-1))}) \Big),\\
\nonumber
& & \bs{x}_1\Big(& (\hat{W}_{1,C}^{(T-t)},\Omega_{1}^{(T-t)}), (\hat{W}_{2,C}^{(T-t)},\hat{\Omega}_{2}^{(T-t)}), \\
\nonumber
& & &(W_{1,C}^{(T-(t-1))},\Omega_{1}^{(T-(t-1))}), \hat{W}_{i,P}^{(T-t)}\Big), \\
\nonumber
& & \bs{u}_2\Big( & (\hat{W}_{1,C}^{(T-t)},\Omega_{1}^{(T-t)}), (\hat{W}_{2,C}^{(T-t)},\hat{\Omega}_{2}^{(T-t)}), \\
\nonumber
& & & (W_{2,C}^{(T-(t-1))},\Omega_{2}^{(T-(t-1))}) \Big), \bs{y}_1^{(T-t)} \Big) \\
& & \in & \mathcal{A}_{e}^{(n)}.
\Big\rbrace,
\end{IEEEeqnarray}
where $s = (\hat{W}_{1,C}^{(T-t)},\Omega_{1}^{(T-t)})$, $r = (\hat{W}_{2,C}^{(T-t)},\hat{\Omega}_{2}^{(T-t)})$ and $k = \hat{W}_{i,P}^{(T-t)}$, with $(s,r,k) \in \lbrace 1, \ldots, 2^{N(R_{1,C} + R_{1,R} )} \rbrace \times \lbrace 1, \ldots, 2^{N(R_{2,C} + R_{2,R} )} \rbrace \times \lbrace 1, \ldots, 2^{N R_{1,P}} \rbrace$.
Assume also that at step $t$ the indices $(s, r, k)$ are $(1,1,1)$ without loss of generality, due to the symmetry of the code.
Then, the probability of error due to $(ii)$ during step $t$, $p_e$, can be  bounded as follows:
\begin{IEEEeqnarray}{lcl}
\nonumber
p_e  & = &\pr{\ds\bigcup_{(s,r,k) \neq (1,1,1)} E_{(s, r, k)}^{(t)} }\\
\nonumber
& \leqslant & \ds\sum_{\scriptscriptstyle s\neq1, r \neq 1, k \neq 1} \pr{ E_{(s, r, k)}^{(t)} } + \ds\sum_{\scriptscriptstyle s = 1, r \neq 1, k \neq 1} \pr{ E_{(s, r, k)}^{(t)} } \\
\nonumber
&  & + \ds\sum_{\scriptscriptstyle s\neq1, r = 1, k \neq 1} \pr{ E_{(s, r, k)}^{(t)} } + \ds\sum_{\scriptscriptstyle s \neq 1, r \neq 1, k = 1} \pr{ E_{(s, r, k)}^{(t)} } \\
\nonumber
&  & + \ds\sum_{\scriptscriptstyle s = 1, r = 1, k \neq 1} \pr{ E_{(s, r, k)}^{(t)} } + \ds\sum_{\scriptscriptstyle s \neq 1, r = 1, k = 1} \pr{ E_{(s, r, k)}^{(t)} }\\
&  & + \ds\sum_{\scriptscriptstyle s =1, r \neq 1, k = 1} \pr{ E_{(s, r, k)}^{(t)} }
\end{IEEEeqnarray}
\begin{IEEEeqnarray}{lcl}
\nonumber
& \leqslant & 2^{N ( R_{1,C} + R_{1,R} + R_{2,C} + R_{2,R} + R_{1,P} - I(U, U_2, X_1; Y_1) + 4\epsilon) }\\
\nonumber
 & & +2^{N ( R_{2,C} + R_{2,R} + R_{1,P}- I(U, U_2, X_1; Y_1) + 4\epsilon) }\\
 \nonumber
 & & +2^{N ( R_{1,C} + R_{1,R} + R_{1,P} - I(U, U_2, X_1; Y_1) + 4\epsilon) }\\
 \nonumber
 & & + 2^{N ( R_{1,C} + R_{1,R} + R_{2,C} + R_{2,R}  - I(U, U_2, X_1; Y_1) + 4\epsilon) }\\
 \nonumber
 & & +2^{N (R_{1,P}  - I( X_1;Y_1 |U, U_1 , U_2) + 4\epsilon) }\\
 \nonumber
 & & +2^{N ( R_{1,C} + R_{1,R} - I(U, U_2, X_1; Y_1) + 4\epsilon) }\\
\label{EqConditionNoError2}
 & & +2^{N ( R_{2,C} + R_{2,R} - I(U, U_2, X_1; Y_1) + 4\epsilon) }.
\end{IEEEeqnarray}
Now, from \eqref{EqConditionNoError1} and \eqref{EqConditionNoError2}, given that $I\left( U_i ; Y_j | X_j, U \right) < I\left(U, U_i, X_j; Y_j\right)$, the probability of error due to $(ii)$ can be made arbitrarily small if the following conditions hold:
\begin{equation} \label{EqRateRegion0}
\left\lbrace  
\begin{array}{rl}
R_{2,C} + R_{2,R}  \leqslant & I \left( U_2 ; Y_1 | X_1, U \right)\\
R_{1,P}  \leqslant & I \left( X_1 ; Y_1 | U, U_1, U_2 \right)\\
R_{1,C} +R_{1,R} + R_{1,P} \\
+ R_{2,C} + R_{2,R} \leqslant & I \left( U ,U_2 , X_1 ; Y_1 \right).
\end{array}
\right.
\end{equation}
The common random message index $\Omega_{i}^{(t)}$ is assumed to be known at both transmitter $i$ and receiver $i$. Therefore, the set of inequalities in \eqref{EqRateRegion0} reduces to the following:
\begin{equation} \label{EqRateRegion1}
\left\lbrace  
\begin{array}{rl}
R_{2,C} + R_{2,R}  \leqslant & I \left( U_2 ; Y_1 | X_1, U \right)\\
R_{1,P}  \leqslant & I \left( X_1 ; Y_1 | U, U_1, U_2 \right)\\
R_{1,C} + R_{1,P} + R_{2,C} + R_{2,R} \leqslant & I \left( U ,U_2 , X_1 ; Y_1 \right).
\end{array}
\right.
\end{equation}
The same analysis is carried out for transmitter $2$ and thus,
\begin{equation}\label{EqRateRegion2}
\left\lbrace  
\begin{array}{rl}
R_{1,C} + R_{1,R}   \leqslant & I \left( U_1 ; Y_2 | X_2, U \right)\\
R_{2,P} \leqslant & I \left( X_2 ; Y_2 | U, U_1, U_2 \right)\\
R_{2,C} + R_{2,P} + R_{1,C} + R_{1,R} \leqslant & I\left( U ,U_1 , X_2 ; Y_2 \right).
\end{array}
\right.
\end{equation}
From the probability of error analysis, it follows that the rate-pairs achievable with the proposed randomized coding scheme with feedback are those simultaneously satisfying conditions \eqref{EqRateRegion1} and \eqref{EqRateRegion2}. Indeed, when $R_{1,R} = R_{2,R} = 0$, the coding scheme described above reduces to the coding scheme presented in \cite{Suh-TIT-2011} and the achievable region corresponds to the entire capacity region of $\Cldicfb$.
In terms of the linear deterministic model, it follows that $\forall i \in \lbrace 1, 2 \rbrace$,
\begin{equation} 
\left\lbrace 
\begin{array}{lcl} 
R_{i,C} + R_{i,R}   & \leqslant & n_{ji}\\
R_{i,P}  & \leqslant & \left(n_{ii} - n_{ji} \right)^+ \\
R_{i,C} + R_{i,P} + R_{j,C} + R_{j,R} & \leqslant & \max(n_{ii},n_{ij}),
\end{array}
\right. 
\end{equation}
which completes the proof of Lemma \ref{LemmaLDICFBb}.
\section{Proof of Lemma \ref{LemmaGICFBb}} \label{AppProofOfLemmaGICFBb}

The code generation presented in Appendix \ref{AppProofOfLemmaLDICFBb} is general and thus, it applies for both the LD-IC-FB model and the GIC-FB model. Therefore, from the analysis of the error probability, the rate tuples achievable in the Gaussian interference channel by the randomized Han-Kobayashi scheme with feedback satisfy  the inequalities in \eqref{EqRateRegion1} and \eqref{EqRateRegion2}. That is,
\begin{equation} 
\left\lbrace 
\begin{array}{lcl} 
R_{i,C} + R_{i,R}   & \leqslant & I\left( U_{i};Y_{j}| X_{j}, U\right)\\
R_{j,P}  & \leqslant & I\left(X_{i};Y_{i}| U, U_{1},U_{2} \right)\\
R_{i,C} + R_{i,P} + R_{j,C} + R_{j,R} & \leqslant & I\left( U,U_{j},X_{i};Y_{i}\right).
\end{array}
\right. 
\end{equation}
Suppose that transmitter $i$ uses the Gaussian input distribution 
\begin{IEEEeqnarray}{rcl}
\label{EqGaussianInputDistribution1}
U &\sim &\mathcal{N}(0,\rho),\\
\label{EqGaussianInputDistribution2}
U_i & \sim & \mathcal{N}(0,\lambda_{i,C})  \mbox{ and } \\
\label{EqGaussianInputDistribution3}
X_{i,P} &\sim & \mathcal{N}(0,\lambda_{i,P}),
\end{IEEEeqnarray}
with $\rho + \lambda_{i,C} + \lambda_{i,P} = 1$ and let $\lambda_{i,P}$ be determined by \eqref{EqLambdaip}.
Let also 
\begin{equation}
\label{EqConstructionX_i}
X_i = U + U_{i} + X_{i,P},
\end{equation}
 and assume that $U$, $U_{1}$, $U_{2}$, and $X_{1,P}$ are mutually independent. 
Then the following holds:
\begin{IEEEeqnarray}{lcl}
\label{EqHKgicfb4}
I(U,U_j,X_i;Y_i) & = & h(Y_i) - h(Y_i|U,U_j,X_i)\\
\nonumber
&= & \frac{1}{2}\log\Big( 1 + \SNR_{i} + \INR_{ij} \\
\nonumber
&   &+ 2 \rho \sqrt{\SNR_{i} \INR_{ij}} \Big)\\
\nonumber
&  &- \frac{1}{2} \log\Big( 1 + \lambda_{j,P}\INR_{ij} \Big).   
\end{IEEEeqnarray}

\noindent 
The term $I(U_i;Y_j | U,X_j)$ satisfies the condition
\begin{IEEEeqnarray}{lcl}
\label{EqHKgicfb5}
I(U_i;Y_j | U,X_j) & = & h(Y_j  | U,X_j ) - h(Y_j | U,U_i,X_j)\\
\nonumber
& = & \frac{1}{2} \log\Big( 1 + (1-\rho)\INR_{ji}  \Big) \\
\nonumber
&  &- \frac{1}{2} \log\Big( 1 + \lambda_{i,P}\INR_{ji}  \Big).
\end{IEEEeqnarray}
Similarly, the term $I(X_i;Y_i | U,U_1U_2)$ satisfies the condition
\begin{IEEEeqnarray}{lcl}
\nonumber
I(X_i;Y_i | U,U_1,U_2) & = & h(Y_i  | U,U_1,U_2 ) - h(Y_i | U,U_1,U_2,X_i)\\
\nonumber
& = & \frac{1}{2} \log\Big( \lambda_{i,P} \SNR_{i} + \lambda_{j,P} \INR_{ij} + 1 \Big)\\
\label{EqHKgicfb6}
&  & - \frac{1}{2} \log\Big( 1 + \lambda_{j,P} \INR_{ij} \Big).
 \end{IEEEeqnarray}
Finally it follows from \eqref{EqHKgicfb4}, \eqref{EqHKgicfb5} and \eqref{EqHKgicfb6} that $\forall i \in \lbrace 1, 2 \rbrace$,
\begin{IEEEeqnarray}{l} 
\nonumber
R_{i,C} + R_{i,R} \leqslant  \frac{1}{2} \log\big(1+ (1-\rho)\INR_{ji} \big) \\
\qquad - \frac{1}{2} \log\big(1+ \lambda_{i,P}\INR_{ji}, \big), \\
\nonumber
R_{i,P}   \leqslant \frac{1}{2} \log\Big( \lambda_{i,P} \SNR_{i} + \lambda_{j,P} \INR_{ij}   + 1 \Big)\\
\qquad  - \frac{1}{2} \log\Big( 1 + \lambda_{j,P} \INR_{ij} \Big), \mbox{ and }\\
\nonumber
R_{i,C} + R_{i,P} + R_{j,C} + R_{j,R}  \leqslant  \frac{1}{2} \log\Big( 1 + \SNR_{i} + \INR_{ij} \\
\qquad   + 2 \rho \sqrt{\SNR_{i} \INR_{ij}} \Big) - \frac{1}{2} \log\Big( 1 + \lambda_{j,P}\INR_{ij} \Big). 
\end{IEEEeqnarray}
In the specific scenario in which $\INR_{ji} \geqslant 1$, the inequalities above can be written as follows:
\begin{IEEEeqnarray}{cl} 
\nonumber
R_{i,C} + R_{i,R}& \leqslant \frac{1}{2} \log\big(1+ (1-\rho)\INR_{ji} \big)  - \frac{1}{2}\\
\nonumber
R_{i,P}  & \leqslant \frac{1}{2} \log\Big( 2 + \frac{\SNR_{i}}{\INR_{ji}} \Big) - \frac{1}{2}\\
\nonumber
R_{i,C} + R_{i,P} + R_{j,C} + R_{j,R} & \leqslant \frac{1}{2} \log\Big(1 + \SNR_{i} + \INR_{ij} \\
\nonumber
& + 2 \rho \sqrt{\SNR_{i} \INR_{ij}} \Big) -\frac{1}{2},
\end{IEEEeqnarray}
and this concludes the proof.
\balance
\section*{Acknowledgment}
The authors would like to thank the anonymous reviewers for some helpful suggestions that substantially improved this paper. 
\bibliographystyle{IEEEtran}
\bibliography{IT-GT}

\begin{thebibliography}{10}
\providecommand{\url}[1]{#1}
\csname url@samestyle\endcsname
\providecommand{\newblock}{\relax}
\providecommand{\bibinfo}[2]{#2}
\providecommand{\BIBentrySTDinterwordspacing}{\spaceskip=0pt\relax}
\providecommand{\BIBentryALTinterwordstretchfactor}{4}
\providecommand{\BIBentryALTinterwordspacing}{\spaceskip=\fontdimen2\font plus
\BIBentryALTinterwordstretchfactor\fontdimen3\font minus
  \fontdimen4\font\relax}
\providecommand{\BIBforeignlanguage}[2]{{%
\expandafter\ifx\csname l@#1\endcsname\relax
\typeout{** WARNING: IEEEtran.bst: No hyphenation pattern has been}%
\typeout{** loaded for the language `#1'. Using the pattern for}%
\typeout{** the default language instead.}%
\else
\language=\csname l@#1\endcsname
\fi
#2}}
\providecommand{\BIBdecl}{\relax}
\BIBdecl

\bibitem{Shannon-IRETIT-1956}
C.~E. Shannon, ``The zero-error capacity of a noisy channel,'' \emph{IRE
  Transactions on Information Theory}, vol.~2, no.~3, pp. 8--19, Sep. 1956.

\bibitem{Kadota-TIT-1971}
T.~T. Kadota, M.~Zakai, and J.~Ziv, ``Capacity of a continuous memoryless
  channel with feedback,'' \emph{IEEE Transactions on Information Theory},
  vol.~17, no.~4, pp. 372--378, Jul. 1971.

\bibitem{Dobrushin-TPA-1958}
R.~L. Dobrushin, ``Information transmission in a channel with feedback,''
  \emph{Theory of Probability and Applications}, vol.~3, no.~4, pp. 367--383,
  Dec. 1958.

\bibitem{Cover-TIT-1989}
T.~M. Cover and S.~Pombra, ``Gaussian feedback capacity,'' \emph{IEEE
  Transactions on Information Theory}, vol.~35, no.~1, pp. 37--43, Jan. 1989.

\bibitem{Butman-TIT-1969}
S.~Butman, ``A general formulation of linear feedback communication systems
  with solutions,'' \emph{IEEE Transactions on Information Theory}, vol.~15,
  no.~3, pp. 392--400, May 1969.

\bibitem{Kim-TIT-2006}
Y.-H. Kim, ``Feedback capacity of the first-order moving average {G}aussian
  channel,'' \emph{IEEE Transactions on Information Theory}, vol.~52, no.~7,
  pp. 3063--3079, Jul. 2006.

\bibitem{Cover-TIT-1981}
T.~M. Cover and C.~S.~K. Leung, ``An achievable rate region for the
  multiple-access channel with feedback,'' \emph{IEEE Transactions on
  Information Theory}, vol.~27, no.~3, pp. 292--298, May 1981.

\bibitem{Gaarder-TIT-1975}
N.~T. Gaarder and J.~K. Wolf, ``The capacity region of a mulitple-access
  discrete memoryless channel can increase with feedback,'' \emph{IEEE
  Transactions on Information Theory}, vol.~21, no.~1, pp. 100--102, Jan. 1975.

\bibitem{Ozarow-TIT-1984}
L.~H. Ozarow, ``The capacity of the white {G}aussian multiple access channel
  with feedback,'' \emph{IEEE Transactions on Information Theory}, vol.~30,
  no.~4, pp. 623--629, Jul. 1984.

\bibitem{Thomas-TIT-1987}
J.~A. Thomas, ``Feedback can at most double {G}aussian multiple access channel
  capacity,'' \emph{IEEE Transactions on Information Theory}, vol.~33, no.~5,
  pp. 711--716, Sep. 1987.

\bibitem{Kramer-TIT-2003}
G.~Kramer, ``Capacity results for the discrete memoryless network,'' \emph{IEEE
  Transactions on Information Theory}, vol.~49, no.~1, pp. 4--21, Jan. 2003.

\bibitem{Dueck-IC-1980}
G.~Dueck, ``Partial feedback for two-way and broadcast channels,''
  \emph{Information and Control}, vol.~46, no.~1, pp. 1--15, Jul. 1980.

\bibitem{Shyevitz-TIT-2013}
O.~Shayevitz and M.~Wigger, ``On the capacity of the discrete memoryless
  broadcast channel with feedback,'' \emph{IEEE Transactions on Information
  Theory}, vol.~59, no.~3, pp. 1329--1345, Mar. 2013.

\bibitem{El-Gamal-TIT-1978}
A.~El-Gamal, ``The feedback capacity of degraded broadcast channels,''
  \emph{IEEE Transactions on Information Theory}, vol.~24, no.~3, pp. 379--381,
  May 1978.

\bibitem{Suh-TIT-2011}
C.~Suh and D.~N.~C. Tse, ``Feedback capacity of the {G}aussian interference
  channel to within 2 bits,'' \emph{IEEE Transactions on Information Theory},
  vol.~57, no.~5, pp. 2667--2685, May 2011.

\bibitem{Tandon-TIT-2013}
R.~Tandon, S.~Mohajer, and H.~V. Poor, ``On the symmetric feedback capacity of
  the {K}-user cyclic {Z}-interference channel,'' \emph{IEEE Transactions on
  Information Theory}, vol.~59, no.~5, pp. 2713--2734, May 2013.

\bibitem{Mohaher-TIT-2013}
S.~Mohajer, R.~Tandon, and H.~V. Poor, ``On the feedback capacity of the fully
  connected {$K$}-user interference channel,'' \emph{IEEE Transactions on
  Information Theory}, vol.~59, no.~5, pp. 2863--2881, May 2013.

\bibitem{Quintero-TR-2015}
V.~Quintero, S.~M. Perlaza, and J.-M. Gorce, ``Noisy channel-output feedback
  capacity of the linear deterministic interference channel,'' INRIA, Tech.
  Rep. 456, Jan. 2015.

\bibitem{SyQuoc-TIT-2013}
S.-Q. Le, R.~Tandon, M.~Motani, and H.~V. Poor, ``Approximate capacity region
  for the symmetric {G}aussian interference channel with noisy feedback,''
  \emph{IEEE Transactions on Information Theory}, vol.~61, no.~7, pp. 3737 --
  3762, Jul. 2015.

\bibitem{Quintero-ITW-2015}
V.~Quintero, S.~M. Perlaza, and J.-M. Gorce, ``Noisy channel-output feedback
  capacity of the linear deterministic interference channel,'' in \emph{IEEE
  Information Theory Workshop}, Jeju Island, Korea, Oct. 2015.

\bibitem{Nash-PNAS-1950}
J.~F. Nash, ``Equilibrium points in n-person games,'' \emph{Proc. National
  Academy of Sciences of the United States of America}, vol.~36, no.~1, pp.
  48--49, Jan. 1950.

\bibitem{Nisan-Book-2007}
N.~Nisan, T.~Roughgarden, \'{E}va Tardos, and V.~V. Vazirani, \emph{Algorithmic
  Game Theory}.\hskip 1em plus 0.5em minus 0.4em\relax Cambridge, UK: Cambridge
  University Press, 2007.

\bibitem{Aumann-JME-1974}
R.~J. Aumann, ``Subjectivity and correlation in randomized strategies,''
  \emph{Journal of Mathematical Economics}, vol.~1, no.~1, pp. 67--96, Mar.
  1974.

\bibitem{Perlaza-JSTSP-2012}
S.~M. Perlaza, H.~Tembine, S.~Lasaulce, and M.~Debbah, ``{Quality-Of-Service}
  provisioning in decentralized networks: {A} satisfaction equilibrium
  approach,'' \emph{IEEE Journal of Selected Topics in Signal Processing},
  vol.~6, no.~2, pp. 104--116, Apr. 2012.

\bibitem{Berry-TIT-2011}
R.~A. Berry and D.~N.~C. Tse, ``Shannon meets {Nash} on the interference
  channel,'' \emph{IEEE Transactions on Information Theory}, vol.~57, no.~5,
  pp. 2821--2836, May 2011.

\bibitem{Chung-ISIT-2003}
S.~T. Chung, S.~J. Kim, J.~Lee, and J.~M. Cioffi, ``A game-theoretic approach
  to power allocation in frequency-selective {Gaussian} interference
  channels,'' in \emph{Proc. IEEE International Symposium on Information Theory
  (ISIT)}, Yokohama, Japan, Jul. 2003, p. 316.

\bibitem{Rose-ICC-2011}
L.~Rose, S.~M. Perlaza, and M.~Debbah, ``On the {Nash} equilibria in
  decentralized parallel interference channels,'' in \emph{Proc. IEEE Intl.
  Conference on Communications (ICC)}, Kyoto, Japan, Jun. 2011.

\bibitem{Yu-ISIT-2000}
W.~Yu and J.~M. Cioffi, ``Competitive equilibrium in the {Gaussian}
  interference channel,'' in \emph{Proc. IEEE International Symposium on
  Information Theory (ISIT)}, Sorrento, Italy, Jun. 2000, p. 431.

\bibitem{Perlaza-ALLERTON-2012}
S.~M. Perlaza, R.~Tandon, H.~V. Poor, and Z.~Han, ``The {Nash} equilibrium
  region of the linear deterministic interference channel with feedback,'' in
  \emph{Proc. 50th Annual Allerton Conference on Communications, Control, and
  Computing}, Monticello, IL, Oct. 2012.

\bibitem{Avestimehr-ALLERTON-2007}
S.~Avestimehr, S.~Diggavi, and D.~N.~C. Tse, ``Wireless network information
  flow,'' in \emph{Proc. 45th Annual Allerton Conference on Communications,
  Control and Computing}, Monticello, IL, Sep. 2007.

\bibitem{Avestimehr-TIT-2011}
------, ``Wireless network information flow: {A} deterministic approach,''
  \emph{IEEE Transactions on Information Theory}, vol.~57, no.~4, pp.
  1872--1905, Apr. 2011.

\bibitem{Bresler-ETT-2008}
G.~Bresler and D.~N.~C. Tse, ``The two user {G}aussian interference channel: A
  deterministic view,'' \emph{European Transactions on Telecommunications},
  vol.~19, no.~4, pp. 333--354, Apr. 2008.

\bibitem{ElGamal-TIT-1982}
A.~El~Gamal and M.~Costa, ``The capacity region of a class of deterministic
  interference channels,'' \emph{IEEE Transactions on Information Theory},
  vol.~28, no.~2, pp. 343--346, Mar. 1982.

\bibitem{Perlaza-JWCN-2012}
S.~M. Perlaza, S.~Lasaulce, and M.~Debbah, ``Equilibria of channel selection
  games in parallel multiple access channels,'' \emph{EURASIP Journal in
  Wireless Communications Networks}, vol. 2013, no.~15, pp. 1--41, Jan. 2013.

\bibitem{Braess-U-1969}
D.~Braess, ``\"{U}ber ein {P}aradoxon aus der {V}erkehrsplanung,''
  \emph{Unternehmensforschung}, vol.~24, no.~5, pp. 258 -- 268, May 1969.

\bibitem{Shang-TIT-2009}
X.~Shang, G.~Kramer, and B.~Chen, ``A new outer bound and the
  noisy-interference sum-rate capacity for {G}aussian interference channels,''
  \emph{IEEE Transactions on Information Theory}, vol.~55, no.~2, pp. 689--699,
  Feb. 2009.

\bibitem{Motahari-TIT-2009}
A.~S. Motahari and A.~K. Khandani, ``Capacity bounds for the {G}aussian
  interference channel,'' \emph{IEEE Transactions on Information Theory},
  vol.~55, no.~2, pp. 620--643, Feb. 2009.

\bibitem{Annapureddy-TIT-2009}
V.~S. Annapureddy and V.~V. Veeravalli, ``{G}aussian interference networks:
  {S}um-capacity in the low-interference regime and new outer bounds on the
  capacity region,'' \emph{IEEE Transactions on Information Theory}, vol.~55,
  no.~7, pp. 3032--3050, Jul. 2009.

\bibitem{Han-TIT-1981}
K.~Kobayashi, ``A new achievable rate region for the interference channel,''
  \emph{IEEE Transactions on Information Theory}, vol.~27, no.~1, pp. 49--60,
  1981.

\bibitem{Carleial-TIT-1978}
A.~Carleial, ``Interference channels,'' \emph{IEEE Transactions on Information
  Theory}, vol.~24, no.~1, pp. 60--70, Sep. 1978.

\bibitem{Etkin-TIT-2008}
R.~H. Etkin, D.~N.~C. Tse, and W.~Hua, ``Gaussian interference channel capacity
  to within one bit,'' \emph{IEEE Transactions on Information Theory}, vol.~54,
  no.~12, pp. 5534--5562, Dec. 2008.

\bibitem{Perlaza-ICSSP-2014}
S.~M. Perlaza, R.~Tandon, and H.~V. Poor, ``Even noisy feedback achieves
  {P}areto optimal equilibria in decentralized interference channels,'' in
  \emph{Proc. 6th International Symposium on Communications, Control, and
  Signal Processing (ISCCSP 2014)}, Athens, Greece, May 2014.

\bibitem{Perlaza-ISIT-2014}
------, ``Symmetric decentralized interference channels with noisy feedback,''
  in \emph{Proc. IEEE Intl. Symposium on Information Theory (ISIT)}, Honolulu,
  HI, USA, Jun. 2014.

\bibitem{Cover-Book-1991}
T.~M. Cover and J.~A. Thomas, \emph{Elements of Information Theory}.\hskip 1em
  plus 0.5em minus 0.4em\relax Hoboken, NJ, USA: Wiley-Interscience, 1991.

\end{thebibliography}

\begin{IEEEbiographynophoto}{Samir M. Perlaza} (S'07-M'11-SM'15) is a research scientist with the Institut National de Recherche en Informatique et en Automatique (INRIA), France, and a visiting research collaborator at the School of Applied Science at Princeton University (NJ, USA). He received the M.Sc. and Ph.D. degrees from Ecole Nationale Sup\'{e}rieure des T\'{e}l\'{e}communications (Telecom ParisTech), Paris, France, in 2008 and 2011, respectively. 
Previously, from 2008 to 2011, he was a Research Engineer at France T\'{e}l\'{e}com - Orange Labs (Paris, France). He has held long-term academic appointments at the Alcatel-Lucent Chair in Flexible Radio at Sup\'{e}lec (Gif-sur-Yvette, France); at Princeton University (Princeton, NJ) and at the University of Houston (Houston, TX). His research interests lie in the overlap of signal processing, information theory, game theory and wireless communications. 

Dr. Perlaza has been distinguished by the European Commission with an Al$\beta$an Fellowship in 2006 and a Marie Sk\l{}odowska-Curie Fellowship in 2015. He was also one of the recipients of the the Best Student Paper Award at International Conference on Cognitive Radio Oriented Wireless Networks CROWNCOM in 2009.
\end{IEEEbiographynophoto}

\begin{IEEEbiographynophoto}{Ravi Tandon}(S'03-M'09) received the B.Tech. degree in electrical engineering from IIT Kanpur, Kanpur, India, in 2004, and the Ph.D. degree in electrical and computer engineering from the University of Maryland, College Park, MD, USA, in 2010. From 2010 to 2012, Dr. Tandon was a Post-Doctoral Research Associate with Princeton University, Princeton, NJ, USA. Since 2012, he has been with the Virginia Polytechnic Institute and State University, Blacksburg, VA, USA, where he is currently a Research
Assistant Professor with the Discovery Analytics Center and the Department of Computer Science.  From Fall 2015, he will be joining the Electrical and Computer Engineering Department at the University of Arizona as an Assistant Professor.

His research interests are in the areas of network information theory for wireless networks, information theoretic security, machine learning, and cloud storage systems. He is a co-recipient of the Best Paper Award at the Communication Theory Symposium at the 2011 IEEE Global Communications Conference.
\end{IEEEbiographynophoto}

\begin{IEEEbiographynophoto}{H. Vincent Poor} (S'72-M'77-SM'82-F'87) received the Ph.D. degree in electrical engineering and computer science from Princeton University in 1977.  From 1977 until 1990, he was on the faculty of the University of Illinois at Urbana-Champaign. Since 1990 he has been on the faculty at Princeton, where he is the Dean of Engineering and Applied Science, and the Michael Henry Strater University Professor of Electrical Engineering. He has also held visiting appointments at several other institutions, most recently at Imperial College and Stanford. Dr. Poor's research interests are in the areas of information theory, stochastic analysis and statistical signal processing, and their applications in wireless networks and related fields. Among his publications in these areas is the recent book \emph{Mechanisms and Games for Dynamic Spectrum Allocation} (Cambridge University Press, 2014).

Dr. Poor is a member of the National Academy of Engineering and the National Academy of Sciences, and is a foreign member of Academia Europaea and the Royal Society. He is also a fellow of the American Academy of Arts and Sciences, the Royal Academy of Engineering (U. K.), and the Royal Society of Edinburgh.  In 1990, he served as President of the IEEE Information Theory Society, in 2004-07 as the Editor-in-Chief of these TRANSACTIONS, and in 2009 as General Co-chair of the IEEE International Symposium on Information Theory, held in Seoul, South Korea. He received a Guggenheim Fellowship in 2002 and the IEEE Education Medal in 2005. Recent recognition of his work includes the 2014 URSI Booker Gold Medal, and honorary doctorates from several universities in Asia and Europe.

\end{IEEEbiographynophoto}

\begin{IEEEbiographynophoto}{Zhu Han} (S'01ÐM'04-SM'09-F'14) received the B.S. degree in electronic engineering from Tsinghua University, in 1997, and the M.S. and Ph.D. degrees in electrical engineering from the University of Maryland, College Park, in 1999 and 2003, respectively. 

From 2000 to 2002, he was an R\&D Engineer of JDSU, Germantown, Maryland. From 2003 to 2006, he was a Research Associate at the University of Maryland. From 2006 to 2008, he was an assistant professor in Boise State University, Idaho. Currently, he is a Professor in Electrical and Computer Engineering Department as well as Computer Science Department at the University of Houston, Texas. His research interests include wireless resource allocation and management, wireless communications and networking, game theory, wireless multimedia, security, and smart grid communication. Dr. Han is coauthor of 2015 EURASIP Best Paper Award for the Journal on Advances in Signal Processing.  Dr. Han is the winner of IEEE Fred W. Ellersick Prize 2011. Dr. Han is an NSF CAREER award recipient 2010. Dr. Han is IEEE Distinguished lecturer since 2015. 
\end{IEEEbiographynophoto}
\balance
\end{document}